\documentclass[12pt]{article}
\usepackage{amsmath,amsfonts, amssymb, color,ulem}
\usepackage{pdfsync}

\textheight 237 true mm
\topmargin -15 true mm
\textwidth  169 true mm
\oddsidemargin -2 true mm
\evensidemargin -2 true mm

\numberwithin{equation}{section}

% theorem/proposition/etc.
\newtheorem{theorem}{Theorem}[section]
\newtheorem{definition}[theorem]{Definition}
\newtheorem{proposition}[theorem]{Proposition}
\newtheorem{corollary}[theorem]{Corollary}
\newtheorem{lemma}[theorem]{Lemma}

\newtheorem{assump}[theorem]{Assumption}

\newtheorem{coro}{\sc Corollary}[section]

% proof environment
\newenvironment{proof}{\addvspace{\medskipamount}\par\noindent{\it Proof}.}
{\unskip\nobreak\hfill$\Box$\par\addvspace{\medskipamount}}

\newcommand{\p}{\mathbb{P}}
\newcommand{\Q}{\mathbb{Q}}
\newcommand{\cE}{{\mathcal E}}
\newcommand{\E}{\mathbb{E}}
\newcommand{\dpm}{{\prime\hspace{-0.03cm}\prime}}

\newcommand{\cF}{{\mathcal F}}

\newcommand{\R}{\mathbb{R}}
\newcommand{\id}{{\mathbf 1}}

\newcommand{\da}[2]{\frac{{\rm d} #1}{{\rm d} #2}}

\newcommand{\pf}{{\it Proof: }}
\newcommand{\eof}{\hfill {\it Q.E.D.} \vspace*{0.3cm}}

\begin{document}
%\title{{Equilibrium Strategies of a
%   Continuous-Time Portfolio Selection Model under
%   Rand-Dependent Utilities}}
  \title{{Consistent Investment of Sophisticated Rank-Dependent Utility Agents
  in Continuous Time
  %\thanks{The first draft of this paper was dated September 19, 2015. All the methodologies and main results were in place back then, but it has been forbiddingly hard to work out a complete paper, due to the
%   complex approach and delicate analysis involved. Over the years we have been on and off working on this paper, constantly distracted by other projects but more importantly discouraged  by failing to {\it craft} it into an accessible read. Rather ironically, the lockdown -- in all the three countries we are living in -- due to the current COVID-19 has given us a chance to devote ourselves exclusively to this paper for a few weeks. This to us has become a silver lining to the pandemic.}
   }}

 %{{Portfolio Selection of Sophisticated Rank-Dependent Utility Agents
%  without Commitment: The Continuous-Time Case}}

\author{Ying Hu\thanks{Univ Rennes, CNRS, IRMAR-UMR 6625, F-35000 Rennes, France; Email: ying.hu@univ-rennes1.fr. Partially supported by Lebesgue Center of Mathematics “Investissementsd’avenir”program-ANR-11-LABX-0020-01, by ANR CAESARS (Grant No. 15-CE05-0024) and by ANR MFG (GrantNo. 16-CE40-0015-01).} \and
Hanqing Jin\thanks{Mathematical Institute and Oxford--Nie Financial Big Data Lab, 
University of  Oxford, Oxford OX2 6GG, UK; Email: jinh@maths.ox.ac.uk. The research of this author was  partially supported by  research
grants from  Oxford--Nie Financial Big Data Lab and  Oxford--Man Institute of Quantitative Finance. }
%This author also thanks Shandong University, who support .......}
\and
Xun Yu Zhou\thanks{Department of IEOR, Columbia University, New York, NY 10027, USA;  Email: xz2574@columbia.edu. The research of this author was supported through start-up grants at both University of Oxford and Columbia University, and research funds from Oxford--Nie Financial Big Data Lab,
Oxford-Man Institute of Quantitative Finance, and Nie Center for Intelligent Asset Management. }}

%\author{Ying Hu\thanks{Partially supported by
\maketitle

\begin{abstract}
We study portfolio selection in a complete continuous-time market where the preference is dictated by the
rank-dependent utility. As such a model is inherently time inconsistent due to the underlying
probability weighting, we study the investment behavior of
sophisticated consistent planners who seek (subgame perfect) intra-personal equilibrium strategies. We provide sufficient conditions
under which an equilibrium strategy is a replicating portfolio of
a final wealth. We derive this final wealth profile explicitly,
which turns out to be
in the same form as in the classical
Merton model with the market price of risk process properly scaled by a deterministic function in time.
We present this scaling function explicitly through the solution to  a highly nonlinear and singular ordinary
differential equation, whose existence of solutions is established. Finally, we
give a necessary and sufficient condition for the scaling function to be smaller than 1 corresponding to an effective reduction in risk premium due to probability weighting.

\smallskip

{\sc Keywords}: Rank-dependent utility, probability weighting, portfolio selection, continuous time, time inconsistency, intra-personal equilibrium strategy, market price of risk
%. Finally,
%when the sufficient conditions do not hold, we show via an example that the equilibrium may not exist.
\end{abstract}

\section{Introduction}

The classical expected utility theory (EUT) is unable to explain many puzzling phenomena  and paradoxes, such as, just to name a few,  the Allais paradox, the co-existence of  risk-averse and risk-seeking
behavior of a same individual, and the disposition effect in financial investment.
Rank-dependent utility theory (RDUT), first proposed by Quiggin (1982), has been developed to address some of these puzzles and has thus far been widely considered as
one of the most prominent alternative theories on preferences and choices. In addition to  a concave outcome utility function as in EUT, RDUT features  a probability weighting (or distortion) function whose slopes give uneven weights on
random outcomes when calculating the mean. The presence of probability weighting is supported by a large amount of experimental and empirical studies, many of which find that such a weighting function typically displays an ``inverse S-shape" (namely, it is first concave and then convex in its domain); see, e.g. Tversky and  Kahneman (1992), Wu and Gonzalez (1996) and Tanaka et al (2010). This particular shape  captures individuals' tendency to exaggerate the tiny  probabilities of both
improbable large gains (such as winning a lottery) and  improbable large losses (such as encountering a plane crash). In particular, Tversky and  Kahneman (1992) propose a specific parametric
class of inverse S-shaped weighting functions, which will be used as a baseline example to test the assumptions in our paper.

In this paper we study a continuous-time financial portfolio selection model in which an agent pursues the highest rank-dependent utility (RDU) value. In contrast to  classical continuous-time portfolio models such as Merton's (Merton 1969), a dynamic RDU model is intrinsically {\it time inconsistent}; namely, any ``optimal" strategy for today will generally not be optimal for tomorrow. As a result, there is no notion of a {\it dynamically optimal strategy} for a time-inconsistent model because any such a strategy, once devised for this moment,  will {\it have to} be abandoned immediately (and  indeed infinitesimally)  at the next moment. The time inconsistency of the RDU model emanates from the probability weighting which weights the random outcomes {\it unevenly} according to their probabilities of occurrence. Consider as an example a 10-period binomial lattice model with equal probabilities of moving up and down at any given state. Standing at $t=0$ the probability of reaching the top most state (TMS) at $t=10$ is extremely small ($2^{-10}$). If the agent has an inverse S-shaped probability weighting then she will greatly inflate this probability. As time goes by and the agent moves along the lattice the probabilities of this same event -- eventually reaching the TMS -- keep changing, and so do the
{\it degrees} of probability weighting. Indeed,
once at $t=9$ the probability of finally reaching the TMS  is either 1/2 or 0 -- which any reasonably intelligent individual is able to tell -- and hence there is no probability weighting at all. We hereby see an inconsistency in the {\it strength}  of probability weighting over time which is the key reason behind the time inconsistency of an RDU model.

Time inconsistency changes fundamentally the way we deal with dynamic optimization in general. Optimization is intimately associated with decision making, and finding an optimal solution is therefore  to advise on the best decisions. Now, there is {\it no} optimal solutions under  time inconsistency: so what is the purpose of studying a time-inconsistent problem?

In his seminal paper, Strotz (1955) describes three types of agents when facing time inconsistency. Type 1, a ``spendthrift" (or a naivet\'e as in the more recent literature), does not recognize the time inconsistency and at any given time seeks an optimal solution for that moment only. As a result, his strategies are  always myopic and change all the times. The next two types are aware of time inconsistency
but act differently. Type 2 is a ``precommitter" who solves the optimization problem only {\it once} at time 0 and then commits to the resulting strategy throughout, even though she knows that the original solution may no longer be optimal at later times. Type 3 is a ``thrift" (or a sophisticated agent) who is unable to precommit and realizes that her future selves will disobey whatever plans she makes now. Her resolution is to  compromise and choose {\it consistent planning} in the sense that she optimizes taking the future disobedience as a {\it constraint}.
The Strotzian approach to time inconsistency is, therefore, {\it descriptive}, namely to describe what people actually do -- the different reactions and behaviors in front of  the inconsistency, as opposed to being {\it normative}, namely to tell people what to do.

It is both interesting and challenging to formulate mathematical models for {\it each} of the three types and solve them. It is interesting because these models are very different from the classical stochastic control based ones and different from each other, and it is challenging because the most powerful tools for tackling dynamic optimization such as dynamic programming and martingale analysis are based on time consistency and hence fail for time-inconsistent problems.
Recently, there is an upsurge of research interest and effort in the fields of stochastic control and mathematical finance/insurance in studying time-inconsistent models, mostly focusing  on three different problems: mean--variance portfolio selection, and those involving non-exponential discounting or probability weighting.
Earlier works focused on Type 2, precommitted agents (see, e.g., Li and Ng 2000, Zhou and Li 2000, He and Zhou 2011), and later ones gradually shifted to Type 3, consistent planners or sophisticated agents (Ekeland and Lazrak 2006, Bjork and Murgoci 2010,
Hu et al 2012, Hu et al 2017, Bjork et al 2014).

The Type 3 problem can be mathematically formulated as a game in the following way.
The sophisticated agent,  anticipating the disagreement between her current and
future selves,  searches for a dynamic strategy that all the future selves have no incentive to deviate from.
The resulting strategy is a (subgame perfect) {\it intra-personal equilibrium} that will be carried through. In the continuous-time setting, Ekeland and Lazrak (2006) are the first to give a formal definition of such an equilibrium (albeit for a deterministic
Ramsey model with non-exponential discounting),  based on a first-order condition of a ``spike variation" of the equilibrium strategy.

This paper derives  the consistent investment strategies of a Type 3 RDU agent in a continuous-time market in which the asset prices are described by stochastic differential equations (SDEs).
We make several contributions, both methodologically and economically.
First of all, to our best knowledge, this paper is the first to formulate and attack
an RDU investment problem  in the continuous-time setting. There are substantial difficulties in approaching the problem. Most notably, in deriving an equilibrium strategy, one needs to analyze the effect of the aforementioned spike variation on the RDU objective functional. In the absence of probability weighting, there is a well-developed approach at disposal to do this, based on calculus of variations and {\it sample-path} dependence of SDEs on parameters.\footnote{The same approach is used to derive the stochastic maximum principle for optimal stochastic controls; see, e.g., Yong and Zhou (1999).}  With probability weighting, however, we need to study the {\it distributional} dependence of the SDE solutions on parameters, which is actually a largely unexplored topic to our best knowledge. Starting from scratch, we carry out a delicate analysis to solve the problem thoroughly. The final solution requires  the existence of solutions to a highly nonlinear, singular ordinary differential equation (ODE),  the validity of an inequality, and the existence of the Lagrange multiplier to a budget constraint.  We then provide sufficient conditions on the model primitives to ensure that these three requirements are met.

The equilibrium strategy is a replicating portfolio to a terminal wealth profile we derive explicitly, assuming that the market is complete. Curiously, the terminal wealth is of the same form as that with the classical Merton optimal strategy under EUT, except that the market price of risk or risk premium process needs to be multiplied by a scaling factor, the latter in turn determined by a solution to the aforementioned nonlinear ODE. To put this in a different way, the sophisticated RDU behaves as if she was an EUT agent, albeit in a {\it revised} market where the risk premium is properly scaled.
The scaling factor depends on the original investment opportunity set  and the agent's probability weighting function, but {\it not} her outcome utility function. This suggests that the additional constraint arising from the consistent planning due to time inconsistency can be transferred to
the market opportunity set. This observation could be a key leading to identifying  market equilibria where all the agents are RDU consistent planners. 

When the scaling factor is less than 1, the risk premium is reduced and the agent 
is more risk averse than her EUT counterpart. In this case the probability weighting and consistent planning make the agent to take less risky exposure. We present a necessary and sufficient condition for this to happen.

It should be noted that, in order to derive our main results,  we make several assumptions on the model primitives. Some of them may look quite technical especially those on the weighting function. However, we do not impose those assumptions for mathematical convenience; instead we make sure that they are mathematically mild and economically reasonable. In particular, all the assumptions on the weighting function are satisfied by the baseline function proposed by Tversky and  Kahneman (1992).

%
%
%
%Consider an agent in the financial market  with initial wealth $x_{0}$ at time $t=0$, who pursuits high
% rank-dependent utility. At time $t$, the rank-dependent utility also depends on
% the current  total wealth. Denoted  $X_{t}=x$, then for any admissible portfolio with terminal wealth $X_{T}$,
% the rank-dependent utility is described in the form
%\begin{eqnarray*}
%J(X_T; t, x)&=&\int_0^{+\infty}w(t, \p_t(u( X_T)>y))dy+\int_{-\infty}^0(w(t, \p_t(u( X_T)>y))-1)dy,
%\end{eqnarray*}
%where $w(t, \cdot)$ are probability distortions used at time $t$, $u( \cdot)$ is the utility/value function, and  $\p_t$ means the conditional probability given $\cF_t$, which includes the
%  information $X_t=x$.

%In this paper, we investigate the existence of an equilibrium solution for this problem
% in a continuous time financial market.

The rest of the paper is organized as follows. In Section 2, we state our problem and define the intra-personal equilibrium
 in the same spirit as in our previous work Hu et al. (2012,2017). In Section 3, we present sufficient conditions under which the equilibrium terminal wealth can be explicitly derived. In Section 4, we examine these sufficient conditions closely on our model primitives. Section 5 is devoted to a concrete example demonstrating the results of the paper. In Section 6, we present an equivalent condition for 
 an effectively reduced risk premium. Finally, Section 7 concludes. In Appendices, we state related general results on a class of singular ODEs, and verify our assumptions on a family of time-varying  Tversky-- Kahneman's weighting functions.
 %And in Section 4, we show some examples where our assumptions are verified.

%Blah, Blah, .......

\section{Problem Formulation}
%To focus on the time inconsistency of the RDU preference, we keep our dynamic system as simple as possible.

\subsection{The market}
We consider a continuous-time market in a finite time horizon $[0, T]$,
where there are a  risk-free asset and $n$ risky assets being traded frictionlessly with price processes
$S_{0}(\cdot)$ and $S_{i}(\cdot)$, $i=1,\cdots, n$, respectively.
The risk-free interest rate, without loss of generality, is set to be $r(\cdot)\equiv 0$,
or equivalently,  $S_{0}(t)\equiv S_{0}(0)$.
The dynamics of $S_{i}(\cdot)$, $i=1,\cdots, n$,  of the risky assets follow  a multi-dimensional geometric Brownian motion
$$dS_{i}(t)=S_{i}(t)\left[\mu_{i}(t)dt+\sum_{j=1}^{n}\sigma_{i,j}(t)dW_{j}(t)\right],\; i=1,\cdots, n,$$
where $\mu_i(\cdot)$ and  $\sigma_{i,j}(\cdot)$ are all %uniformly bounded and
deterministic functions of time $t$, $W(\cdot)=(W_{1}(\cdot),\cdots, W_{n}(\cdot))^{\top}$ is an $n$-dimensional standard Brownian motion in a filtered probability space
$(\Omega, \cF, (\cF_{t})_{t\in [0, T]}, \p)$ with $\cF_{t}=\sigma(W_{s}: s\le t)\vee {\mathcal N}(\p)$. Here $^\top$ denotes the
matrix transpose.

Denote $\mu(t)=(\mu_{1}(t), \cdots, \mu_{n}(t))^{\top}$ and $\sigma(t)=(\sigma_{i,j}(t))_{n\times n}$.
We assume that $\sigma(\cdot)$ is invertible and that the market price of risk is
$\theta(t):=\sigma(t)^{-1}\mu(t)$; so the market is arbitrage free and complete. %To avoid a trivial case, we assume that $\theta(t)\neq 0$ a.e. $t\in [0,T]$.%, which is also assumed to be bounded.
%Later we use $\theta_t$ to mean $\theta(t)$, and apply this convention to other processes.

A trading strategy is a self-financing portfolio described by an $\cF_{t}$-adapted process
$\pi(\cdot)=(\pi_{1}(\cdot),\cdots, \pi_{n}(\cdot))^{\top}$, where $\pi_i(t)$ is the dollar amount allocated to
asset $i$ at time $t$. Under such a portfolio $\pi(\cdot)$, the dynamics of the corresponding wealth process $X(\cdot)$ evolve according to
the wealth equation
\begin{equation}\label{wealth}
dX(t)=\pi(t)^{\top}\mu(t)dt+\pi(t)^{\top}\sigma(t)dW(t).
\end{equation}
%where $x_0$ is the initial wealth at time $t=0$.

%Define the equivalent martingale measure $\Q$ by
%\begin{equation}\label{rho}
%\frac{d\Q}{d\p}\Big|_{\cF_{t}}=\rho({t})\trieq \exp\left(-\frac{1}{2}\int_{0}^{t}|\theta(s)|^{2}ds-\int_{0}^{t}\theta(s)^{\top}dW(s)\right),
%\end{equation}
%and the standard Brownian motion $\hat W(t)\trieq W(t)+\int_{0}^{t}\theta(s)ds$ under $\Q$.
For any $t\in[0,T)$, we say a trading strategy $\pi(\cdot)$ is admissible on $[t,T]$ if $\E\int_t^T|\pi(s)^{\top}\mu(s)|ds<+\infty$ and $\E\int_{t}^{T}|\pi(s)^{\top}\sigma(s)|^2ds<+\infty$. % is a  martingale under $\Q$.

%With this canonical setting, the underlying market is arbitrage-free and complete.
%The following
%result regarding the replicability of a contingent claim is well known (see, e.g., Karatzas and Shreve [1991, Problem 4.16, pp.184]).
%\begin{proposition}\label{completemarket}
%An $\cF_{T}$-measurable contingent claim $\xi$ can be replicated by an admissible trading strategy with finite initial wealth  if and only if $\E[\rho(T)\xi]$ is finite, in which case the replication cost is $\E[\rho(T)\xi]$.
%\end{proposition}

%We impose the following basic assumption on the market parameters throughout the paper:
%\begin{assump}\label{rcont-mkt}
%The functions $\mu(\cdot)$, $\sigma(\cdot)$ and $\theta({\cdot})$  are all right continuous and uniformly bounded, and $\theta(t)\neq0$ $\forall t\in [0,T]$.
%\end{assump}

\subsection{Rank-dependent utility}
An agent,  with an initial endowment $x_{0}$ at time $t=0$, pursues the  highest
possible  rank-dependent utility (RDU) of her wealth at $t=T$ by dynamically trading in the market.
At any given time $t$ with the wealth state $X(t)=x$ she takes an admissible portfolio $\pi(\cdot)$ on $[t,T]$ leading to the terminal wealth $X({T})$.
The RDU value of this terminal wealth is
%At time $t$, the rank-dependent utility also depends on
 %the current  total wealth. Denoted  $X_{t}=x$, then for any admissible portfolio with terminal wealth $X_{T}$,
 %the rank-dependent utility is described in the form
\begin{eqnarray*}
J(X(T); t, x)&=&\int_0^{+\infty}w\left(t, \p_t(u( X(T))>y)\right)dy+\int_{-\infty}^0\left[w\left(t, \p_t(u( X(T))>y)\right)-1\right]dy,
\end{eqnarray*}
where $w(t, \cdot)$ is the probability weighting applied at time $t$, $u( \cdot)$ is the (outcome) utility function,
and  $\p_t:=\p(\cdot|\cF_{t})$
 denotes the conditional probability given $\cF_t$, which includes the
  information $X(t)=x$. %Later we will use $J(X_{T}; t)$ to mean $J(X_{T}; t, X_{t})$.
The agent's original objective is to maximize $J(X(T); t, x)$ by choosing a proper admissible investment strategy.

Here we allow  the weighting function $w$ to depend explicitly on time $t$. This  is not just for mathematical generality; the time variation of probability weighting is supported empirically (Dierkes 2013, Cui et al 2020) and argued for on a psychology ground (Cui et al 2020).

\subsection{Equilibrium strategies}
As discussed in Introduction,  probability weighting in general causes time-inconsistency. This means
an optimal strategy with respect to $J(X(T); 0, X(0))$ is not necessarily still optimal
with respect to $J(X(T); t, X(t))$ for a future time $t>0$, where $X(\cdot)$ is the ``optimal" wealth process projected at $t=0$ for the objective $J(X(T); 0, X(0))$.
In this paper, we study the behaviors of a sophisticated  agent who is aware of the time-inconsistency  but lacks commitment, and  who instead seeks a consistent investment among all the different selves $t\in[0,T]$ by finding intra-personal equilibrium strategies.
%Such a strategy is an `infinitesimally" sub-game perfect equilibrium for any $t\in [0,T)$,  defined through spike variation;
%see e.g. \cite{EL, BM, HuJinZhou}.

Precisely,
given an admissible trading strategy $\pi(\cdot)$ with the corresponding wealth process $X(\cdot)$ starting from $X({0})=x_{0}$, a time $t\in[0,T)$ and a small number $\varepsilon\in(0,T-t)$,
we define
a slightly perturbed strategy which adds $\$k$  on top of $\pi(\cdot)$ over the small time interval $[t, t+\varepsilon)$  while keeping $\pi(\cdot)$  unchanged
outside of this interval. Here $k$ is  an $\cF_t$-measurable random {\it vector}.
This technique is called a {\it spike variation}. 
It follows from the wealth equation (\ref{wealth}) that the perturbed terminal wealth
is $X(T)+k^\top\Delta(t,\varepsilon)$, where $\Delta(t,\varepsilon)=
\int_t^{t+\varepsilon}\mu(s)ds+\int_t^{t+\epsilon}\sigma(s)dW(s)$.

\begin{definition}\label{equdef}
An admissible strategy $\pi(\cdot)$ with the wealth process $X(\cdot)$ starting from $X({0})=x_{0}$
is called an equilibrium strategy, if for any $t\in[0,T)$ and any $\cF_t$-measurable random vector
$k$, one has
\begin{equation}\label{key}
\limsup_{\varepsilon\downarrow 0} \frac{J(X(T)+k^\top\Delta(t,\varepsilon);t,X(t))-J(X(T);t,X(t))}{\varepsilon}\le 0.
\end{equation}
\end{definition}
This definition follows most existing works on time-inconsistent optimal controls in continuous time; see e.g. Ekeland and Lazrak (2006), Bjork and Murgoci (2010), and Hu et al (2012, 2017).
Theoretically, an equilibrium strategy is an ``infinitesimally" sub-game perfect equilibrium among
all the selves  $t\in [0,T)$.\footnote{Definition \ref{equdef} is also in line with the original Strotz's vision of intrapersonal
equilibrium (Strotz 1955), adjusted for the continuous-time setting.  More specifically, a Strotzian equilibrium strategy stipulates that for any given self $t$, all the
future selves will commit to the strategy, because any deviation from the strategy will make the deviating self
$t$ worse off. In a continuous-time setting, however, any fixed $t$ {\it alone} has no influence on the terminal wealth because it has a measure of zero. Therefore,
 one considers instead a small ``alliance" of self $t$: the
interval $[t, t+\varepsilon)$. Definition \ref{equdef}  posits that a ``deviation-in-alliance" from equilibrium
fares worse in a first-order sense.}
\iffalse
This definition is different from the one used by works from some literature, e.g., \cite{BM}. We notice that there are many different ways to specify
the perturbation, for example, by an $\cF_t$-constant amount of money (as we use here), or an $\cF_t$-constant
number of shares, or an $\cF_t$-constant weight in the total wealth. We think $\cF_t$-constant shares is more
practical, while constant capital is mathematically easier to handle. Bjork used one which pulls the portfolio
to an $\cF_t$-constant, and the physical meaning of the constant is the same as the original portfolio. Later we may be able to relax the
perturbation to any given process.
\fi

The objectives of this paper are to find conditions under which there exists an equilibrium strategy,  to derive
the terminal wealth under such a strategy, and to draw economic implications and interpretations of the results. Since the market is complete, once the desired terminal wealth
profile is identified its replicating portfolio is then the corresponding equilibrium strategy for consistent investment.

\subsection{Assumptions}

In this subsection we collect all the assumptions needed on the market parameters as well as on the preference functions.\footnote{Throughout this paper,
by an ``increasing'' function we mean a ``non-decreasing'' function, namely $f$ is increasing if $f(x)\geq f(y)$ whenever $x>y$.
We say $f$ is ``strictly increasing'' if  $f(x)> f(y)$ whenever $x>y$. Similar conventions are used for ``decreasing'' and ``strictly decreasing'' functions.}
These assumptions are henceforth in force, without necessarily being mentioned again in all the subsequent statements of results.

\begin{assump}\label{rcont-mkt}
The functions $\mu(\cdot)$, $\sigma(\cdot)$ and $\theta({\cdot})$  are all right continuous and uniformly bounded, and $\theta(t)\neq0$ $\forall t\in [0,T]$.
\end{assump}

\begin{assump}\label{basicassump}
\begin{itemize}
\item [{\rm (i)}] $w(\cdot, \cdot): [0,T]\times [0, 1]\mapsto [0,1]$ is measurable. Moreover, for each $t\in[0,T]$,
$w(t, \cdot): [0, 1]\mapsto [0,1]$ is strictly increasing, $C^{2}$, and $w(t, 0)=0, w(t,1)=1$ $\forall t\in [0, T]$. %Finally, $w(T,p)\equiv p$.
\item [{\rm (ii)}] $u(\cdot): \R\mapsto \R$ is strictly increasing, strictly concave, $C^{3}$, and $u( -\infty)=-\infty$,
$u'( -\infty)=+\infty, u'(+\infty)=0$.
 \end{itemize}
\end{assump}

These two are very weak assumptions, representing some ``minimum requirements" for the
model primitives.

Define $I(x):=(u')^{-1}(x)$ , $x>0$, and  $l(x):=-\ln u'(x)$, $x\in\R$.
%\footnote{{\bf The definition of $I$ needs to be very careful, because it is defined only on the range of $u'$, which is at most $(0,+\infty)$, certainly not the whole real line as in the original version. Please keep this in mind and  double check if this may cause problems in the sequel. \revise{In Assumption 3.2.(ii), we have the range of $u'$ be $(0, +\infty)$. We have not applied any negative number to $I$ in this paper. So, no worry. }  }}
\begin{assump}\label{uandw}
\begin{itemize}
\item [{\rm (i)}] %There exist an $\alpha_{1}>0$ and a $c_{1}>0$, such that $$-I'(x)\ge \frac{1}{c_{1}(x^{\alpha_{1}}+x^{-\alpha_{1}})},$$
%or equivalently,  $$-u^{\dpm}(z)\le c_{1}(|u'(z)|^{\alpha_{1}}+|u'(z)|^{-\alpha_{1}}).$$
There exists $\alpha>0$ such that  $\limsup_{x\rightarrow +\infty}\frac{x^{-\alpha}}{-I'(x)}<+\infty$,
or equivalently, $\limsup_{x\rightarrow -\infty}\frac{-u^\dpm(x)}{u'(x)^\alpha}<+\infty$.
\item [{\rm (ii)}] There exist  $a>0, b>0$ such that
$l'(x)\le e^{a|l(x)|+b}, l^{\dpm}(x)\le e^{a|l(x)|+b}$ $\forall x\in \R$.
%There exist an $\alpha_{2}>0$ and a $c_{2}>0$, such that
%$$u^{\tpm}(x)\le c_{2}(|u'(x)|^{\alpha_{2}}+1).$$
\item [{\rm (iii)}] $u'''(x)\geq0$ $\forall x\in\R$.
\item [{\rm (iv)}] For any $t\in [0, T]$, there exist $c>0$ and $m\in (-1, 0)$, both possibly depending on $t$, such that
$$ w_p'(t,p)\le c[p^{m}+(1-p)^{m}] \quad \forall \, p\in (0,1).$$
\end{itemize}
\end{assump}

Assumption \ref{uandw}-(i) and -(ii) are very mild conditions satisfied by most commonly used utility functions defined on $\R$ (e.g. the exponential utility).
Assumption \ref{uandw}-(iii) implies the {\it risk prudence} of the agent, capturing her tendency to take precautions against future risk. Many common 
utility functions are prudent.\footnote{Through an experiment with a large number of subjects, Noussair et al (2014)
observe that the majority of individuals' decisions are consistent with prudence.}
Assumption \ref{uandw}-(iv) is to control the level of probability weighting on
very small and very large probabilities. It is satisfied by some well-known weighting functions, e.g. that of Tversky  and Kahneman (1992); see Appendix \ref{KT-check}.

\begin{assump}\label{swappingextra}
There exist constants  $\nu>0$, $\zeta>0$, such that
$$\int_{-\infty}^{0}u'(x)^{-\nu}dx<+\infty, \quad \int_0^{\infty}u'(x)^{\nu}dx<+\infty,\;\; \int_{-\infty}^{+\infty}u'(x)e^{-\zeta x^{2}}dx<+\infty.$$
\end{assump}

This is also a mild assumption, which is satisfied by, say, the exponential utility function.

\subsection{A crucial function}

The following real-valued function, generated from the weighting function,  will play a central role throughout  this paper:
$$h(t,x):=\E\left[w_p'(t, N(\xi))e^{x\xi}\right]>0,\;\; t\in[0,T],\; x\in\R,$$
where (and henceforth) $w_p'(t,p):= \frac{\partial}{\partial p}w(t,p)$, $\xi$ is a standard normal random variable with $N$ being its probability distribution function.
Similarly, we denote by $h'_x(t,x)$ and $h''_x(t,x)$ the first- and second-order partial derivatives of $h$ in $x$ respectively, and
in general $h^{(n)}_x(t,x)$ the $n$-th order partial derivative of $h$ in $x$.

%From now on we impose the following (very mild regularity) assumption on this function:
%\begin{assump}\label{hbasic}
%For any $t\in [0, T]$,  $h(t,x)<+\infty$
%$\forall x\ge 0$,  and there exists $a<0$ (which may depend on $t$) such that
%$h(t,a)<+\infty$.
%%$\E[w_t'( N(\xi))e^{x|\xi|}]<+\infty$ for any $t\in [0, T]$ and any $x\ge 0$.
%\end{assump}

%\revise{Assumption 2.5 is not directly on parameters of the problem.
%We will impose a stronger condition on the weighting function to ensure Assumption 2.}

\begin{lemma}\label{2.5to3.1.ii}
 %If Assumption \ref{uandw}-(iv) holds, then 
 $h(t,x)<+\infty$ $\forall (t,x)\in [0,T]\times \R.$
\end{lemma}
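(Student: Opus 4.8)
The plan is to reduce the finiteness of $h(t,x)$ to a single one-sided Gaussian integral estimate, and then to settle that estimate by an integration-by-parts argument that trades the singularity of $N(y)^m$ against the super-exponential decay of the normal density.

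First I would apply Assumption \ref{uandw}-(iv) together with the symmetry $1-N(\xi)=N(-\xi)$ of the standard normal to bound the integrand pointwise,
$$w_p'(t,N(\xi))\le c\left[N(\xi)^m+N(-\xi)^m\right],$$
so that $h(t,x)\le c\left(\E[N(\xi)^m e^{x\xi}]+\E[N(-\xi)^m e^{x\xi}]\right)$. Running the change of variable $\xi\mapsto-\xi$ (which preserves the law of $\xi$) through the second expectation turns it into $\E[N(\xi)^m e^{-x\xi}]$. Hence it suffices to prove that $\E[N(\xi)^m e^{x\xi}]<+\infty$ for every $x\in\R$, since the two exponents $\pm x$ are then both covered.

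Next I would write this as $\int_{-\infty}^{+\infty}N(y)^m e^{xy}N'(y)\,dy$ and split at $y=0$. On $[0,+\infty)$ the factor $N(y)^m$ is harmless: because $m<0$ and $N(y)\ge\tfrac12$ there, one has $N(y)^m\le 2^{-m}$, so this piece is dominated by $2^{-m}\int_{-\infty}^{+\infty}e^{xy}N'(y)\,dy$, a multiple of the Gaussian moment generating function $e^{x^2/2}<+\infty$. On $(-\infty,0]$, for $x\ge0$ I would bound $e^{xy}\le1$ and integrate $N(y)^mN'(y)=\tfrac{1}{1+m}\,dN(y)^{1+m}$ directly; convergence follows from $1+m>0$, which forces $N(y)^{1+m}\to0$ as $y\to-\infty$. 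For $x<0$ I would integrate by parts in that same primitive, producing a boundary term $\lim_{y\to-\infty}N(y)^{1+m}e^{xy}$ and a residual integral $\int_{-\infty}^0 N(y)^{1+m}e^{xy}\,dy$. Both are controlled by the one-step L'H\^opital asymptotic $\lim_{y\to-\infty}N(y)e^{\alpha y}=0$ valid for every $\alpha<0$ (using $N'(y)=\tfrac{1}{\sqrt{2\pi}}e^{-y^2/2}$): taking $\alpha=\tfrac{x}{1+m}<0$ kills the boundary term, while taking $\alpha=\tfrac{2x}{1+m}<0$ shows $N(y)^{1+m}e^{2xy}$ is bounded by some $c_1$ on $(-\infty,0]$, whence $\int_{-\infty}^0 N(y)^{1+m}e^{xy}\,dy\le c_1\int_{-\infty}^0 e^{-xy}\,dy<+\infty$ because $-x>0$.

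The main obstacle is exactly this last piece on $(-\infty,0]$ with $x<0$: there the singular factor $N(y)^m$ and the exponential $e^{xy}$ both blow up as $y\to-\infty$, so crude domination fails and one must play the blow-up off against the Gaussian tail through integration by parts and the tail estimate. Everything else is routine. It is worth noting that the hypothesis $m\in(-1,0)$ is used sharply on both ends: $m<0$ gives the trivial bound on $[0,+\infty)$, and $m>-1$ ensures $1+m>0$ so that the primitive $N(y)^{1+m}$ vanishes at $-\infty$.
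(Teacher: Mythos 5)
Your proof is correct, but it takes a genuinely different route from the paper's. You share the first two steps with the published proof (the pointwise bound $w_p'(t,N(\xi))\le c[N(\xi)^m+N(-\xi)^m]$ from Assumption \ref{uandw}-(iv) and the symmetry reduction to $\E[N(\xi)^m e^{\pm x\xi}]$), but you then settle the key expectation by splitting the integral at $y=0$, using $N(y)^m\le 2^{-m}$ on the right half, and on the left half integrating by parts against the primitive $\frac{1}{1+m}N(y)^{1+m}$, controlling the boundary term and the residual integral via the tail asymptotic $\lim_{y\to-\infty}N(y)e^{\alpha y}=0$ for $\alpha<0$. The paper instead disposes of the whole expectation in one line with Young's inequality: setting $\hat m=\frac{m-1}{2}\in(-1,m)$, $p=\hat m/m>1$, $q=p/(p-1)$, it bounds $N(\xi)^m e^{\pm x\xi}\le \frac{1}{p}N(\xi)^{\hat m}+\frac{1}{q}e^{\pm q x\xi}$, so the singular factor and the exponential are separated and each has a finite mean ($\frac{1}{\hat m+1}$ and $e^{(qx)^2/2}$ respectively). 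Your argument is more elementary and self-contained — it needs only L'H\^opital and Gaussian tail decay, and it makes transparent exactly where both ends of the hypothesis $m\in(-1,0)$ are used — at the cost of a case split on the sign of $x$ and a longer computation; the paper's Young-inequality trick is shorter, avoids any case analysis or boundary-term bookkeeping, and yields a clean explicit bound $2c\bigl[\frac{1}{(\hat m+1)p}+\frac{1}{q}e^{(qx)^2/2}\bigr]$ that is uniform in $t$. Both are complete proofs of the lemma.
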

\begin{proof}
%{Since $h(T,x)=\E[e^{x\xi}]<+\infty$, we need only to consider $t\in[0,T)$. }
Fix $(t,x)\in [0,T]\times \R.$ It follows from Assumption \ref{uandw}-(iv) that
 $w_p'(t,p)\le c(p^{m}+(1-p)^{m}) \quad \forall \, p\in (0,1)$ for some $c>0$ and $m\in (-1, 0)$.
%Fix $x\in\R$.
Denote $\hat m:=\frac{m-1}{2}\in (-1,m)$,  $p:=\frac{\hat m}{m}>1$,  $q:=\frac{p}{p-1}>1$.
 Applying Young's inequality $a^m b\le \frac{a^{\hat m}}{p}+\frac{b^q}{q}$  $\forall a\ge0,\;b\ge0$, we obtain
$$
 \E[(N(\xi))^m e^{\pm x\xi}]\le \E\left[\frac{1}{p} (N(\xi))^{\hat m}+\frac{1}{q} e^{\pm xq\xi}\right]
= \frac{1}{(\hat m+1) p}+\frac{1}{q} e^{(qx)^2/2}.$$
So
\begin{eqnarray*}
h(t,x)&\le& c\left(\E[N(\xi)^me^{x\xi}]+\E[ N(-\xi)^m e^{x\xi}]\right)\\
&=& c(\E[N(\xi)^me^{x\xi}]+\E[ N(\xi)^m e^{-x\xi}])\\
&\le&2c\left[ \frac{1}{(\hat m+1) p}+\frac{1}{q} e^{(qx)^2/2}\right]<+\infty.
\end{eqnarray*}
The proof is complete.
\end{proof}

The following lemma collects some basic properties of $h$ which will be useful  in the sequel.
\begin{lemma}\label{hregularity} 
%Assume that for any $t\in [0, T]$,  $h(t,x)<+\infty$
%$\forall x\ge 0$,  and there exists $a<0$ (which may depend on $t$) such that
%$h(t,a)<+\infty$. Then,
For any $t\in [0, T]$, $h(t,\cdot)$ has the following properties:
\begin{enumerate}
 \item[{\rm (i)}] $h(t,0)=1$.
 \item[{\rm (ii)}] $h(t,x)$ is $C^{\infty}$ in $x\geq0$, with
  $h^{(n)}_x(t,x)=\E\left[w_p'(t, N(\xi))\xi^{n}e^{x\xi}\right],\;x\geq0$, $\forall n\geq0$.
  \item[{\rm (iii)}] $h^{(n)}_x(t,x)$ is convex in $x\geq0$ for all even $n\geq0$, and increasing in $x\geq0$ for all odd $n\geq1$.
      \item[{\rm (iv)}] $\ln h(t,x)$ is convex in $x\geq0$.
\end{enumerate}
\end{lemma}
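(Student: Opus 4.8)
The plan is to dispatch (i) by a change of variables, to establish (ii) by differentiating under the integral with a locally uniform integrable majorant, and then to read off (iii) and (iv) from the derivative formula of (ii). For (i), I would write $h(t,0)=\E[w_p'(t,N(\xi))]$ and use that $N(\xi)$ is uniformly distributed on $[0,1]$ (probability integral transform), so that
\[
h(t,0)=\int_0^1 w_p'(t,p)\,dp=w(t,1)-w(t,0)=1
\]
by the normalization in Assumption~\ref{basicassump}-(i).

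The heart of the argument is (ii). Writing $h(t,x)=\int_{\R} w_p'(t,N(y))\,e^{xy}\,N'(y)\,dy$ with $N'$ the standard Gaussian density, I would differentiate $n$ times formally under the integral sign to obtain the candidate $h^{(n)}_x(t,x)=\E[w_p'(t,N(\xi))\,\xi^{n}e^{x\xi}]$, and then justify the interchange by the dominated-convergence criterion for parameter-dependent integrals, proceeding by induction on $n$ (the step from order $n$ to order $n+1$ differentiates the $n$-th formula and requires the majorant carrying $|y|^{n+1}$). On any compact $x\in[0,M]$ one has $e^{xy}\le e^{M|y|}$, so it suffices to exhibit an integrable majorant of the form $w_p'(t,N(y))\,|y|^{n}e^{M|y|}\,N'(y)$. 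Invoking Assumption~\ref{uandw}-(iv), namely $w_p'(t,p)\le c[p^{m}+(1-p)^{m}]$ with $m\in(-1,0)$, this reduces to showing $\E\bigl[(N(\xi)^{m}+(1-N(\xi))^{m})\,|\xi|^{n}e^{M|\xi|}\bigr]<+\infty$, which is precisely the estimate of Lemma~\ref{2.5to3.1.ii} augmented by the polynomial factor $|\xi|^{n}$ and the exponential $e^{M|\xi|}$. I would rerun the same Young's-inequality splitting as in Lemma~\ref{2.5to3.1.ii}, separating the singular factor $N(\xi)^{m}$ from $|\xi|^{n}e^{M|\xi|}$; thereafter the Gaussian density absorbs the polynomial-times-exponential growth, while the singular term integrates because its exponent exceeds $-1$. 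This simultaneously yields $C^{\infty}$ regularity and the stated formula.

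Given (ii), the two assertions in (iii) collapse to a single parity observation: for any even $k\ge0$ the integrand $w_p'(t,N(\xi))\,\xi^{k}e^{x\xi}$ is nonnegative, hence $h^{(k)}_x(t,x)\ge0$. Taking $k=n+2$ gives $(h^{(n)}_x)''=h^{(n+2)}_x\ge0$ for even $n$ (convexity), and taking $k=n+1$ gives $(h^{(n)}_x)'=h^{(n+1)}_x\ge0$ for odd $n$ (monotonicity). For (iv), I would observe that $h(t,\cdot)$ is the Laplace transform of the nonnegative measure $d\nu_t(y)=w_p'(t,N(y))N'(y)\,dy$, and invoke the standard log-convexity of such transforms: by Cauchy--Schwarz,
\[
\bigl(h'_x(t,x)\bigr)^2=\bigl(\E[w_p'(t,N(\xi))\,\xi\,e^{x\xi}]\bigr)^2\le \E\bigl[w_p'(t,N(\xi))e^{x\xi}\bigr]\,\E\bigl[w_p'(t,N(\xi))\xi^{2}e^{x\xi}\bigr]=h(t,x)\,h''_x(t,x),
\]
using $w_p'\ge0$ (Assumption~\ref{basicassump}-(i)); hence $\frac{d^{2}}{dx^{2}}\ln h(t,x)=\bigl(h''_x h-(h'_x)^2\bigr)/h^{2}\ge0$.

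I expect the only genuine obstacle to be the domination step in (ii): the interchange of differentiation and expectation must be made rigorous with a majorant that is at once integrable and uniform in $x$ over compacta, and the tension lies between the blow-up of $w_p'(t,N(y))$ as $y\to\pm\infty$ (where $N(y)\to0,1$) and the factor $|y|^{n}e^{M|y|}$. Finiteness is secured by the interplay of the exponent $m>-1$ in Assumption~\ref{uandw}-(iv) with Gaussian decay, exactly as engineered in Lemma~\ref{2.5to3.1.ii}; the remaining care is purely bookkeeping, ensuring that the majorant is valid on $[-M,M]$ so that differentiability is covered down to the endpoint $x=0$ (which is unproblematic since $h(t,\cdot)$ is finite on all of $\R$ by Lemma~\ref{2.5to3.1.ii}).
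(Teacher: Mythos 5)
Your proposal is correct and follows essentially the same route as the paper: change of variables for (i), differentiation under the expectation justified by a dominated-convergence argument for (ii), the parity/sign observation for (iii), and Cauchy--Schwarz for (iv). The only cosmetic difference is in the dominating function for (ii) -- you rerun the Young's-inequality splitting of Lemma \ref{2.5to3.1.ii} with the extra factor $|\xi|^{n}e^{M|\xi|}$, whereas the paper bounds $|\xi^{n}|$ by a sum of exponentials and invokes the finiteness of $h(t,\cdot)$ at shifted arguments -- but both yield the same integrable majorant, uniform in $x$ on compacta.
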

\begin{proof} (i) We have
\[ h(t,0)=\E[w_p'(t, N(\xi))]=\int^{\infty}_{-\infty}w_p'(t, N(x))dN(x)=\int^1_0w_p'(t, p)dp=w(t,1)-w(t,0)=1.\]

(ii)
For $\varepsilon\not=0$, write
$$\frac{h(t,x+\varepsilon)-h(t,x)}{\varepsilon}=\E\left[w_p'(t, N(\xi))\xi\int_0^1 e^{(x+\theta\varepsilon)\xi}d\theta\right].
$$

%From the convexit of the function $h(x;t)=\E[w_t'(N(\xi))e^{x\xi}]$, we know $h(x;t)<h(x_1;t)+h(x_2;t)$ for any $x\in [x_1, x_2]$.

For $x\ge 0$, we have
$$0<\int_0^1 e^{(x+\theta\varepsilon)\xi}d\theta\le e^{(x+|\varepsilon|)\xi}+e^{-|\varepsilon|\xi}.$$
Fix $a<0$. For any integer $n\geq 0$, it follows from $h(t,a)<+\infty$ that there exists a constant $C_n>0$ such that
$$|\xi^n|\le C_n(e^{a\xi/2}+e^{\xi}).$$
%where $a$ is the constant specified in the assumption.
%in Assumption \ref{hbasic}.

Now, for any $\varepsilon$ with $|\varepsilon|<\min(-a/2,1)$, we have
\begin{eqnarray*}
 &&|w_p'(t,N(\xi))\xi\int_0^1 e^{(x+\theta\varepsilon)\xi}d\theta|\\
&\le& w_p'(t,N(\xi))|\xi|\left(e^{(x+|\varepsilon|)\xi}+e^{-|\varepsilon|\xi}\right)\\
&\le&C_1 w_p'(t,N(\xi))\left(e^{(x+|\varepsilon|+1)\xi}+e^{(1-|\varepsilon|)\xi}+e^{(x+|\varepsilon|+a/2)\xi}+e^{(a/2-|\varepsilon|)\xi}\right)\\
%&\le&C_1 w_t'(N(\xi))\left(e^{(x+|\varepsilon|+1)\xi}+e^{(1-|\varepsilon|)\xi}+e^{(x+|\varepsilon|+x_t/2)\xi}+e^{(x_t/2-|\varepsilon|)\xi}\right)\\
&<&C_1 w_p'(t,N(\xi))\left(e^{(x-a/2+1)\xi}+e^{(x+1)\xi}+e^{(1+a/2)\xi}+e^{\xi}
+e^{(x+a/2)\xi}+e^{x\xi}+e^{a\xi}+e^{a\xi/2}\right),
\end{eqnarray*}
where to deduce the last inequality we have repeatedly used the fact that $e^y<e^{y_1}+e^{y_2}$ for any $y_2>y_1$ and $y\in [y_1, y_2]$.
%It then follows from Assumption \ref{hbasic}
From the assumption of the lemma, it follows that the last term of the above is a random variable with
a finite mean; hence
 Lebesgue's dominated convergence theorem yields
$$h'_x(t,x)=\E[w_p'(t,N(\xi))\xi e^{x\xi}].$$
Similarly, we can derive the desired expressions of higher-order derivatives.

(iii) The result is straightforward by (ii).

(iv) It follows from (ii) and the Cauchy--Schwarz inequality that, for $x\geq0$,
\begin{eqnarray*}
h'_x(t,x)&=&\E\left[w_p'(t, N(\xi))\xi e^{x\xi}\right]\\
&=&\E\left[\sqrt{w_p'(t, N(\xi))}\xi e^{x\xi/2} \cdot \sqrt{w_p'(t, N(\xi))} e^{x\xi/2}\right]\\
&\leq& \sqrt{\E\left[w_p'(t, N(\xi))\xi^2 e^{x\xi}\right]}\cdot \sqrt{\E\left[w_p'(t, N(\xi)) e^{x\xi}\right]}\\
&=&\sqrt{h''_x(t,x)}\sqrt{h(t,x)}.
\end{eqnarray*}
Thus
\[ \frac{\partial^2}{\partial x^2}\ln h(t,x)=\frac{h''_x(t,x)h(t,x)-(h'_x(t,x))^2}{h^2(t,x)}\geq0,\]
establishing the desired convexity.
%For the first order derivative, it suffices to write $\varepsilon\not=0$,
%$$\frac{h(x+\varepsilon;t)-h(x;t)}{\varepsilon}=\E[w_t'( N(\xi))\xi\int_0^1 e^{(x+\theta\varepsilon)\xi}d\theta],
%$$
%from which we deduce by Lebesgue's dominated convergence theorem that
%$$\frac{d}{dx}h(x;t)=\E[w_t'( N(\xi))\xi e^{x\xi}].$$
%The other derivatives can be get similarly.
\end{proof}

\section{Equilibrium Strategies}

\subsection{An Ansatz and an ODE}
%Notice these assumptions are not necessary, we will see in the proof that they are designed only to provide mathematical regularity for some technical operations.

Our approach to deriving the equilibrium strategies is inspired by an {\it Ansatz} we now make. If the agent was an expected utility maximizer, then
her optimal strategy would be to dynamically replicate the terminal wealth $I(\kappa\rho(T))$ where $\rho(\cdot)$ is the state-price density process
defined as
\begin{equation}\label{rho}
\rho({t}):= \exp\left(-\frac{1}{2}\int_{0}^{t}|\theta(s)|^{2}ds
-\int_{0}^{t}\theta(s)^{\top}dW(s)\right),
\end{equation}
 and $\kappa$ is the Lagrange multiplier for the budget constraint. We conjecture that in the current RDU setting the
terminal wealth from an equilibrium strategy is still of the form $I(\kappa\bar\rho(T))$ with a {\it revised} state-price density process determined by
multiplying the  market price of risk function $\theta(\cdot)$ by a scaling function $\lambda(\cdot)$:
\begin{equation}\label{rhobar}
\bar\rho({t})=\exp\left(-\frac{1}{2}\int_{0}^{t}|\lambda(s)\theta(s)|^{2}ds-\int_{0}^{t}\lambda(s)\theta(s)^{\top}dW(s)\right).
\end{equation}

Since now we have conjectured a {\it specific} form of the desired terminal wealth profile $X(T)$, we will be able to calculate its RDU value along with that of a slightly perturbed
wealth process in the spirit of Definition \ref{equdef}. Then, the equilibrium condition (\ref{key}) will lead to an equation that can be used to identify $\lambda(\cdot)$ as well as to other conditions.

It turns out that the equation to derive $\lambda(\cdot)$ is an ODE, explicitly expressed in the following form:
 \begin{equation}\label{Lambdaeq}
\left\{\begin{array}{l}
\Lambda'(t)=-|\theta(t)|^2\left(\frac{h(t,\sqrt{\Lambda(t)})}{h'_x(t,\sqrt{\Lambda(t)})}\right)^2\Lambda(t),\;\; t\in[0,T),\\
\Lambda(T)=0.
\end{array}\right.
\end{equation}
This is a highly nonlinear ODE that is singular at $T$. The existence of its {\it positive} solutions will be established in Subsection \ref{sec1cond}.

The scaling function $\lambda(\cdot)$  in determining (\ref{rhobar}) is then given by
\begin{equation}\label{thetabar}
\lambda({t}):=\sqrt{-\Lambda'(t)/|\theta(t)|^2}>0,\;\;t\in[0,T),
%\bar\theta({t})\trieq \frac{h(t,\sqrt{\Lambda(t)})\Lambda(t)}{|h'(t,\sqrt{\Lambda(t)})|}\theta(t).
%\exp\left(-\frac{1}{2}\int_{0}^{t}|\lambda(s)\theta(s)|^{2}ds-\int_{0}^{t}\lambda(s)\theta(s)^{\top}dW(s)\right)
\end{equation}
where $\Lambda(\cdot)$ is a positive solution of (\ref{Lambdaeq}).

\subsection{Terminal wealth of an equilibrium strategy}

The following result gives a complete solution to our problem by presenting the explicit terminal wealth profile of an equilibrium strategy.

\begin{theorem}\label{mainth}
Assume that  equation (\ref{Lambdaeq}) admits a solution $\Lambda(\cdot)\in C[0,T]\cap C^1[0,T)$ with $\Lambda(t)>0$ $\forall t\in [0, T)$, and that
the following inequality holds  for any $c\in \R$:
\begin{equation}\label{2ndcond}
  \int_{-\infty}^{+\infty} w_p'\left(t,N \left(\frac{c -g(x)}{\sqrt{\Lambda(t)}}\right)\right)
  N'\left(\frac{c -g(x)}{\sqrt{\Lambda(t)}}\right)
      \left(g^\dpm(x)+\frac{c -g(x)}{\Lambda(t)}g'(x)^2 \right) du(x)\geq 0,\;\; a.e. t\in [0, T).
\end{equation}
Moreover, assume there is $\kappa>0$ such that the following holds
%\footnote{{\bf In the original version of this paper, the statement of this theorem was independent of the initial wealth $x_0$, namely the budget constraint was ignored. I have now added the following assumption, although there are still some issues here. For example, why $\kappa>0$ exists? It would be nice to give an example to illustrate this assumption; for example when $u(x)=1-e^{-\alpha x}$ and $w$ is the K--T weighting function.
%}}
\begin{equation}\label{bc}
\E\left[\rho(T)I(\kappa\bar\rho(T))\right]=x_0
\end{equation}
where $x_0>0$ is the initial endowment of the agent at $t=0$.
Then the portfolio replicating the terminal wealth
\begin{equation}\label{terminal}
X({T}):=%I\left(e^{\frac{1}{2}\Lambda(0)}\bar\rho(T)\right),
I\left(\kappa\bar\rho(T)\right)
\end{equation}
where $\bar\rho(T)$ is determined through (\ref{rhobar}) -- (\ref{thetabar}),
is an equilibrium strategy.
\end{theorem}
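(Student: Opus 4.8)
The plan is to verify Definition~\ref{equdef} directly for the candidate terminal wealth $X(T)=I(\kappa\bar\rho(T))$. Fix $t\in[0,T)$ and an $\cF_t$-measurable vector $k$, write $D:=k^\top\Delta(t,\varepsilon)$ and $\Phi(\varepsilon):=J(X(T)+D;t,X(t))$; the goal is $\limsup_{\varepsilon\downarrow0}\varepsilon^{-1}(\Phi(\varepsilon)-\Phi(0))\le0$. \textbf{Step~1.} From (\ref{thetabar}) and the ODE (\ref{Lambdaeq}) one has $|\lambda(s)\theta(s)|^2=-\Lambda'(s)$, whence $\int_t^T|\lambda(s)\theta(s)|^2ds=\Lambda(t)$. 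Writing $g:=l=-\ln u'$ (so $u'=e^{-g}$, $g'>0$) and $\psi:=g^{-1}$, the relation $u'(X(T))=\kappa\bar\rho(T)$ gives $g(X(T))=C+Z$, where $C:=-\ln\kappa-\ln\bar\rho(t)+\tfrac12\Lambda(t)$ is $\cF_t$-measurable and $Z:=\int_t^T\lambda(s)\theta(s)^\top dW(s)$ is, conditionally on $\cF_t$, $N(0,\Lambda(t))$. Hence $\p_t(X(T)>x)=N\big(\tfrac{C-g(x)}{\sqrt{\Lambda(t)}}\big)$. Differentiating $J$ in $\varepsilon$ and substituting $y=u(x)$, I record the derivative form
\[
\partial_\varepsilon\Phi(\varepsilon)=\int_{-\infty}^{+\infty}w_p'\big(t,\p_t(X(T)+D>x)\big)\,\partial_\varepsilon\p_t(X(T)+D>x)\,u'(x)\,dx ,
\]
so that at $\varepsilon=0$ the weight is $w_p'\big(t,N(\tfrac{C-g(x)}{\sqrt{\Lambda(t)}})\big)$.

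\textbf{Step~2.} Split $Z=\eta_1+\eta_2$ with $\eta_1=\int_t^{t+\varepsilon}\lambda\theta^\top dW$, $\eta_2=\int_{t+\varepsilon}^T\lambda\theta^\top dW$ (conditionally independent). Using right-continuity of the coefficients and $\theta^\top\sigma^\top=\mu^\top$, the conditional moments to leading order are $\E[D\mid\cF_t]=k^\top\mu(t)\varepsilon+o(\varepsilon)$, $\mathrm{Cov}(Z,D\mid\cF_t)=\lambda(t)k^\top\mu(t)\varepsilon+o(\varepsilon)$ and $\mathrm{Var}(D\mid\cF_t)=|\sigma(t)^\top k|^2\varepsilon+o(\varepsilon)$, and $(Z,D)$ is conditionally jointly Gaussian up to errors irrelevant at order $\varepsilon$. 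Conditioning on $Z$ (so $D\mid Z$ has mean $\varepsilon k^\top\mu(t)(1+\tfrac{\lambda(t)Z}{\Lambda(t)})$ and variance $|\sigma(t)^\top k|^2\varepsilon+o(\varepsilon)$) and carrying out a second-order boundary-layer expansion of $\p_t(\psi(C+Z)+D>x)$ around the root $C+Z=g(x)$, I expect $\partial_\varepsilon\p_t(X(T)+D>x)\big|_{0^+}=A(x)+B(x)$ with a mean-driven part and a variance-driven part
\[
A(x)=k^\top\mu(t)\Big(1+\tfrac{\lambda(t)(g(x)-C)}{\Lambda(t)}\Big)\tfrac{1}{\sqrt{\Lambda(t)}}N'\big(\tfrac{C-g(x)}{\sqrt{\Lambda(t)}}\big)g'(x),
\]
\[
B(x)=-\tfrac{|\sigma(t)^\top k|^2}{2\sqrt{\Lambda(t)}}\,N'\big(\tfrac{C-g(x)}{\sqrt{\Lambda(t)}}\big)\Big(g^\dpm(x)+\tfrac{C-g(x)}{\Lambda(t)}g'(x)^2\Big).
\]

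\textbf{Step~3.} Inserting $A+B$ into the derivative form of Step~1, the $B$-contribution equals $-\tfrac{|\sigma(t)^\top k|^2}{2\sqrt{\Lambda(t)}}$ times the integral
\[
\int_{-\infty}^{+\infty}w_p'\Big(t,N\big(\tfrac{C-g(x)}{\sqrt{\Lambda(t)}}\big)\Big)N'\big(\tfrac{C-g(x)}{\sqrt{\Lambda(t)}}\big)\Big(g^\dpm(x)+\tfrac{C-g(x)}{\Lambda(t)}g'(x)^2\Big)du(x),
\]
which is exactly the left side of (\ref{2ndcond}) with $c=C$ and is therefore $\ge0$; hence the $B$-contribution is $\le0$. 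For the $A$-contribution I substitute $z=\tfrac{g(x)-C}{\sqrt{\Lambda(t)}}$ and use $u'(x)=e^{-g(x)}=e^{-C}e^{-\sqrt{\Lambda(t)}\,z}$ together with $h(t,x)=\int w_p'(t,N(z))e^{xz}N'(z)dz$ and $h'_x(t,x)=\int w_p'(t,N(z))z\,e^{xz}N'(z)dz$ from Lemma~\ref{hregularity}(ii); this reduces it to
\[
k^\top\mu(t)\,e^{-C}\Big(h(t,\sqrt{\Lambda(t)})-\tfrac{\lambda(t)}{\sqrt{\Lambda(t)}}\,h'_x(t,\sqrt{\Lambda(t)})\Big),
\]
which vanishes precisely because $\lambda(t)=\sqrt{\Lambda(t)}\,h(t,\sqrt{\Lambda(t)})/h'_x(t,\sqrt{\Lambda(t)})$ --- the exact content of (\ref{Lambdaeq})--(\ref{thetabar}). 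Thus $\partial_\varepsilon\Phi|_{0^+}\le0$ for every $t$ and $k$, giving (\ref{key}). Completeness of the market and the budget constraint (\ref{bc}) then furnish an admissible portfolio replicating $X(T)=I(\kappa\bar\rho(T))$, which is the desired equilibrium strategy.

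\textbf{Main obstacle.} The delicate step is the rigorous version of Step~2: converting the formal ``delta-function'' expansion of the perturbed survival function into genuine asymptotics. One must control the non-Gaussian corrections that arise from replacing the It\^o integrals on $[t,t+\varepsilon)$ by their leading Gaussian terms and from the $o(\varepsilon)$ moment errors, justify differentiating $\Phi$ under the integral sign, and interchange $\varepsilon\downarrow0$ with the $dx$-integration uniformly. This is precisely the distributional (rather than sample-path) dependence on parameters flagged in the Introduction; the finiteness and smoothness of $h$ (Lemmas~\ref{2.5to3.1.ii} and \ref{hregularity}) together with the integrability in Assumptions~\ref{uandw} and \ref{swappingextra} are what legitimise the required dominated-convergence and interchange arguments.
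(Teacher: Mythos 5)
Your skeleton coincides with the paper's own proof: conditionally on $\cF_t$, $g(X(T))$ is Gaussian with variance $\Lambda(t)$, the first-order effect of the spike variation is a quadratic form in $\sigma(t)^\top k$, the linear coefficient is annihilated exactly by the relation $\lambda(t)=\sqrt{\Lambda(t)}\,h(t,\sqrt{\Lambda(t)})/h_x'(t,\sqrt{\Lambda(t)})$ coming from (\ref{Lambdaeq})--(\ref{thetabar}), and the quadratic coefficient is controlled by (\ref{2ndcond}); your $A(x)$ and $B(x)$ agree, after the change of variables $x=v(y)$, with the paper's $A(0,y)$ in (\ref{aandb}), and your Step~3 reproduces the manipulations leading to (\ref{1odcon}) and (\ref{2odcon}). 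So the algebraic and conceptual content is right.

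The genuine gap is exactly the part you defer in Step~2 and in your closing paragraph, and it cannot be waved through by citing the assumptions. First, the paper does not perform a formal boundary-layer expansion: it represents the perturbed probability exactly as $\E_t[N(Y(\varepsilon,y))]$, applies It\^o's formula to $N(Y(s))$, and obtains the identity (\ref{nonkey}) only after proving that the stochastic-integral term is a true martingale (Proposition \ref{th4martcond}, which uses Assumption \ref{uandw}-(i) and the prudence condition (iii) to bound $N'(Y(s))g'(m(s))$); this is what turns your ``I expect'' into a theorem. (Incidentally, since the coefficients are deterministic and $k$ is a.s. constant given $\cF_t$, the pair $(Z,D)$ is \emph{exactly} conditionally Gaussian, so non-Gaussian corrections are not the issue; uniformity in the outcome variable is.) Second, and more seriously, the interchange of $\limsup_{\varepsilon\downarrow0}$ with the outcome integral --- inequality (\ref{cv2}) --- is not a routine dominated-convergence step: the difference quotient involves $w_p'$ evaluated at $\E_t[N(Y(\delta,y))]$, and Assumption \ref{uandw}-(iv) permits $w_p'(t,p)$ to blow up like $p^{m}$, $m\in(-1,0)$, as $p\downarrow0$ or $p\uparrow1$, which is precisely the regime attained as $|y|\to\infty$. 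The paper has to build an explicit integrable dominating function $H(y)$ by trading this blow-up against the decay of $\E_t|A(\delta,y)|$ uniformly in small $\delta$ (Lemmas \ref{Adelta} and \ref{EA/EN}, Proposition \ref{y-infty}, Theorem \ref{swapok}), and it is only here that Assumptions \ref{uandw}-(ii) and \ref{swappingextra} actually enter. Without these estimates --- which occupy roughly half of the paper's argument --- the claim that the assumptions ``legitimise the required dominated-convergence and interchange arguments'' is an assertion, not a proof, so the theorem is not yet established by your proposal.
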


\begin{proof}
Denote
$$\cE_{s,t}:=\int_s^t\lambda(v)\theta(v)^\top dW(v)\;\forall 0\leq s\leq t\leq T, \quad f(x):=I(\kappa e^{-\frac{1}{2}\Lambda(0)}e^{-x})=:g^{-1}(x), \quad v(x):=u^{-1}(x),\; x\in\R.$$
It is easy to see that, for any $s\in [0, T)$ and conditional on ${\cal F}_s$, $\cE_{s,t}$ is normal, i.e., $\cE_{s, T}|{\cal F}_s \sim N(0, \Lambda({s}))$, and
(\ref{terminal}) can be rewritten as $X(T)=f(\cE_{0,T})$.

Let $\pi(\cdot)$ be the replicating strategy of $X(T)$, which exists by the market completeness. Moreover, the budget constraint (\ref{bc}) ensures that $\pi(\cdot)$
is an admissible portfolio starting from the initial wealth $x_0$.
The goal is to prove that $\pi(\cdot)$ is an equilibrium strategy.

Fix $t\in [0, T)$. Consider the perturbed strategy described in Subsection 2.3 with
the perturbed  final wealth
$X(T)+k^\top \Delta(t,\varepsilon)$, where $\varepsilon\in(0,T-t)$.  To compute the RDU value of this perturbed strategy, we first calculate, for any $y\in\R $:
\begin{eqnarray*}
  \p_t\left(u(X(T)+k^\top \Delta(t,\varepsilon)) >y\right)&=&\p_t\left(X(T)>v( y)-k^\top \Delta(t,\varepsilon)\right)\\
  &=&\p_t\left(\cE_{t+\varepsilon, T}>g(v(y)-k^\top \Delta(t,\varepsilon))-\cE_{0, t+\varepsilon}\right)\\
  &=&\E_t\left[\p_{t+\varepsilon}\left(\cE_{t+\varepsilon, T}>g(v(y)-k^\top \Delta(t,\varepsilon)\right)-\cE_{0, t+\varepsilon})\right]\\
  &=&\E_t\left[N\left(\frac{\cE_{0, t+\varepsilon}-g(v( y)-k^\top \Delta(t,\varepsilon))}{\sqrt{\Lambda({t+\varepsilon})}}\right)\right],
\end{eqnarray*}
where  $\p_{t}=\p(\cdot|\cF_{t})$ and $ \E_{t}=\E[\cdot|\cF_{t}]$.

Denote  $m(s,y)=v(y)-k^\top \Delta(t,s)$ and $Y(s,y)=\frac{\cE_{0, t+s}-g(m({s,y}))}{\sqrt{\Lambda({t+s})}}$, $s\in[0,\varepsilon)$, $y\in\R$.
Applying It\^o's formula and noting that $\Lambda'(t+s)=-\lambda(t+s)^2|\theta(t+s)|^2$, we derive\footnote{To save space,
we will omit to write out the dependence of $m$ and $Y$ in $y$ from this point of the proof, except when it is important to spell out this
dependence. However, the reader is urged to bear in mind this dependence while reading the proof.}
\begin{eqnarray*}
dY(s)&=%&\frac{1}{\sqrt{\Lambda_{t+s}}}
               %\left[(g'(m_s)k^\top \mu_{t+s}-\frac{g^\dpm(m_s)}{2}|k^\top \sigma_{t+s}|^2)ds
               %+(\lambda_{t+s}\theta_{t+s}^\top+g'(m_s)k^\top \sigma_{t+s})dW_{t+s}\right]\\
              %&&+ \frac{Y_s}{2\Lambda_{t+s}}\lambda_{t+s}^2\theta_{t+s}^2ds,\\
%&=
& \left[\frac{Y(s)}{2\Lambda({t+s})}\lambda(t+s)^2|\theta(t+s)|^2
               +\frac{g'(m(s))k^\top \mu({t+s})}{\sqrt{\Lambda({t+s})}}
               -\frac{g^\dpm(m(s))|k^\top \sigma({t+s})|^2}{2\sqrt{\Lambda({t+s})}}\right]ds\\
          &&    +\frac{\lambda({t+s})\theta({t+s})^\top +g'(m(s))k^\top \sigma({t+s})}{\sqrt{\Lambda({t+s})}}dW({t+s}),\;\;s\in[0,\varepsilon).
     \end{eqnarray*}
     Applying It\^o's formula again yields
  \begin{eqnarray*}
dN(Y(s))&=&N'(Y(s))dY(s)+\frac{1}{2\Lambda({t+s})}N^\dpm(Y(s))|\lambda({t+s})\theta({t+s})^\top +g'(m(s))k^\top \sigma({t+s})|^2ds\\
     &=& N'(Y(s))dY(s)-\frac{1}{2\Lambda({t+s})}N'(Y(s))Y(s)|\lambda({t+s})\theta({t+s})^\top +g'(m(s))k^\top \sigma({t+s})|^2ds\\
     &=&A(s)ds+B(s)dW({t+s}),\;\;s\in[0,\varepsilon),
\end{eqnarray*}
where
\begin{equation}\label{aandb}
\begin{array}{lcl}
A(s)&:=&\frac{N'(Y(s))}{\sqrt{\Lambda({t+s})}}\left[g'(m(s))k^\top \mu({t+s})-\frac{g^\dpm(m(s))}{2}|\sigma({t+s})^\top k|^2\right.\\
    &&             \left. -\frac{Y(s) g'(m(s))^2}{2\sqrt{\Lambda({t+s})}}| \sigma({t+s})^\top k |^2-\frac{Y(s) g'(m(s))}{\sqrt{\Lambda({t+s})}}
    k^\top\sigma({t+s})\lambda({t+s})\theta({t+s})\right]\\
  &=&\frac{N'(Y(s))}{\sqrt{\Lambda({t+s})}}
                  \left[-\frac{1}{2}\left(g^\dpm(m(s))+g'(m(s))^2\frac{Y(s)}{\sqrt{\Lambda({t+s})}}\right)|\sigma({t+s})^\top k|^2\right.\\
                  &&             \left.+
                  g'(m(s))\left(1-\frac{Y(s)\lambda({t+s})}{\sqrt{\Lambda({t+s})}}\right)\theta({t+s})^\top \sigma({t+s})^\top k\right],\\
B(s)&:=&\frac{N'(Y(s))}{\sqrt{\Lambda({t+s})}}\left[ \lambda({t+s})\theta({t+s})^\top+g'(m(s))k^\top \sigma({t+s})\right].
\end{array}
\end{equation}
Integrating from $s=0$ to $s=\delta$, where $\delta\in(0,\varepsilon]$, and then taking conditional expectations on the above, we obtain
\begin{equation}\label{nonkey}
\E_t[N(Y(\delta))]=N(Y(0))+\int_0^\delta\E_t [A(s)]ds,
\end{equation}
where we have used the fact that $\int_0^{\cdot}B(s)dW({t+s})$ is a martingale on $[0, \varepsilon]$ for sufficiently small $\varepsilon>0$ %under Assumption \ref{uandw}-(i)
and hence
\begin{equation}\label{martingale}
\E_t\int_0^{\delta}B(s)dW({t+s})=0,\;\;\delta\in(0,\varepsilon].
\end{equation}
A proof of this martingality will be delayed to  Subsection \ref{martingalecheck}. Now, we have
%, we will prove that given Assumption \ref{uandw}.(i), $B_{\cdot}$ is a martingale in $s\in [0, \epsilon)$, by which we have
% $\E_t[N(Y_\epsilon)]=N(Y_0)+\int_0^\epsilon\E_t [A_s]ds$.
\begin{eqnarray*}
  &&\limsup_{\epsilon\downarrow 0} \frac{w\left(t,\p_t(u(X(T)+k^\top \Delta(t,\varepsilon)) >y)\right)-
  w\left(t,\p_t(u(X(T)) >y)\right)}{\varepsilon}\\
&=&\limsup_{\varepsilon\downarrow 0} \frac{w(t,\E_t[ N(Y(\varepsilon))])- w(t, N(Y(0)))}{\varepsilon}\\
&=&\limsup_{\varepsilon\downarrow 0} \frac{w\left(t,N(Y(0))+\int_0^\varepsilon\E_t [A(s)]ds\right)- w(t, N(Y(0)))}{\varepsilon}\\
&=&w_p'(t,N(Y(0)))\cdot \limsup_{\varepsilon\downarrow 0} \frac{1}{\varepsilon}\int_0^\varepsilon\E_t [A(s)]ds\\
&=& w_p'(t,N(Y(0)))A(0).
\end{eqnarray*}

Hence (from this point on we will write back the variable $y$)
{\small \begin{equation}\label{cv}
         \begin{array}{lcl}
  && \limsup_{\varepsilon\downarrow 0} \frac{J(X(T)+k^\top\Delta(t,\varepsilon);t,X(t))-J(X(T);t,X(t))}{\varepsilon}\\
  &=& \limsup_{\varepsilon\downarrow 0} \int_{-\infty}^{+\infty } \frac{w(t,\E_t[ N(Y(\varepsilon,y))])- w(t, N(Y(0,y)))}{\varepsilon}dy\\
  &\leq& \int_{-\infty}^{+\infty } \limsup_{\varepsilon\downarrow 0} \frac{w(t,\E_t[ N(Y(\varepsilon,y))])- w(t, N(Y(0,y)))}{\varepsilon}dy\\
    &=&\int_{-\infty}^{+\infty  } w_p'(t,N(Y(0,y)))A(0,y)dy\\
  &=&-|\sigma({t})^\top k|^2 \frac{1}{2\sqrt{\Lambda({t})}}
         \int_{-\infty}^{+\infty  } w_p'(t,N(Y(0,y)))N'(Y(0,y))\left(g^\dpm(m(0,y))+g'(m(0,y))^2\frac{Y(0,y)}{\sqrt{\Lambda({t})}}\right)dy\\
  &&+\theta(t)^\top \sigma(t)^\top k\frac{1}{\sqrt{\Lambda({t})}}
         \int_{-\infty}^{+\infty  } w_p'(t,N(Y(0,y)))N'(Y(0,y))g'(m(0,y))\left(1-\frac{Y(0,y)}{\sqrt{\Lambda({t})}}\lambda({t})\right)dy.
         \end{array}
\end{equation}}
In the above, the first inequality
{\begin{equation}\label{cv2}\begin{array}{rl}
&\limsup_{\varepsilon\downarrow 0} \int_{-\infty}^{+\infty  } \frac{w(t,\E_t[ N(Y(\varepsilon,y))])- w(t, N(Y(0,y)))}{\varepsilon}dy\\
  \leq&\int_{-\infty}^{+\infty  } \limsup_{\varepsilon\downarrow 0} \frac{w(t,\E_t[ N(Y(\varepsilon,y))])- w(t, N(Y(0,y)))}{\varepsilon}dy
  \end{array}
  \end{equation}}
will be proved in Subsection \ref{swap-cond}. % under Assumptions \ref{uandw}-(ii) and -(iii).
%Notice that we assume that {\bf $\lambda, \sigma, \theta$ are all right continuous}.

By Definition \ref{key}, the underlying strategy is an equilibrium if the right hand side of (\ref{cv})  is non-positive for {\it any} $k\in\R^n$.\footnote{Recall that
$t$ is fixed; so any ${\cal F}_t$-measurable random vector $k$ is almost surely  deterministic conditional on ${\cal F}_t$.}
However, the right hand side, being quadratic in  $\sigma(t)^\top k$ while $\sigma(t)$ is invertible,  is non-positive for any $k$
if and only if (noting $\theta(t)\neq 0$)
%any value of $\cE_{0, t}$, i.e.,
\begin{equation}\label{1odcon}
  \int_{-\infty}^{+\infty  } w_p'(t,N(Y(0,y)))N'(Y(0,y))g'(m(0,y))\left(1-\frac{Y(0,y)}{\sqrt{\Lambda({t})}}\lambda({t})\right)dy=0,
  %0=\int_{-\infty}^{\but } w'_t( N(Y_0))N'(Y_0)g'(m_0)\left(1-\frac{Y_0}{\sqrt{\Lambda_{t}}}\lambda_{t}\right)dy
\end{equation}
and
\begin{equation}\label{2odcon}
\int_{-\infty}^{+\infty } w_p'(t,N(Y(0,y)))N'(Y(0,y))\left(g^\dpm(m(0,y))+g'(m(0,y))^2\frac{Y(0,y)}{\sqrt{\Lambda({t})}}\right)dy\geq0.
%0\le\int_{-\infty}^{\but } w'_t( N(Y_0))N'(Y_0)\left(g^\dpm(m_0)+g'(m_0)^2\frac{Y_0}{\sqrt{\Lambda_{t}}}\right)dy.
\end{equation}

Recall that $m(0,y)=v( y)$ and $Y(0,y)=\frac{\cE_{0,t}-g(v( y))}{\sqrt{\Lambda(t)}}$. By changing variables $x=v(y)$ and
$z=\frac{\cE_{0,t}-g(x)}{\sqrt{\Lambda({t})}}$, the equation (\ref{1odcon}) becomes
\begin{eqnarray*}
  0&=&\int_{-\infty}^{+\infty}w_p'\left(t,N \left(\frac{\cE_{0,t}-g(x)}{\sqrt{\Lambda(t)}}\right)\right)
            N'\left(\frac{\cE_{0,t}-g(x)}{\sqrt{\Lambda(t)}}\right)g'(x)
           \left(1-\frac{\cE_{0,t}-g(x)}{\Lambda(t)}\lambda({t})\right)u'(x)dx\\
&=&-\int_{-\infty}^{+\infty}w_p'\left(t,N \left(\frac{\cE_{0,t}-g(x)}{\sqrt{\Lambda(t)}}\right)\right)
           N'\left(\frac{\cE_{0,t}-g(x)} {\sqrt{\Lambda(t)}}\right)
          \left(1-\frac{\cE_{0,t}-g(x)}{\Lambda(t)}\lambda({t})\right)du'(x)\\
&=&-\int_{ -\infty}^{+\infty} w_p'(t,  N(z)) N'(z)
      \left(1-\frac{z}{\sqrt{\Lambda(t)}}\lambda({t})\right)de^{\sqrt{\Lambda({t})}z-\cE_{0,t}}\\
&=&-e^{-\cE_{0,t}}\int_{ -\infty}^{+\infty}  w_p'(t,  N(z)) N'(z)
      (\sqrt{\Lambda(t)}-z\lambda({t})) e^{\sqrt{\Lambda({t})}z}dz\\
 &=&-e^{-\cE_{0,t}}\E\left[w_p'(t,N(\xi))\left(\sqrt{\Lambda(t)}-\xi\lambda({t})\right) e^{\sqrt{\Lambda({t})}\xi}\right],
\end{eqnarray*}
where $\xi\sim N(0,1)$.
This equation is further equivalent to
$$\sqrt{\Lambda(t)}\E\left[w_p'(t,N(\xi))e^{\sqrt{\Lambda({t})}\xi}\right]
=\lambda({t})\E\left[w_p'(t,N(\xi))\xi e^{\sqrt{\Lambda({t})}\xi}\right],$$
or
$$\lambda({t})=\frac{h(t,\sqrt{\Lambda({t})})}{h'(t,\sqrt{\Lambda({t})})}\sqrt{\Lambda(t)},$$
which holds true by the facts that $\Lambda$ solves equation (\ref{Lambdaeq})
and $\Lambda'({t})=-\lambda(t)^2|\theta(t)|^2$.

Similarly, we can rewrite the inequality (\ref{2odcon}) as
\begin{equation}\label{2ndcond0}
  \int_{-\infty}^{+\infty} \hspace{-0.4cm}w_p'\left(t, N \left(\frac{\cE_{0,t} -g(x)}{\sqrt{\Lambda(t)}}\right)\right)N'\left(\frac{\cE_{0,t} -g(x)}{\sqrt{\Lambda(t)}}\right)
      \left(g^\dpm(x)+g'(x)^2\frac{\cE_{0,t} -g(x)}{\Lambda(t)} \right) du(x)\geq0,
\end{equation}
which is satisfied under (\ref{2ndcond}).
\end{proof}

%In this theorem, that (\ref{bc}) holds for some $\kappa>0$ is the standard budget constraint. If it fails, than it is usually an indication that the original problem is not well-posed; see Jin et al (2008) for a detailed analysis on this constraint.\footnote{{\bf I'm not very satisfied by these comments to explain this assumption. Are we able to give any condition on the {\it original} model parameters to ensure this assumption? Or are we sure that if this assumption fails then the original problem is not well-posed, as in Jin et al (2008)? Or minimally, can we give a simple example (one stock, Black Sholes market, exponential utility and K--T weighting function) in which this assumption is satisfied?
%\revise{My explanation  on footnote \#7 answers these questions partially. In fact, Jin et al 08 does not provide an exact answer to this question because of the domain of $U$ in that  paper is $\R^+$, while here it is  $\R$.}
%}}

In the above proof, there are two technical results, the martingale condition (\ref{martingale}) and the inequality (\ref{cv2}), left unproved. We provide
proofs in the next two subsections.

\subsection{The martingale condition (\ref{martingale})}\label{martingalecheck}
%It suffices to prove that $B(s)$ is locally square integrable at $s=0+$.
%We keep Assumption  \ref{basicassump},  \ref{h13}, \ref{condintbypt} and \ref{condonu}, so that
%$g(\cdot)$ is regular enough, and $\Lambda(\cdot)$ solves ODE (\ref{Lambdaeq}). Obviously $\Lambda(\cdot)$ is  bounded, and $\frac{1}{-I'(\cdot)}$ is increasing under Assumption \ref{condonu}.

\begin{proposition}\label{th4martcond}
The martingale condition (\ref{martingale}) holds for sufficiently small $\varepsilon>0$.
\end{proposition}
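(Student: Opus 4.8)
The plan is to prove that the process $s\mapsto\int_0^s B(r)\,dW(t+r)$ is a genuine (square-integrable) martingale with respect to $(\cF_{t+s})_{s\ge0}$; since it vanishes at $s=0$ and $\cF_t$ is its initial $\sigma$-field, the martingale property gives $\E_t\int_0^\delta B(s)\,dW(t+s)=0$ for every $\delta\in(0,\varepsilon]$, which is exactly (\ref{martingale}). By the standard $L^2$ criterion it therefore suffices to establish the single integrability bound
\begin{equation*}
\E\int_0^\varepsilon |B(s)|^2\,ds<+\infty\quad\text{for every }\varepsilon\text{ with }t+\varepsilon<T,
\end{equation*}
which in particular covers all sufficiently small $\varepsilon>0$.

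The first step is an algebraic simplification of $g$. Since $f(x)=I(\kappa e^{-\frac{1}{2}\Lambda(0)}e^{-x})$ and $g=f^{-1}$, applying $-\ln u'$ to $w=f(x)$ yields $g(w)=l(w)+\ln\kappa-\frac{1}{2}\Lambda(0)$, so that $g'=l'$ and $g^\dpm=l^\dpm$. This identifies the derivatives of $g$ with those of $l=-\ln u'$, which is exactly the quantity controlled by Assumption \ref{uandw}-(ii): $g'(x)=l'(x)\le e^{a|l(x)|+b}$.

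Next I would bound the integrand $B(s)=\frac{N'(Y(s))}{\sqrt{\Lambda(t+s)}}[\lambda(t+s)\theta(t+s)^\top+g'(m(s))k^\top\sigma(t+s)]$. The factors $N'$, $1/\Lambda(t+s)$, $|\lambda(t+s)\theta(t+s)|^2=-\Lambda'(t+s)$ and $|k^\top\sigma(t+s)|$ are all bounded on the compact interval $[t,t+\varepsilon]\subset[0,T)$, where $\Lambda$ is continuous and strictly positive and $\Lambda'$ is continuous. Hence the only term that could blow up is $N'(Y(s))^2g'(m(s))^2$, and here the Gaussian factor must be kept. Writing $L:=l(m(s))$ and $Z:=\cE_{0,t+s}-(\ln\kappa-\frac{1}{2}\Lambda(0))$ so that $Y(s)^2=(Z-L)^2/\Lambda(t+s)$, the estimate $g'(m(s))^2\le e^{2a|L|+2b}$ combined with $N'(Y(s))^2=\frac{1}{2\pi}e^{-(Z-L)^2/\Lambda(t+s)}$ produces the exponent $2a|L|-(Z-L)^2/\Lambda(t+s)$. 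Completing the square in $L$ (pointwise in $\omega$, treating $L$ as a free real variable and maximizing) bounds this by $2a|Z|+a^2\Lambda(t+s)$, whence
\begin{equation*}
N'(Y(s))^2g'(m(s))^2\le\frac{e^{2b}}{2\pi}\,e^{a^2\Lambda(t+s)}\,e^{2a|\cE_{0,t+s}-(\ln\kappa-\frac{1}{2}\Lambda(0))|}.
\end{equation*}
This is the crux: the Gaussian decay of $N'$ quadratically dominates the at-most-exponential growth of $g'=l'$.

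Finally, since $\cE_{0,t+s}=\int_0^{t+s}\lambda(v)\theta(v)^\top dW(v)$ is centered Gaussian with variance $\Lambda(0)-\Lambda(t+s)\le\Lambda(0)$, the exponential moment $\E[e^{2a|\cE_{0,t+s}|}]$ is finite and bounded uniformly in $s\in[0,\varepsilon]$. Combining this with the boundedness of the remaining factors and integrating over $s\in[0,\varepsilon]$ gives the required $\E\int_0^\varepsilon|B(s)|^2\,ds<+\infty$. The main obstacle is precisely the interaction between the possibly unbounded $g'(m(s))$ and the Gaussian weight $N'(Y(s))$; once one recognizes $g'=l'$ and invokes Assumption \ref{uandw}-(ii), the completing-the-square step converts this into a harmless Gaussian exponential moment, and the correlation between $m(s)$ and $\cE_{0,t+s}$ plays no role because the decisive bound is obtained pointwise in $\omega$.
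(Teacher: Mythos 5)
Your proof is correct, but it takes a genuinely different route from the paper's own argument for this proposition. You reduce, as the paper does, to the square-integrability of $B(\cdot)$, with the only delicate factor being $N'(Y(s))g'(m(s))$; but whereas the paper controls $g'(m(s))$ through Assumption \ref{uandw}-(i) and -(iii) --- namely the polynomial bound $\frac{1}{-I'(y)}\le K(y^\alpha+1)$ (valid for all $y>0$ thanks to prudence) applied to $g'(m(s))=\frac{1}{-\tilde\kappa I'(\tilde\kappa e^{\sqrt{\Lambda(t+s)}Y(s)-\cE_{0,t+s}})}e^{\cE_{0,t+s}-\sqrt{\Lambda(t+s)}Y(s)}$ and then absorbs the resulting exponentials in $Y(s)$ into $N'(Y(s))$ via the boundedness of $N'(x)e^{\gamma x}$ --- you instead use the identity $g=l+\ln\kappa-\frac{1}{2}\Lambda(0)$ and Assumption \ref{uandw}-(ii), $l'(x)\le e^{a|l(x)|+b}$, and complete the square in $L=l(m(s))$ against the Gaussian weight $e^{-(Z-L)^2/\Lambda(t+s)}$ to land on an exponential moment of $|\cE_{0,t+s}|$. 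Your completing-the-square estimate $2a|L|-(Z-L)^2/\Lambda\le 2a|Z|+a^2\Lambda$ is valid (check the two branches $L\ge 0$ and $L\le 0$), the variance computation ${\rm Var}(\cE_{0,t+s})=\Lambda(0)-\Lambda(t+s)$ is right, and the pointwise-in-$\omega$ nature of the bound correctly disposes of the dependence between $m(s)$ and $\cE_{0,t+s}$. In effect you have imported the technique the paper uses later in Lemma \ref{Adelta} (where Assumption \ref{uandw}-(ii) and a square-completion inequality are invoked) into the martingale verification; what this buys is that you never need prudence or the growth condition on $I'$ here, and you get square-integrability on any compact $[t,t+\varepsilon]\subset[0,T)$ rather than only near $s=0+$, whereas the paper's route yields the cleaner explicit bound $N'(Y(s))g'(m(s))\le c_1e^{-(\alpha-1)\cE_{0,t+s}}+c_2e^{\cE_{0,t+s}}$ with constants independent of $\Lambda$. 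Both arguments share the paper's (implicit) convention of treating the $\cF_t$-measurable vector $k$ as conditionally constant; your bound, being pointwise in $\omega$, would survive conditioning on $\cF_t$ in any case.
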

\begin{proof}
As before the variable $y$ is suppressed. It suffices to prove that $B(s)$ is locally square integrable at $s=0+$.
Note that  $\lambda(\cdot), \theta(\cdot), \sigma(\cdot)$ are all bounded, and $\frac{1}{\sqrt{\Lambda(t+s)}}$ is
locally bounded at $s=0+$. It thus follows from the expression of $B(\cdot)$ that we only need to
estimate a bound of $N'(Y(s)) g'(m(s))$.

Because $\limsup_{y\rightarrow +\infty}\frac{y^{-\alpha}}{-I'(y)}<+\infty$ (Assumption \ref{uandw}-(i)) and
$\frac{1}{-I'(y)}\equiv -u^\dpm(I(y))$ is increasing in $y$ (Assumption \ref{uandw}-(iii)), we have that there exists $K>0$ such that
\begin{equation}\label{k1}
\frac{1}{-I'(y)}\le K(y^\alpha+1) \;\; \forall\, y>0.
\end{equation}

Recalling that $m(s)=f(\cE_{0, t+s}-\sqrt{\Lambda({t+s})}Y(s))$ and $f'(x)=-\tilde\kappa I'(\tilde\kappa e^{-x})e^{-x}$
where $\tilde \kappa:=\kappa e^{-\frac{1}{2}\Lambda(0)}$, we have
\begin{eqnarray*}
g'(m(s))&=&g'(f(\cE_{0, t+s}-\sqrt{\Lambda({t+s})}Y(s)))\\
&=&\frac{1}{f'(\cE_{0, t+s}-\sqrt{\Lambda({t+s})}Y(s))}\\
&=&\frac{1}{-\tilde\kappa I'(\tilde\kappa e^{\sqrt{\Lambda({t+s})}Y(s)-\cE_{0, t+s}})}e^{\cE_{0, t+s}-\sqrt{\Lambda({t+s})}Y(s)}\\
&\le& \frac{K}{\tilde\kappa}\left(\tilde\kappa^\alpha e^{(\alpha-1)\sqrt{\Lambda({t+s})}Y(s)-(\alpha-1) \cE_{0, t+s}}+e^{\cE_{0, t+s}-\sqrt{\Lambda({t+s})}Y(s)}\right),
\end{eqnarray*}
where the last inequality is due to (\ref{k1}).
However,  $N'(x)e^{\gamma x}$ is bounded in $x\in \R$ for any given $\gamma\in \R$; hence
$$0< N'(Y(s))g'(m(s))\le c_1 e^{-(\alpha-1) \cE_{0, t+s}}+c_2 e^{\cE_{0, t+s}}.$$
The right hand side above is locally square integrable as a process in $s$. The proof is complete. \end{proof}

\subsection{The inequality (\ref{cv2})}\label{swap-cond}

We now prove the inequality (\ref{cv2}). In view of Fatou's lemma, it suffices to show that
the integrand on its left hand side is dominated by an integrable function.
Throughout this subsection, we keep $t\in [0, T)$ fixed.
%, and $\varepsilon \in (0, T-t)$ a small number.

%\begin{assump}\label{swappingextra}
%There exist constants  $a>0$, $b>0$, such that
%$$\int_{-\infty}^{0}u'(x)^{-a}dx<+\infty, \quad \int_{-\infty}^{+\infty}u'(x)e^{-bx^{2}}dx<+\infty,\;\;\int_0^{\infty}u'(x)^{a}dx<+\infty.$$
%\end{assump}
%
%This assumption is very mild, which is satisfied by most classical utility function on $\R$.

\begin{lemma}\label{Adelta}
For any constant $\gamma\in (0,1]$ and $\varepsilon \in (0, T-t)$, there exists a constant
$c_{1}(\gamma)>0$ such that
$$\frac{|A(\delta,y)|}{N'(Y(\delta,y))^{1-\gamma}}
=\frac{|A(\delta,y)|}{N'(-Y(\delta,y))^{1-\gamma}}\le c_{1}(\gamma) e^{\frac{\gamma}{4\Lambda(t+\delta)}\cE^{2}_{0, t+\delta}} \;\; \forall \delta\in (0,\varepsilon],\;y\in\R.$$
%c_{1}(\gamma)N'(Y_{\delta})^{\gamma/2} e^{a|g(m_{\delta})|+b}.$$ %\le c_{2}e^{-a_{2}g(m_{s})^{2}}.$$
\end{lemma}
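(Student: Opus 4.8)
The plan is to exploit the fact that every term of $A(\delta,y)$ in (\ref{aandb}) carries the common prefactor $N'(Y(\delta))/\sqrt{\Lambda(t+\delta)}$, so that after dividing by $N'(Y(\delta))^{1-\gamma}$ one is left with $N'(Y(\delta))^{\gamma}/\sqrt{\Lambda(t+\delta)}=(2\pi)^{-\gamma/2}\Lambda(t+\delta)^{-1/2}e^{-\gamma Y(\delta)^2/2}$ multiplying a bracket built from $g'(m(\delta))$, $g^\dpm(m(\delta))$, $Y(\delta)$ and the bounded coefficients $\lambda,\theta,\sigma,k$. Since $N'$ is even, the two left-hand sides coincide, so it suffices to estimate the first one. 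The strategy is to bound $g'$ and $g^\dpm$ by exponentials of $|Y(\delta)|$ and $|\cE_{0,t+\delta}|$, then use the Gaussian factor $e^{-\gamma Y^2/2}$ to absorb all the $Y$-dependence and an elementary inequality to convert the linear growth in $\cE_{0,t+\delta}$ into the target quadratic exponent.

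The crucial structural observation is that $g$ is, up to an additive constant, exactly the function $l=-\ln u'$. Indeed $g=f^{-1}$ with $f(x)=I(\tilde\kappa e^{-x})$ and $\tilde\kappa=\kappa e^{-\Lambda(0)/2}$, and inverting $f(x)=w$ (apply $u'$) gives $x=-\ln u'(w)+\ln\tilde\kappa$, i.e. $g(w)=l(w)+\ln\tilde\kappa$. Hence $g'=l'$ and $g^\dpm=l^\dpm$, and Assumption \ref{uandw}-(ii) yields $g'(m(\delta)),\,|g^\dpm(m(\delta))|\le e^{a|l(m(\delta))|+b}$. Since $Y(\delta)=(\cE_{0,t+\delta}-g(m(\delta)))/\sqrt{\Lambda(t+\delta)}$ gives $l(m(\delta))=g(m(\delta))-\ln\tilde\kappa=\cE_{0,t+\delta}-\sqrt{\Lambda(t+\delta)}\,Y(\delta)-\ln\tilde\kappa$, I would obtain bounds of the form $g'(m(\delta)),\,|g^\dpm(m(\delta))|\le C\,e^{a|\cE_{0,t+\delta}|+a\sqrt{\Lambda(t+\delta)}\,|Y(\delta)|}$, with $C$ depending only on $a,b,\tilde\kappa$.

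Feeding these into (\ref{aandb}) and using that $\lambda,\theta,\sigma$ are bounded, that $k$ is fixed, and that $\Lambda(t+\delta)^{\pm1/2}$ is bounded above and below by positive constants for $\delta\in(0,\varepsilon]$ (because $\Lambda$ is continuous and strictly positive on the compact set $[t,t+\varepsilon]\subset[0,T)$), the bracket is dominated by a polynomial in $Y(\delta)$ times $e^{2a|\cE_{0,t+\delta}|+2a\sqrt{\Lambda(t+\delta)}\,|Y(\delta)|}$, the square $g'(m)^2$ being the worst term. Multiplying by the Gaussian $e^{-\gamma Y(\delta)^2/2}$ I would complete the square in $Y(\delta)$: writing $-\tfrac{\gamma}{2}Y^2+2a\sqrt{\Lambda}\,|Y|\le -\tfrac{\gamma}{4}Y^2+\tfrac{4a^2\Lambda}{\gamma}$ absorbs the linear term and leaves $e^{-\gamma Y^2/4}$, which kills the polynomial factor and yields a constant. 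For the $\cE$-dependence I would invoke the elementary bound $2a|\cE|\le \tfrac{\gamma}{4\Lambda}\cE^2+\tfrac{4a^2\Lambda}{\gamma}$, valid with a uniform constant because $\tfrac{\gamma}{4\Lambda(t+\delta)}$ is bounded below away from $0$; this produces precisely the factor $e^{\frac{\gamma}{4\Lambda(t+\delta)}\cE_{0,t+\delta}^2}$ with all remaining contributions collected into $c_1(\gamma)$.

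The main obstacle is not any single estimate but the structural reduction that makes them routine: recognizing $g=l+\text{const}$ so that the derivative bounds are handed to us directly by Assumption \ref{uandw}-(ii), and then carefully separating the roles of $Y(\delta)$ (to be absorbed by the Gaussian) and of $\cE_{0,t+\delta}$ (to be absorbed by the target quadratic exponent). The one point needing genuine care is the apparent mismatch between the linear growth $e^{2a|\cE|}$ produced on the left and the quadratic exponent $e^{\frac{\gamma}{4\Lambda}\cE^2}$ demanded on the right; this is harmless precisely because $\Lambda(t+\delta)$ stays bounded away from $0$ for $\delta\in(0,\varepsilon]$ (as $\varepsilon<T-t$), so the quadratic term dominates with a constant uniform in $\delta$ and $y$.
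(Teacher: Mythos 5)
Your proposal is correct and takes essentially the same route as the paper's own proof: both rest on the observation that $g=l+\mathrm{const}$ so that Assumption \ref{uandw}-(ii) bounds $g'(m(\delta))$ and $g^\dpm(m(\delta))$ by $e^{a|g(m(\delta))|+b}$, and both then let the Gaussian factor $e^{-\gamma Y(\delta)^2/2}$ coming from $N'(Y(\delta))^{\gamma}$ absorb that exponential growth, leaving precisely the factor $e^{\frac{\gamma}{4\Lambda(t+\delta)}\cE_{0,t+\delta}^{2}}$ with a constant uniform in $\delta\in(0,\varepsilon]$ and $y$ because $\Lambda$ is bounded above and away from zero on $[t,t+\varepsilon]\subset[0,T)$. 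The only difference is cosmetic: the paper completes the square via $(x+y)^{2}\ge\tfrac12 x^{2}-y^{2}$ applied to $(\cE_{0,t+\delta}-g(m(\delta)))^{2}$, whereas you first split $|g(m(\delta))|\le|\cE_{0,t+\delta}|+\sqrt{\Lambda(t+\delta)}\,|Y(\delta)|$ and absorb the two pieces separately.
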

\begin{proof}  As before, we drop the $y$ variable to save space in this proof. 
Since $g(x)=l(x)+\ln \kappa -\frac{1}{2}\Lambda(0)$, Assumption \ref{uandw}-(ii) is satisfied with $l(\cdot)$ replaced by $g(\cdot)$ (and the constants $a$ and $b$ properly modified). Hence, 
\begin{eqnarray*}
\frac{|A(\delta)|}{N'(Y(\delta))^{1-\gamma}}
&\le &c_{2}(\gamma)N'(Y(\delta))^{\gamma}(e^{a|g(m(\delta))|+b}+|Y(\delta)|e^{2a|g(m(\delta))|+2b})\\
&= &\frac{c_{2}(\gamma)}{\sqrt{2\pi} }e^{-\frac{\gamma}{2}Y(\delta)^{2}}(e^{a|g(m(\delta))|+b}+|Y(\delta)|e^{2a|g(m(\delta))|+2b})\\
&\le&c_{3}(\gamma)e^{-\frac{\gamma}{4}Y(\delta)^{2}}e^{2a|g(m(\delta))|+2b}\\
&=&c_{3}(\gamma)e^{-\frac{\gamma}{4\Lambda(t+\delta)}(\cE_{0, t+\delta}-g(m(\delta)))^{2}}e^{2a|g(m(\delta))|+2b},
\end{eqnarray*}
where $c_{2}(\gamma)>0$ and $c_{3}(\gamma)>0$ are suitable constants.
%Since $(x+y)^{2}\ge  \alpha x^{2}- \beta y^{2}$ for any $\alpha\in (0,1)$ and $\beta=\frac{\alpha}{1-\alpha}$,
%we take $\alpha=1/2$, then
Making use of the general inequality
$(x+y)^{2}\ge  \frac{1}{2}x^{2}- y^{2}$, we deduce
\begin{eqnarray}
\frac{|A(\delta)|}{N'(Y(\delta))^{1-\gamma}}
&\le&c_{3}(\gamma)e^{-\frac{\gamma}{4\Lambda(t+\delta)}(\cE_{0, t+\delta}-g(m(\delta)))^{2}}e^{2a|g(m(\delta))|+2b}\nonumber\\
&\le&c_{3}(\gamma)e^{\frac{ \gamma}{4\Lambda(t+\delta)}\cE_{0, t+\delta}^{2}}
e^{-\frac{ \gamma}{8\Lambda(t+\delta)}g(m(t+\delta))^{2}+2a|g(m(t+\delta))|+2b}\nonumber\\
&\le&c_{3}(\gamma)e^{n_{1}(\gamma)}e^{\frac{ \gamma}{4\Lambda(t+\delta)}\cE_{0, t+\delta}^{2}},
\label{estofAdel}
\end{eqnarray}
where $n_{1}(\gamma):=\max_{x\in \R} \{-\frac{ \gamma}{8\Lambda_{t+\delta}}x^{2}+2a|x|+2b\}<+\infty$. This completes the proof.
\end{proof}

\begin{lemma}\label{EA/EN}
For any $\gamma\in (0,1)$, there exists $\varepsilon\in (0, T-t)$ and a constant $M_{1}(\gamma)<+\infty$  such that
$$\frac{\E_{t}|A(\delta,y)|}{[\E_{t}N(Y(\delta,y))]^{\gamma}}\le M_{1}(\gamma) \quad \forall\, \delta\in (0, \epsilon],\;y\in\R.$$
\end{lemma}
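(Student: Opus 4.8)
The plan is to combine the pointwise estimate of Lemma~\ref{Adelta} with Hölder's inequality and two elementary facts about the standard normal, working throughout conditionally on $\cF_t$ (recall $t$ is fixed). The key preliminary observation is that for every $\kappa\in(0,1)$ the quantity $N'(x)/N(x)^{\kappa}$ is bounded on $\R$: it vanishes as $x\to+\infty$ because $N(x)\to1$, and by the Mills-ratio asymptotics $N(x)\sim N'(x)/|x|$ it behaves like $N'(x)^{1-\kappa}|x|^{\kappa}\to0$ as $x\to-\infty$; being continuous and strictly positive in between, it admits a finite supremum $C_\kappa:=\sup_{x\in\R}N'(x)/N(x)^{\kappa}$. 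Hence $N'(x)\le C_\kappa N(x)^{\kappa}$ pointwise, and since $z\mapsto z^{\kappa}$ is concave, Jensen's inequality yields $\E_t N'(Y(\delta,y))\le C_\kappa\,[\E_t N(Y(\delta,y))]^{\kappa}$. This is the estimate that will produce the denominator.

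I would then fix any $\beta\in(0,1-\gamma)$ and put $\kappa:=\gamma/(1-\beta)$, noting that $1-\beta>\gamma$ forces $\kappa\in(0,1)$. Applying Lemma~\ref{Adelta} with its free parameter equal to $\beta$ gives $|A(\delta,y)|\le c_1(\beta)\,N'(Y(\delta,y))^{1-\beta}\,e^{\frac{\beta}{4\Lambda(t+\delta)}\cE_{0,t+\delta}^{2}}$, with $c_1(\beta)$ independent of $y$. Taking $\E_t$ and applying Hölder with conjugate exponents $\tfrac{1}{1-\beta}$ and $\tfrac{1}{\beta}$ separates the two factors,
\[
\E_t|A(\delta,y)|\le c_1(\beta)\,\big(\E_t N'(Y(\delta,y))\big)^{1-\beta}\Big(\E_t e^{\frac{1}{4\Lambda(t+\delta)}\cE_{0,t+\delta}^{2}}\Big)^{\beta}.
\]
Substituting the normal estimate from the first step and using $\kappa(1-\beta)=\gamma$ turns the first factor into $C_\kappa^{1-\beta}[\E_t N(Y(\delta,y))]^{\gamma}$, which is the desired shape once the Gaussian factor is shown to be bounded.

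It remains to bound $\E_t e^{\frac{1}{4\Lambda(t+\delta)}\cE_{0,t+\delta}^{2}}$ uniformly for small $\delta$, and this is exactly where the existence of $\varepsilon$ comes in. Conditional on $\cF_t$ the variable $\cE_{0,t+\delta}=\cE_{0,t}+\cE_{t,t+\delta}$ is normal with mean $\cE_{0,t}$ and variance $\sigma_\delta^{2}:=\int_t^{t+\delta}\lambda(v)^2|\theta(v)|^2\,dv=\Lambda(t)-\Lambda(t+\delta)$, so the Gaussian moment is finite precisely when $\sigma_\delta^{2}<2\Lambda(t+\delta)$. Since $\sigma_\delta^{2}\to0$ while $\Lambda(t+\delta)\to\Lambda(t)>0$ as $\delta\downarrow0$, I can choose $\varepsilon\in(0,T-t)$ so small that this holds for every $\delta\in(0,\varepsilon]$; the moment is then bounded uniformly in $\delta\le\varepsilon$, and it carries no $y$-dependence at all. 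Collecting the constants into a single ($\cF_t$-measurable, a.s.\ finite) $M_1(\gamma)$ finishes the proof.

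The main obstacle is the left-tail competition between $N'$ and $N$, which is also why the statement is confined to $\gamma<1$: the naive choice $\beta=\gamma$ would give $\kappa=\gamma/(1-\gamma)$, admissible only for $\gamma<1/2$, and the Mills-ratio blow-up $N'/N\sim|x|$ rules out $\kappa=1$. Taking $\beta$ strictly below $1-\gamma$ is precisely what keeps $\kappa<1$ for every $\gamma\in(0,1)$. The remaining care is uniformity in $y$, but that is already guaranteed by Lemma~\ref{Adelta} (whose bound is $y$-free) and by the universality of $C_\kappa$.
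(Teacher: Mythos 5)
Your proof is correct, but it runs along a genuinely different decomposition than the paper's. The paper pulls the denominator inside a single conditional expectation: writing $m=1/\gamma$, $n=m/(m-1)$, it applies H\"older to get $\frac{\E_t|A(\delta)|}{[\E_t N(Y(\delta))]^{\gamma}}\le\bigl(\E_t[|A(\delta)|^{n}N(Y(\delta))^{-n/m}]\bigr)^{1/n}$, then bounds the integrand pointwise by factoring it as $\bigl(\tfrac{N'}{N}\bigr)^{n/m}\bigl(\tfrac{|A|}{N'^{(\gamma+1)/2}}\bigr)^{n}N'^{\,n(1-\gamma)/2}$, using $\tfrac{N'(x)}{N(x)}\le C(|x|+1)$ together with the surplus Gaussian factor $N'^{\,n(1-\gamma)/2}$, and invoking Lemma \ref{Adelta} with parameter $(1-\gamma)/2$; what remains is a conditional Gaussian exponential moment $\E_t[e^{\cE_{0,t+\delta}^{2}/(8\Lambda(t+\delta))}]$, bounded for small $\varepsilon$. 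You instead bound the numerator first (Lemma \ref{Adelta} with parameter $\beta<1-\gamma$), split off the Gaussian factor by H\"older with exponents $(\tfrac{1}{1-\beta},\tfrac{1}{\beta})$, and then compare $\E_t N'(Y)$ with $[\E_t N(Y)]^{\kappa}$ via the pointwise bound $N'\le C_\kappa N^{\kappa}$ ($\kappa=\gamma/(1-\beta)<1$, justified by the Mills-ratio tail) plus conditional Jensen. The two arguments use the same three ingredients (Lemma \ref{Adelta}, a left-tail comparison of $N'$ against $N$, and a uniform conditional Gaussian moment for small $\delta$), but your mechanism for matching numerator against denominator -- the $N'\le C_\kappa N^\kappa$ inequality combined with Jensen -- is different from, and arguably cleaner than, the paper's three-factor pointwise estimate; it also makes transparent why $\gamma<1$ is the natural restriction. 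Two further remarks: your explicit verification of the Gaussian moment (conditional variance $\Lambda(t)-\Lambda(t+\delta)$ versus $2\Lambda(t+\delta)$, with $\varepsilon$ chosen accordingly) supplies a detail the paper only asserts, though your exponent $\tfrac{1}{4\Lambda}$ is slightly more demanding than the paper's $\tfrac{1}{8\Lambda}$ -- both are fine for small $\varepsilon$; and your bound $M_1(\gamma)$ is $\cF_t$-measurable rather than deterministic, which matches what the paper's own proof actually delivers (its $M_3(\gamma)$ also depends on $\cE_{0,t}$) and suffices for the subsequent conditional domination argument in Theorem \ref{swapok}.
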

\begin{proof} Again we omit to write out $y$. Denote $m=1/\gamma$ and $n=\frac{m}{m-1}$. Then Cauchy--Schwarz's inequality yields
\begin{eqnarray*}
\frac{\E_{t}|A(\delta)|}{[\E_{t}N(Y(\delta))]^{\gamma}}
&\equiv &\frac{\E_{t}|A(\delta)|}{[\E_{t}N(Y(\delta))]^{1/m}}\\
&\le & \left(\E_t\left[|A(\delta)|^nN(Y(\delta))^{-n/m}\right]\right)^{1/n}.
\end{eqnarray*}

Write
\begin{eqnarray*}
|A(\delta)|^n N(Y(\delta))^{-n/m}&=&\left(\frac{N'(Y(\delta))}{N(Y(\delta))}\right)^{n/m}
\left(\frac{|A(\delta)|}{N'(Y(\delta))^{1/m}}\right)^n\\
&=&\left(\frac{N'(Y(\delta))}{N(Y(\delta))}\right)^{n/m}
\left(\frac{|A(\delta)|}{N'(Y(\delta))^{(\gamma+1)/2}}\right)^n N'(Y(\delta))^{n(1-\gamma)/2}.
%&=&\left(\frac{N'(Y_\delta)}{N(Y_\delta)}\right)^{n/m} N'(Y_\delta)^{n-n/m}\left[\right]^n\\
\end{eqnarray*}
Noting $\frac{N'(x)}{N(x)}\le C(|x|+1)$ $\forall x\in\R$ for some $C>0$, we conclude that  the last term above converges to $0$ faster than
the first term going to $+\infty$ when $Y(\delta)$ goes to $+\infty$; hence we can find a bound $M_{2}(\gamma)>0$ such that
\begin{eqnarray*}
|A(\delta)|^n N(Y(\delta))^{-n/m}&\le& M_{2}(\gamma)\left(\frac{|A(\delta)|}{N'(Y(\delta))^{(\gamma+1)/2}}\right)^n\\
&\le&M_{2}(\gamma)c_{1}((1-\gamma)/2)^{n} e^{\frac{1}{8\Lambda(t+\delta)}\cE_{0, t+\delta}^{2}},
\end{eqnarray*}
where the last inequality is due to Lemma \ref{Adelta}.
When $\varepsilon>0$ is small enough, there exists $M_3(\gamma)<\infty$ such that  $\E_{t} [e^{\frac{1}{8\Lambda(t+\delta)}\cE_{0, t+\delta}^{2}}]<M_3(\gamma)$ for any $\delta\in (0, \epsilon]$. This leads to the desired inequality.
%, we have
%$$\E_t\left[|A_\delta|^nN(Y_\delta)^{-n/m}\right]\le M_{3}(\gamma), \;\;\forall \delta\in (0, \epsilon).$$
%Finally, $M_{1}(\gamma):=M_{3}(\gamma)^{n}$ works for the conclusion.
\end{proof}

\begin{proposition}\label{y-infty}
%We use the $\delta$ as in the mean-value theorem, and suppose $s\in (0, T-t)$ be small enough.
%Let $\varepsilon\in (0, T-t)$ be sufficiently small, and $\delta\in (0, \varepsilon)$ that may depend on $y$. Then
We have the following conclusions:
\begin{itemize}
\item[{\rm (i)}] For any $\gamma>0$, there are a sufficiently small $\varepsilon\in (0, T-t)$ and a function $H_{1}(\cdot;\gamma)$ with $\int_{-\infty}^{\underline y(\gamma)}H_{1}(y;\gamma)dy<+\infty$ for some $\underline y(\gamma)<0$, such that when $\delta\in (0, \varepsilon)$, we have
$$(\E_{t}|A({\delta})|)^{\gamma}\le H_{1}(y;\gamma)\;\;\forall y\in (-\infty, \underline y(\gamma)].
$$
\item[{\rm (ii)}] For any $\gamma>0$, there are a sufficiently small $\varepsilon\in (0, T-t)$ and a function $H_{2}(\cdot;\gamma)$ with $\int_{\bar y(\gamma)}^{+\infty}H_{2}(y;\gamma)dy<+\infty$ for some $\bar y(\gamma)>0$, such that when $\delta\in (0, \varepsilon)$, we have
$$(\E_{t}|A({\delta})|)^{\gamma}\le H_{2}(y;\gamma)\;\;\forall y\in [\bar y(\gamma),+\infty).$$
\end{itemize}
\end{proposition}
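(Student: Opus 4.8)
The plan is to produce, for each fixed $\gamma>0$, a function dominating $(\E_t|A(\delta,y)|)^\gamma$ on each tail, uniformly for $\delta$ in a sufficiently small interval $(0,\varepsilon)$ and integrable there; combined with the trivial bound on the compact middle band $[\underline y(\gamma),\bar y(\gamma)]$ (where $\E_t|A(\delta,y)|$ is bounded), this is exactly the domination that turns the reverse-Fatou inequality (\ref{cv2}) into a valid interchange of $\limsup$ and integration. Throughout I fix $t\in[0,T)$ and shrink $\varepsilon$ freely, using that for $\varepsilon$ small $\Lambda(t+\delta)$ lies in a compact subinterval of $(0,\infty)$ (continuity of $\Lambda$ and $\Lambda(t)>0$), and that the conditional exponential moments $\E_t[e^{c\,\cE_{0,t+\delta}^2}]$ are finite and bounded uniformly in $\delta\in(0,\varepsilon)$, since $\cE_{0,t+\delta}=\cE_{0,t}+\cE_{t,t+\delta}$ with $\cE_{t,t+\delta}$ conditionally Gaussian of variance $O(\delta)$ (the estimate already used at the end of the proof of Lemma \ref{EA/EN}).

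For part (i), where $y\to-\infty$, one has $Y(\delta,y)\to+\infty$, so $N(Y(\delta,y))\to1$ and Lemma \ref{EA/EN} is useless, whereas $N'(Y(\delta,y))\to0$. I would therefore start from Lemma \ref{Adelta} with a small parameter $\gamma_0\in(0,1)$, giving the pointwise bound $|A(\delta,y)|\le c_1(\gamma_0)\,N'(Y(\delta,y))^{1-\gamma_0}e^{\frac{\gamma_0}{4\Lambda(t+\delta)}\cE_{0,t+\delta}^2}$, then take $\E_t$ and split off the $\cE$-exponential by the Cauchy--Schwarz inequality (its conditional moment being a bounded, $\cF_t$-measurable constant). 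Writing $N'(Y)^{1-\gamma_0}=\text{const}\cdot e^{-\frac{1-\gamma_0}{2}Y^2}$ and invoking $(a-b)^2\ge\frac12 b^2-a^2$ with $Y(\delta,y)=\big(\cE_{0,t+\delta}-g(m(\delta,y))\big)/\sqrt{\Lambda(t+\delta)}$ and $m(\delta,y)=v(y)-k^\top\Delta(t,\delta)$, this leaves a bound by $\big(\E_t\big[e^{-\beta\,g(m(\delta,y))^2}\big]\big)^{1/2}$ for a suitable $\beta>0$.

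For part (ii), where $y\to+\infty$ and $Y(\delta,y)\to-\infty$, I would instead use Lemma \ref{EA/EN} to get $\E_t|A(\delta,y)|\le M_1(\gamma')[\E_t N(Y(\delta,y))]^{\gamma'}$, and observe (from the opening computation in the proof of Theorem \ref{mainth}) that $\E_t N(Y(\delta,y))=\p_t\big(u(X(T)+k^\top\Delta(t,\delta))>y\big)$ is the upper tail of the perturbed wealth $f(\cE_{0,T})+k^\top\Delta(t,\delta)$, the clean object that absorbs the perturbation intrinsically. In either tail the remaining task is to show decay of order $e^{-c\,g(v(y))^2}$ and its integrability in $y$. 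For the latter I would change variables $x=v(y)$, $dy=u'(x)\,dx$, so that the envelope becomes a constant multiple of $e^{-g(x)-c\,g(x)^2}$ (using $u'(x)=\text{const}\cdot e^{-g(x)}$); as $x\to-\infty$ the quadratic $-c\,g(x)^2$ dominates $-g(x)$ and makes the integrand eventually smaller than $u'(x)^{-\nu}$, while as $x\to+\infty$ it makes it eventually smaller than $u'(x)^{\nu}$, whence the first two integrals of Assumption \ref{swappingextra} give finiteness. Raising to the power $\gamma$ merely rescales the constant in front of $g(x)^2$, so the argument is uniform in $\gamma>0$.

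The step I expect to be the main obstacle is controlling the Gaussian perturbation $k^\top\Delta(t,\delta)$ inside $g$: because $g=l+\text{const}=-\ln u'+\text{const}$ may grow super-linearly, $g\big(v(y)-k^\top\Delta(t,\delta)\big)$ cannot be compared to $g(v(y))$ naively. The resolution is to split on whether $|k^\top\Delta(t,\delta)|$ exceeds a chosen threshold, handling the large-deviation part through the Gaussian tail of $k^\top\Delta(t,\delta)$ — of order $e^{-c\,v(y)^2/\delta}$, integrable against $u'$ by the third condition of Assumption \ref{swappingextra} once $\varepsilon$ is small enough that $c/\delta\ge\zeta$ — and the bulk part through the monotonicity of $g$ together with the growth control $l'\le e^{a|l|+b}$, $l^{\dpm}\le e^{a|l|+b}$ of Assumption \ref{uandw}-(ii). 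It is precisely this interplay between the $O(\delta)$ concentration of the perturbation, the growth bound of Assumption \ref{uandw}-(ii), and the three integrability conditions of Assumption \ref{swappingextra} that delivers envelopes which are simultaneously finite, uniform in $\delta$, and $y$-integrable.
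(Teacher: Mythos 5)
Your proposal is correct in substance, and for part (i) it is essentially the paper's own argument: the paper likewise starts from the pointwise bound (\ref{estofAdel}) (with $\gamma=1$ rather than a small $\gamma_0$, which makes no difference), uses Cauchy--Schwarz under $\E_t$ to split off the factor $\E_t[e^{\frac{c}{\Lambda(t+\delta)}\cE_{0,t+\delta}^2}]$ (finite for $\varepsilon$ small), and is left with $\bigl(\E_t[e^{-\beta g(m(\delta,y))^2}]\bigr)^{1/2}$; it then evaluates this conditional expectation as an explicit one-dimensional Gaussian integral over the law of $k^\top\Delta(t,\delta)\sim N(0,\eta^2)$ and splits it at $v(y)/(2\eta)$ and $0$, which is exactly your threshold split on $\{|k^\top\Delta(t,\delta)|\le |v(y)|/2\}$: the bulk region gives $u'(v(y)/2)^{-a_1}$ by monotonicity plus $e^{-\alpha\ln^2 x}\le c(a)x^{-a}$, and the large-deviation regions give $e^{-c\,v(y)^2/\eta^2}$ with $\eta^2=O(\delta)$, whence the envelope $c_6[u'(v(y))^{-a_1}+e^{-a_2v(y)^2}]$ is made $y$-integrable via Assumption \ref{swappingextra}, choosing $a_1$ freely and $\varepsilon$ small so that the Gaussian exponent dominates $\zeta$ — precisely your accounting of why $\varepsilon$ must shrink with $\gamma$. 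Where you genuinely deviate is part (ii): the paper simply reruns the same three-region computation with the roles of the regions mirrored (no use of Lemma \ref{EA/EN}), whereas you pass through Lemma \ref{EA/EN} to reduce to a decay estimate for the tail probability $\E_t[N(Y(\delta,y))]=\p_t\bigl(u(X(T)+k^\top\Delta(t,\delta))>y\bigr)$. This route is valid (the extra exponent $\gamma'<1$ is harmless, and at $+\infty$ the $x/2$-versus-$x$ comparison is benign since $u'$ is decreasing there), but it buys little: to bound the tail uniformly in $\delta$ you must still perform the same perturbation-splitting Gaussian estimate of $g(v(y)-k^\top\Delta(t,\delta))$, so the work is repackaged rather than avoided. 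Finally, the one step you flag as the main obstacle — converting the bulk bound $e^{-\beta g(v(y)/2)^2}$ into $y$-integrability against $dy=u'(x)dx$ as $y\to-\infty$ — is treated at essentially the same level of detail in the paper (``take suitable $a_1,a_2$ in view of Assumption \ref{swappingextra}''), so your proposal is not weaker than the published proof on this point, though neither spells it out fully.
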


\begin{proof} We prove only (i), while (ii) being similar.   Applying $\gamma=1$ to the second inequality of (\ref{estofAdel}), we get
\begin{eqnarray*}
|A(\delta)|&\le&c_{3}(1)e^{\frac{1}{4\Lambda(t+\delta)}\cE_{0, t+\delta}^{2}}
e^{-\frac{1 }{8\Lambda(t+\delta)}g(m(\delta))^{2}+2a|g(m(\delta))|+2b}.
\end{eqnarray*}
Define $n_{2}:=\max_{x\in {\mathbb R}} \{-\frac{ 1}{16\Lambda(t+\delta)}x^{2}+2a|x|+2b\}<+\infty$.
Then
$$|A(\delta)| \le c_{3}(1)e^{n_{2}}e^{\frac{1}{4\Lambda(t+\delta)}\cE_{0, t+\delta}^{2}}
e^{-\frac{1 }{16\Lambda(t+\delta)}g(m(\delta))^{2}}.
$$
So
\begin{eqnarray*}
\E_{t}|A(\delta)|&\le&c_{3}(1)e^{n_{2}}\E_{t}[e^{\frac{1}{4\Lambda(t+\delta)}\cE_{0, t+\delta}^{2}}
e^{-\frac{1 }{16\Lambda(t+\delta)}g(m(\delta))^{2}}]\\
&\le&c_{3}(1)e^{n_{2}}\sqrt{\E_{t}e^{\frac{1}{2\Lambda(t+\delta)}\cE_{0, t+\delta}^{2}}}
\sqrt{\E_{t}e^{-\frac{1 }{8\Lambda(t+\delta)}g(m(\delta))^{2}}}.
\end{eqnarray*}

%For any $1$, we can find a $b_{2}>0$ such that $-\alpha g(m_{\delta})^{2}+a|g(m_{\delta})|+b\le c_{2}g(m_{\delta})+b_{2}$, hence,
Set $n_{3}:=c_{3}(1)^{2} e^{2n_{2}}\E_{t}[e^{\frac{1}{2\Lambda_{t+\delta}}\ \cE_{0,t+\delta}^{2}}]$ which is a finite constant  when $\varepsilon$ (and hence $\delta$) is sufficiently small, % such that $\int_{t}^{t+\delta }\lambda^{2}\theta^{2}ds<\frac{1}{\beta}$,
and $\xi:=k^{\top}\Delta(t,\delta)\sim N(0, \eta^{2})$ conditional on
${\cal F}_t$, where $\eta^2\equiv \eta(\delta)^2={\rm Var}(\xi)\le |k|^{2}\int_{t}^{t+\delta}|\sigma(s)|^{2}ds$.  Then
\begin{eqnarray*}
(\E_{t}|A(\delta)|)^{2}&\le& n_{3}\E_{t}[e^{-\frac{1 }{8\Lambda(t+\delta)}g(m(\delta))^{2}}]\\
&=&n_{3}\E_{t}[e^{-\frac{1 }{8\Lambda(t+\delta)}\ln^{2}u'(v(y)-\xi)}]\\
&=&\frac{n_{3}}{\sqrt{2\pi}}  \int_{-\infty}^{+\infty}e^{-\frac{1}{8\Lambda(t+\delta)} \ln^{2}u'(v(y)+\eta z)}e^{-z^{2}/2}dz\\
&=&\frac{n_{3}}{\sqrt{2\pi}} \int_{-\infty}^{+\infty}e^{-\frac{1}{8\Lambda(t+\delta)}
             \ln^{2}u'(\eta\tilde z)}e^{-(\tilde z-v(y)/\eta)^{2}/2}d\tilde z\\
&=&\frac{n_{3}}{\sqrt{2\pi}} \left[\int_{-\infty}^{v(y)/(2\eta)}+\int_{v(y)/(2\eta)}^{0}+\int_{0}^{+\infty}\right]
     e^{-\frac{1}{8\Lambda(t+\delta)} \ln^{2}u'(\eta\tilde z)}e^{-(\tilde z-v(y)/\eta)^{2}/2}d\tilde z.\\
%&\le&\frac{1}{\sqrt{2\pi}}\int_{0}^{+\infty}e^{-\frac{\gamma}{2} \ln^{2}u'(\tilde z)}e^{-(\tilde z+v(y)/\sigma)^{2}/2}dz\\
\end{eqnarray*}

We now find an integrable bound (as a function of $y$) for  each of the three integrals in the above.

For the first integral, take $y_{0}<0$  small enough such that $v(y_0)<0$ and $u'(v(y_{0})/2)>1$.
Then $v(y)<0$ and $u'(v(y)/2)\geq u'(v(y_{0})/2)>1$ $\forall y\le y_{0}$.  Thus
\[
\int_{-\infty}^{v(y)/(2\eta)}e^{-\frac{1}{8\Lambda(t+\delta)} \ln^{2}u'(\eta\tilde z)}e^{-(\tilde z-v(y)/\eta)^{2}/2}d\tilde z
<\sqrt{2\pi}e^{-\frac{1}{8\Lambda(t+\delta)} \ln^{2}u'(v(y)/2)}
\le c_{4}(a_{1})u'(v(y)/2)^{-a_{1}},\;\;y\le y_{0},
\]
for any $a_{1}>0$ and some $c_{4}(a_{1})<+\infty$.\footnote{This is based on the fact (which can be easily shown) that given $\alpha>0$, for any $a>0$ there is a constant
$c(a)>0$ such that $e^{-\alpha \ln^{2}x}\leq c(a)x^{-a}$ $\forall x>1$.}

For the second integral, we have
\[
\int_{v(y)/(2\eta)}^{0}e^{-\frac{1}{8\Lambda(t+\delta)} \ln^{2}u'(\eta\tilde z)}e^{-(\tilde z-v(y)/\eta)^{2}/2}d\tilde z
\le e^{-v(y)^{2}/(8\eta^{2})}|v(y)/(2\eta)|\le c_{5}e^{-v(y)^{2}/(9\eta^{2})}
\]
for some constant $c_5>0$.

For the last integral, we get
\begin{eqnarray*}
&&\int_{0}^{+\infty}e^{-\frac{1}{8\Lambda(t+\delta)} \ln^{2}u'(\eta\tilde z)}e^{-(\tilde z-v(y)/\eta)^{2}/2}d\tilde z\\
&=&\int_{0}^{+\infty}e^{-\frac{1}{8\Lambda(t+\delta)} \ln^{2}u'(\eta\tilde z)}e^{-\tilde z^{2}/2}e^{\tilde zv(y)/\eta}e^{-v(y)^{2}/(2\eta^{2})}d\tilde z\\
&<&e^{-v(y)^{2}/(2\eta^{2})}\int_{0}^{+\infty}e^{-\tilde z^{2}/2}d\tilde z=\sqrt{\frac{\pi}{2}}e^{-v(y)^{2}/(2\eta^{2})}.
\end{eqnarray*}

Combining the three integrals, we conclude that,
for any $a_{1}>0, a_{2}>0$,
when $\varepsilon\in (0, T-t)$ is small enough and $y_1<0$ with $v(y_1)$ sufficiently negative, there exists
constant $c_{6}>0$ such that   for any $\delta\in (0, \varepsilon)$,
$$(\E_{t}|A(\delta)|)^{2}\le c_{6}[u'(v(y))^{-a_{1}}+e^{-a_{2}v(y)^{2}}]:=H_{1}(y;\gamma)^{2/\gamma}\;\;\forall y\le y_1.$$

Finally, for any $\gamma>0$, we can take suitable $a_{2}$ and $ a_{3}$ (which may depend on $\gamma$) such that,
in view of  Assumption \ref{swappingextra}, $\int_{-\infty}^{\underline y(\gamma)}H_{1}(y)dy<+\infty$,
where $\underline y(\gamma)$ is sufficiently negative.
\end{proof}

\begin{theorem}\label{swapok}
There exists an integrable function $ H(\cdot)$ such that when $\varepsilon>0$ is sufficiently small, it holds that
$$ \left| \frac{w(t,\E_t[ N(Y(\varepsilon,y))])- w(t, N(Y(0,y)))}{\varepsilon}\right|\le H(y)\;\;\forall y\in \R.$$
\end{theorem}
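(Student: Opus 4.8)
The plan is to reduce the difference quotient, via the fundamental theorem of calculus in $\varepsilon$, to a quantity that Lemma \ref{EA/EN} and Proposition \ref{y-infty} already control, and then to balance the (possible) blow-up of $w_p'(t,\cdot)$ near the endpoints $0$ and $1$ against the decay of $\E_t|A(\cdot,y)|$ in $y$.

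First I would keep $t$ and $y$ fixed and set $P_s:=\E_t[N(Y(s,y))]$, so that (\ref{nonkey}) reads $P_s=P_0+\int_0^s\E_t[A(r,y)]\,dr$. By Lemma \ref{EA/EN}, $\E_t|A(r,y)|\le M_1(\gamma_1)$ on $(0,\varepsilon]$, so $s\mapsto P_s$ is absolutely continuous with $\dot P_s=\E_t[A(s,y)]$ a.e.; moreover, for fixed $y$ the continuous curve $\{P_s:s\in[0,\varepsilon]\}$ stays in a compact subset of $(0,1)$, so $w_p'(t,P_s)$ is bounded there and $w(t,P_s)$ is absolutely continuous with $\frac{d}{ds}w(t,P_s)=w_p'(t,P_s)\E_t[A(s,y)]$ a.e. Hence
\begin{equation*}
\left|\frac{w(t,\E_t[N(Y(\varepsilon,y))])-w(t,N(Y(0,y)))}{\varepsilon}\right|
=\left|\frac1\varepsilon\int_0^\varepsilon w_p'(t,P_s)\E_t[A(s,y)]\,ds\right|
\le\sup_{s\in(0,\varepsilon]}w_p'(t,P_s)\,\E_t|A(s,y)|,
\end{equation*}
and it suffices to dominate $w_p'(t,P_s)\,\E_t|A(s,y)|$, uniformly in $s\in(0,\varepsilon]$, by an integrable $H(y)$ that is independent of $\varepsilon$.

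Next I would factor, for a small $\gamma\in(0,1)$ to be chosen,
\begin{equation*}
w_p'(t,P_s)\,\E_t|A(s,y)|=\Big[w_p'(t,P_s)\,(\E_t|A(s,y)|)^{1-\gamma}\Big]\,(\E_t|A(s,y)|)^{\gamma},
\end{equation*}
and argue that the bracketed factor is bounded by a constant independent of $s$ and $y$. By Assumption \ref{uandw}-(iv), $w_p'(t,P_s)\le c[P_s^{m}+(1-P_s)^{m}]$ with $m\in(-1,0)$. Applying Lemma \ref{EA/EN} together with its mirror version, obtained from the symmetry $N'(Y)=N'(-Y)$ (which yields $\E_t|A(s,y)|\le M_1(\gamma_1)(1-P_s)^{\gamma_1}$ in addition to $\E_t|A(s,y)|\le M_1(\gamma_1)P_s^{\gamma_1}$), gives $\E_t|A(s,y)|\le M_1(\gamma_1)\min(P_s,1-P_s)^{\gamma_1}$ for any $\gamma_1\in(0,1)$. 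Substituting and treating $P_s\le\tfrac12$ and $P_s\ge\tfrac12$ separately, the bracketed factor is bounded as soon as $\gamma_1(1-\gamma)\ge|m|$. Since $|m|<1$, this is feasible by first fixing $\gamma_1\in(|m|,1)$ and then taking $\gamma\in(0,\,1-|m|/\gamma_1]$, both positive.

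Finally, with this $\gamma$ fixed, Proposition \ref{y-infty} supplies $H_1(\cdot;\gamma)$ and $H_2(\cdot;\gamma)$, integrable near $-\infty$ and $+\infty$ respectively, with $(\E_t|A(s,y)|)^\gamma\le H_1(y;\gamma)$ for $y\le\underline y(\gamma)$ and $(\E_t|A(s,y)|)^\gamma\le H_2(y;\gamma)$ for $y\ge\bar y(\gamma)$, uniformly over small $s$; on the compact interval $[\underline y(\gamma),\bar y(\gamma)]$ Lemma \ref{EA/EN} gives $(\E_t|A(s,y)|)^\gamma\le M_1(\gamma_1)^\gamma$. Patching the three pieces yields the desired integrable $H$, uniform in $s\in(0,\varepsilon]$ and hence in all sufficiently small $\varepsilon$. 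The main obstacle is the balancing in the middle step: $w_p'(t,\cdot)$ genuinely blows up as its argument tends to $0$ or $1$ (as it must for an inverse-S-shaped weighting), and in the tails $P_s=\E_t[N(Y(s,y))]$ is driven to those endpoints, so one must verify that the algebraic decay of $\E_t|A(s,y)|$ afforded by Lemma \ref{EA/EN} strictly outpaces this blow-up. This is exactly where the condition $m>-1$ (so that $|m|<1$ leaves room to select $\gamma,\gamma_1$) and the availability of \emph{both} one-sided forms of Lemma \ref{EA/EN} are indispensable.
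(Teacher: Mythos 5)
Your proposal is correct and follows essentially the same route as the paper's proof: reduce the difference quotient via (\ref{nonkey}) (the paper uses the mean-value theorem where you use the integral form), control $w_p'$ by Assumption \ref{uandw}-(iv), absorb the endpoint blow-up by trading a fractional power of $\E_t|A|$ through Lemma \ref{EA/EN}, and obtain integrability in $y$ from Proposition \ref{y-infty} with a three-region patching. Your explicit appeal to the mirrored version of Lemma \ref{EA/EN} (via $N'(Y)=N'(-Y)$) is exactly what the paper uses implicitly in the tail where $1-\E_t[N(Y)]$ is small, and your exponent bookkeeping ($\gamma_1(1-\gamma)\ge|m|$) is a harmless variant of the paper's choice $(1+m)/2$ and $2m/(m-1)$.
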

\begin{proof} Fix $\varepsilon>0$. For any $y\in \R$, by the mean-value theorem, there exists $\delta\in [0, \epsilon]$ (which depends on $t$ and $y$) such that

\begin{eqnarray*}
\left| \frac{w(t,\E_t[ N(Y(\varepsilon,y))])- w(t, N(Y(0,y)))}{\varepsilon}\right|
&=& \left|w'_p(t,\E_t[ N(Y(\delta,y)])\E_t[A(\delta)]\right|\\
&\le&w'_p(t,\E_t[ N(Y(\delta,y)])\E_t[|A(\delta)|],
\end{eqnarray*}
where we have used the fact that, by virtue of (\ref{nonkey}), $\frac{d}{d\delta}  \E_t[ N(Y(\delta,y))] =\E_t[A(\delta)].$
By Assumption 3.1-(iv), we have
\begin{eqnarray*}
\left| \frac{w(t,\E_t[ N(Y(\varepsilon,y))])- w(t, N(Y(0,y)))}{\varepsilon}\right|
&\le&c \frac{\E_t[|A(\delta)|]}{(\E_t[ N(Y(\delta,y)])^{-m}}+c\frac{\E_t[|A_\delta|]}{(1-\E_t[ N(Y(\delta,y)])^{-m}}
\end{eqnarray*}
for some $c>0$ and $m\in(-1,0)$.
%where $c_{3}>0$ and $m>1$ are from Assumption 3.1.(iii). % \ref{uandw}.(iii).

Recall that $\E_t[ N(Y(\delta,y)]=\p_{t}\left(u(X({T})+k^{\top}\Delta(t,\delta))>y\right)$.
Choose  $y_{1}$ and $y_2$ such that $\E_t[ N(Y(\delta,y_1)]<1/2$ and
$\E_t[ N(Y(-\delta,y_2)]<1/2$
for all $\delta\in (0, \varepsilon]$.

Using notations in Lemma \ref{EA/EN} and Proposition \ref{y-infty}, denote
 $$\underline y:=\underline y((1+m)/2)) \wedge y_{1}, \quad \bar y:=\bar y((1+m)/2) \vee y_{2},$$
and
\begin{eqnarray*}
\hat H_{1}(y):=2cH_{1}(y; (1+m)/2) M_{1}((2m)/(m-1))^{(1-m)/2},\\
\hat H_{2}(y):=2cH_{2}(y; (1+m)/2) M_{1}((2m)/(m-1))^{(1-m)/2}.
\end{eqnarray*}
%where functions are taken from  Proposition \ref{y-infty} and Corollary \ref{y+infty} respectively.
%Finally, set
%$H(y)=\hat H_{1}(y)\id_{y< \underline y} +2cM_{1}(-m)+\hat H_{2}(y)\id_{y\ge \bar y}$, $y\in\R$.
Then:
\begin{itemize}
\item [(i)] For any $y<\underline y$, we deduce
\begin{eqnarray*}
\left| \frac{w(t,\E_t[ N(Y(\varepsilon,y))])- w(t, N(Y(0,y)))}{\varepsilon}\right|
&\le&2c\frac{\E_t[|A(\delta)|]}{(\E_t[ N(Y(\delta,y)])^{-m}}\\
&\le&2c(\E_t[|A(\delta)|])^{(1+m)/2}\frac{(\E_t[|A(\delta)|])^{(1-m)/2}}{(\E_t[ N(Y(\delta,y)])^{-m}}\\
&\le&2cH_{1}(y; (1+m)/2) M_{1}((2m)/(m-1))^{(1-m)/2}\\
&=&\hat H_{1}(y).
\end{eqnarray*}

\item [(ii)] Similarly, for any $y>\bar y$, we have $\left| \frac{w(t,\E_t[ N(Y(\varepsilon,y))])- w(t, N(Y(0,y)))}{\varepsilon}\right|\le \hat H_{2}(y)$.
\item [(iii)] Finally, $\left| \frac{w(t,\E_t[ N(Y(\epsilon,y))])- w(t, N(Y(0,y)))}{\epsilon}\right|\le 2cM_{1}(-m)$ when $y\in [\underline y, \bar y]$.
\end{itemize}
Then $H(\cdot)$, where
$H(y)=\hat H_{1}(y)\id_{y< \underline y} +2cM_{1}(-m)+\hat H_{2}(y)\id_{y\ge \bar y}$, $y\in\R$, is the desired function.
%By the construction of $H$, it is clearly integrable.
\end{proof}

\section{Sufficient Conditions}
Theorem \ref{mainth} holds under three major assumptions: the existence of a positive constant $\kappa$ such that the budget constraint (\ref{bc}) holds,
the existence of a positive solution to the ODE (\ref{Lambdaeq}), and the validity of the
inequality (\ref{2ndcond}).
It is  hard to verify these assumptions (and hence hard to tell if they are reasonable) because they are not imposed directly on the model primitives. This section explores equivalent or at least sufficient conditions, that are shown to be reasonable economically, under which these three assumptions are satisfied respectively.

\subsection{Budget constraint (\ref{bc})} \label{bccond}

The equation (\ref{bc}) is {\it not} a ``standard" budget constraint appearing in the classical Merton problem because $\rho(T)$ and $\bar \rho(T)$ are in general different. However, we have the following result.

\begin{theorem}\label{bcequiv}
The following two statements are equivalent:
\begin{enumerate}
 \item[{\rm (i)}] There exists $\kappa>0$ such that (\ref{bc}) holds.
 \item[{\rm (ii)}]  There exists $\bar\kappa>0$ such that
\begin{equation}\label{bcbar}
\E[\bar\rho(T)I(\bar\kappa \bar\rho(T))]=x_0.
\end{equation}
\end{enumerate}
\end{theorem}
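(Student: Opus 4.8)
The plan is to reduce both budget equations to a single one-dimensional root-finding problem by means of two changes of probability measure, exploiting the fact that $\rho(T)$ and $\bar\rho(T)$ are Dol\'eans--Dade exponentials of \emph{deterministic, square-integrable} integrands (indeed $\theta(\cdot)$ is bounded by Assumption~\ref{rcont-mkt}, and $\int_0^T|\lambda(s)\theta(s)|^2ds=\Lambda(0)<+\infty$, the latter following from (\ref{thetabar}) and $\Lambda(T)=0$). For such integrands the stochastic exponential is automatically a unit-mean $\p$-martingale, so I would introduce the equivalent probability measures $\Q$ and $\bar\Q$ defined by $\frac{d\Q}{d\p}=\rho(T)$ and $\frac{d\bar\Q}{d\p}=\bar\rho(T)$. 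Statement (i) then reads $F(\kappa):=\E^\Q[I(\kappa\bar\rho(T))]=x_0$ and statement (ii) reads $G(\bar\kappa):=\E^{\bar\Q}[I(\bar\kappa\bar\rho(T))]=x_0$.

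The core of the argument is to compute the law of $\ln\bar\rho(T)$ under each measure by Girsanov's theorem. Under $\bar\Q$ the process $\bar W_t:=W_t+\int_0^t\lambda(s)\theta(s)ds$ is a Brownian motion, and substituting $dW=d\bar W-\lambda\theta\,dt$ into (\ref{rhobar}) gives $\ln\bar\rho(T)=\tfrac12\Lambda(0)-\int_0^T\lambda(s)\theta(s)^\top d\bar W(s)$, so that $\ln\bar\rho(T)\sim N\!\left(\tfrac12\Lambda(0),\,\Lambda(0)\right)$ under $\bar\Q$. Under $\Q$ the process $W^\Q_t:=W_t+\int_0^t\theta(s)ds$ is a Brownian motion, and the analogous substitution yields $\ln\bar\rho(T)\sim N\!\left(-\tfrac12\Lambda(0)+\int_0^T\lambda(s)|\theta(s)|^2ds,\ \Lambda(0)\right)$ under $\Q$. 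The two laws are normal with the \emph{same} variance $\Lambda(0)$ and means differing by the finite deterministic constant $\Delta:=\int_0^T\lambda(s)|\theta(s)|^2ds-\Lambda(0)$. Writing both expectations as Gaussian integrals against these densities and shifting the integration variable then produces the single identity
\begin{equation*}
F(\kappa)=G\!\left(\kappa e^{\Delta}\right),\qquad \kappa>0,
\end{equation*}
valid as an identity in the extended reals: since one side is obtained from the other by a pure reparametrisation of the same (signed) integrand $I$, the left-hand expectation is well defined and finite if and only if the right-hand one is.

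With this identity the equivalence is immediate. Because $\kappa\mapsto\kappa e^{\Delta}$ is a bijection of $(0,+\infty)$ onto itself, there exists $\kappa>0$ with $F(\kappa)=x_0$ if and only if there exists $\bar\kappa>0$ with $G(\bar\kappa)=x_0$, the correspondence being $\bar\kappa=\kappa e^{\Delta}$; this is exactly (i)$\iff$(ii). I expect the only delicate points to be administrative rather than conceptual: first, confirming that $\rho$ and $\bar\rho$ are genuine unit-mean martingales so that $\Q$ and $\bar\Q$ are bona fide probability measures (the deterministic, square-integrable integrands make Novikov's condition trivial here); and second, phrasing the change-of-variable step at the level of the underlying Gaussian integrals so that finiteness and well-definedness of the signed expectation of $I$ are transported between the two sides. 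It is worth emphasising that this route never requires the budget integrals to be finite for all $\kappa$, nor any explicit verification that a root exists; it merely transfers the \emph{existence} of a root from $F$ to $G$ and back.
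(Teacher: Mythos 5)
Your argument is correct and is essentially the paper's proof: both rest on the Girsanov changes of measure $\frac{d\Q}{d\p}=\rho(T)$, $\frac{d\bar\Q}{d\p}=\bar\rho(T)$ and reduce to the identity $\E[\rho(T)I(\kappa\bar\rho(T))]=\E[\bar\rho(T)I(\kappa e^{\Delta}\bar\rho(T))]$ with the same constant $\Delta=\int_0^T\lambda(s)(1-\lambda(s))|\theta(s)|^2ds$, after which the bijection $\kappa\mapsto\kappa e^{\Delta}$ gives the equivalence. Your phrasing via the explicit Gaussian laws of $\ln\bar\rho(T)$ under $\Q$ and $\bar\Q$ (equal variance $\Lambda(0)$, means shifted by $\Delta$) is just a distributional restatement of the paper's chain of expectations, and your added care about transporting well-definedness of the signed integrand $I$ is a harmless refinement.
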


\begin{proof} Let $\Q$ and $\bar\Q$ be respectively the equivalent martingale measures corresponding to $\rho(T)$ and $\bar\rho(T)$; namely,
\[ \frac{d\Q}{d\p}=\rho(T),\;\;\frac{d\bar\Q}{d\p}=\bar\rho(T).\]
By Girsanov's theorem, $\tilde W(\cdot)$ and $\bar W(\cdot)$ are respectively Brownian motions under $\Q$ and $\bar\Q$, where
\[ \tilde W(t):=W(t)+\int_0^t\theta(s)ds,\;\;\bar W(t):=W(t)+\int_0^t\lambda(s)\theta(s)ds,\;\;t\in[0,T].
\]
Now, we have
\begin{eqnarray*}
\E[\rho(T)I(\kappa \bar\rho(T))]&=&\E^\Q\left[I(\kappa e^{-\frac{1}{2}\int_{0}^{T}|\lambda(s)\theta(s)|^{2}ds
-\int_{0}^{T}\lambda(s)\theta(s)^{\top}dW(s)})\right]\\
&=&\E^\Q\left[I(\kappa e^{-\frac{1}{2}\int_{0}^{T}|\lambda(s)\theta(s)|^{2}ds
+\int_{0}^{T}\lambda(s)|\theta(s)|^{2}ds
-\int_{0}^{T}\lambda(s)\theta(s)^{\top}d\tilde W(s)})\right]\\
&=&\E\left[I(\kappa e^{-\frac{1}{2}\int_{0}^{T}|\lambda(s)\theta(s)|^{2}ds
+\int_{0}^{T}\lambda(s)|\theta(s)|^{2}ds
-\int_{0}^{T}\lambda(s)\theta(s)^{\top}d W(s)})\right]\\
&=&\E^{\bar\Q}\left[I(\kappa e^{-\frac{1}{2}\int_{0}^{T}|\lambda(s)\theta(s)|^{2}ds
+\int_{0}^{T}\lambda(s)|\theta(s)|^{2}ds
-\int_{0}^{T}\lambda(s)\theta(s)^{\top}d \bar W(s)})\right]\\
&=&\E\left[\bar\rho(T)I(\kappa e^{-\frac{1}{2}\int_{0}^{T}|\lambda(s)\theta(s)|^{2}ds
+\int_{0}^{T}\lambda(s)|\theta(s)|^{2}ds
-\int_{0}^{T}\lambda(s)\theta(s)^{\top}d \bar W(s)})\right]\\
&=&\E\left[\bar\rho(T)I\left(\kappa e^{\int_{0}^{T}\lambda(s)(1-\lambda(s))|\theta(s)|^{2}ds}\bar\rho(T)\right)\right].
\end{eqnarray*}
This establishes the desired equivalence with $\bar\kappa=\kappa e^{\int_{0}^{T}\lambda(s)(1-\lambda(s))|\theta(s)|^{2}ds}$.
\end{proof}

So the assumption in Theorem \ref{mainth} regarding the existence of a positive constant $\kappa$ satisfying  (\ref{bc}) boils down to the familiar condition (\ref{bcequiv}).
The latter condition is standard in the classical Merton problem in which the pricing kernel is $\bar\rho(T)$ or, equivalently, the market price of risk process is
$\lambda(\cdot)\theta(\cdot)$. Note that because the probability weighting function $w$ has been embedded into $\lambda(\cdot)$, the existence of a positive $\bar\kappa$ satisfying (\ref{bcequiv}) becomes a condition on the utility function $u$ only, which is satisfied by, say, the exponential utility.\footnote{Indeed, if this condition
fails, than it is usually an indication that the original problem is not well-posed and/or an optimal solution is not attainable; see Jin et al
(2008) for a detailed analysis on this constraint.}

More important, with Theorem \ref{bcequiv}, Theorem \ref{mainth} shows that
the investment behavior of the sophisticated RDU agent is {\it indistinguishable} from
an EUT maximizer in a market where the  market price of risk is revised from $\theta(\cdot)$ to $\lambda(\cdot)\theta(\cdot)$. This finding may have important economic implications especially in the study of market equilibria.

\subsection{Existence of positive solutions to (\ref{Lambdaeq})} \label{sec1cond}

The main result of this paper depends crucially on the existence of a positive solution to the
ODE (\ref{Lambdaeq}). Note that this equation is highly nonlinear, and singular at
$t=T$ in that the denominator of the right hand side of the equation is 0 at $T$.
In this subsection we provide conditions under which (\ref{Lambdaeq}) admits positive solutions,
by applying a general existence result for a class of ODEs with singular initial/terminal values (see Appendix A).
%The key idea is to first establish the local existence in the neighborhood of $t=T$, and then
%extend it globally to the whole time interval $[0,T]$.

Setting $y(t)=\Lambda (T-t)$, (\ref{Lambdaeq}) is equivalent to
\begin{equation}\label{eqy}
\left\{\begin{array}{l}
y'(t)=|\theta({T-t})|^2\left(\frac{h(T-t,\sqrt{y(t)})}{h_x'(T-t,\sqrt{y(t)})}\right)^2 y(t), \quad t\in (0, T],\\
y(0)=0.
\end{array}\right.
\end{equation}
In the rest of this subsection we study  equation (\ref{eqy}) instead of (\ref{Lambdaeq}). The key idea is to first establish the local existence in the right neighborhood of $t=0$, and then
extend it globally to the whole time interval $[0,T]$.

We introduce the following assumption on the function $h(\cdot,\cdot)$ (which depends directly on the probability weighting function) and on the market represented by $\theta(\cdot)$:
\begin{assump}\label{h13}
\begin{enumerate}
 \item[{\rm (i)}] $h_x'(t,0)\ge 0$ and $h_{x}'''(t,0)\ge 0$ $\forall t\in [0, T]$.
\item[{\rm (ii)}]
$\limsup_{t\uparrow T} \frac{h_x'(t,0)}{\sqrt{|\theta(t)|^2 (T-t)}}<1$ and $\liminf_{t\uparrow T}|\theta(t)|^2>0$.
\item[{\rm (iii)}] $\sup_{t\in[0,T]}h(t,1)<+\infty$ and
$\limsup_{t\uparrow T}h_{x}'' (t,1)< +\infty$.
\item[{\rm (iv)}] $\inf_{t\in [0, T]}h_{x}''(t,0)>0$.
\end{enumerate}
 \end{assump}

Assumption \ref{h13}-(iii) and -(iv) are mild. Assumption \ref{h13}-(i) can be relaxed with a more subtle analysis than the one to be given below; although we will not pursue in that direction.
The first part of Assumption \ref{h13}-(ii) is the most important of all, which regulates how the probability weighting function $w(t,\cdot)$ should behave, given the market,
when $t$  is sufficiently close to the terminal time $T$.\footnote{We believe that this is a distinctive feature  of the continuous-time setting. In the discrete-time case, there is no infinitesimal issue of the weighting functions close to the terminal time.}
 %This is because, as we have argued earlier, there must not be probability weighting at $T$; hence how the weighting functions degenerate into the non-weighting one when it is close to the final time is important.\footnote{We believe that this is a distinctive feature  of the continuous-time setting. In the discrete-time case, there is no infinitesimal degeneration of the weighting functions.}  
 Mathematically, this terminal behavior of weighting functions is
translated into the singularity of the ODE (\ref{Lambdaeq}) at $T$, which is why Assumption \ref{h13} is needed for a proof of the existence of (\ref{Lambdaeq}).
Luckily, we will show in Appendix B that all the parts of Assumption \ref{h13} are satisfied by a family of time-varying
Tversky--Kahnamen's weighting functions.
%The function $\frac{h'(x, t)}{ x}$ plays a key role in the study of the existence.

We first strengthen Lemma \ref{hregularity} under Assumption \ref{h13}-(i).

\begin{lemma}\label{star}
Under Assumption \ref{h13}-(i) in addition to the same assumption of Lemma \ref{hregularity}, for any $t\in [0, T]$, $h(t,\cdot)$ has the following properties:
\begin{enumerate}
 \item[{\rm (i)}] $h(t,x)$ and $h^\dpm (t,x)$ are both increasing in $x\geq0$, and $h'_x(t,x)$ is convex in $x\geq0$.
\item [{\rm (ii)}] $xh^\dpm_x(t,0)\leq h'_x(t,x)\leq h'_x(t,0)+xh^\dpm_x(t,x)$ $\forall x\ge 0$.
\end{enumerate}
\end{lemma}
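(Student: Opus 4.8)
The plan is to prove both items of Lemma \ref{star} by exploiting the integral representations established in Lemma \ref{hregularity}-(ii), namely $h^{(n)}_x(t,x)=\E[w_p'(t,N(\xi))\xi^n e^{x\xi}]$ for $x\ge 0$, together with the newly added Assumption \ref{h13}-(i). The key observation is that Assumption \ref{h13}-(i) supplies the two ``boundary'' signs at $x=0$, namely $h'_x(t,0)\ge 0$ and $h'''_x(t,0)\ge 0$, which are exactly the ingredients missing from Lemma \ref{hregularity} (there we only knew convexity of even-order derivatives and monotonicity of odd-order ones on $x\ge 0$, but not their signs at the left endpoint).

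For item (i), I would argue as follows. By Lemma \ref{hregularity}-(iii) the derivative $h'_x(t,\cdot)$ (odd order $n=1$) is increasing on $x\ge 0$; combined with $h'_x(t,0)\ge 0$ this gives $h'_x(t,x)\ge 0$ for all $x\ge 0$, i.e.\ $h(t,\cdot)$ is increasing. The same reasoning applies one derivative higher: $h'''_x(t,\cdot)$ is increasing on $x\ge 0$ (it is an odd-order derivative, $n=3$), so $h'''_x(t,0)\ge 0$ forces $h'''_x(t,x)\ge 0$ for all $x\ge 0$, which says precisely that $h^\dpm_x(t,\cdot)$ is increasing on $x\ge 0$. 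Finally, the convexity of $h'_x(t,\cdot)$ on $x\ge 0$ is already contained in Lemma \ref{hregularity}-(iii), since $h'_x$ is an odd-order derivative and its second derivative $h'''_x$ is now shown to be nonnegative; alternatively it is the statement that the even-order object $\frac{\partial^2}{\partial x^2}h'_x=h'''_x\ge 0$. So item (i) follows directly by chaining sign-at-zero information with the monotonicity already proved.

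For item (ii), the plan is to use the integral/Taylor structure rather than brute force. The left inequality $x h^\dpm_x(t,0)\le h'_x(t,x)$ and the right inequality $h'_x(t,x)\le h'_x(t,0)+x h^\dpm_x(t,x)$ are both consequences of the convexity of $h'_x(t,\cdot)$ on $[0,\infty)$ established in part (i). Indeed, for a convex $C^1$ function $\phi:=h'_x(t,\cdot)$ on $[0,\infty)$, the tangent line at $0$ lies below the graph, giving $\phi(0)+x\phi'(0)\le \phi(x)$, i.e.\ $h'_x(t,0)+x h^\dpm_x(t,0)\le h'_x(t,x)$; dropping the nonnegative term $h'_x(t,0)\ge 0$ yields the slightly weaker but cleaner lower bound $x h^\dpm_x(t,0)\le h'_x(t,x)$ as stated. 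For the upper bound, convexity gives that the tangent line at the right point $x$ lies below the graph, evaluated at $0$: $\phi(x)+(0-x)\phi'(x)\le \phi(0)$, which rearranges to $h'_x(t,x)\le h'_x(t,0)+x h^\dpm_x(t,x)$. Thus both bounds are just the two standard one-sided tangent-line inequalities for a convex function, applied at the endpoints $0$ and $x$.

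The argument is essentially a bookkeeping exercise once the correct signs are in place, so I do not anticipate a genuine analytic obstacle. The one point requiring a little care is making sure the differentiation-under-the-integral representation of Lemma \ref{hregularity}-(ii) is being invoked only on $x\ge 0$ (where it is valid) and that the convexity claim used in part (ii) is the convexity \emph{on the half-line} $[0,\infty)$, not on all of $\R$; the tangent-line inequalities then apply verbatim. The mildest subtlety is that the lower bound as stated discards the term $h'_x(t,0)$, so I would note explicitly that this is legitimate precisely because $h'_x(t,0)\ge 0$ under Assumption \ref{h13}-(i), which is why that hypothesis is recorded.
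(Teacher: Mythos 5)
Your proposal is correct and follows essentially the same route as the paper: part (i) by combining the monotonicity of the odd-order derivatives from Lemma \ref{hregularity}-(iii) with the signs $h'_x(t,0)\ge 0$, $h^{(3)}_x(t,0)\ge 0$ from Assumption \ref{h13}-(i), and part (ii) from the resulting convexity of $h'_x(t,\cdot)$ on $[0,\infty)$ together with $h'_x(t,0)\ge 0$. The only cosmetic difference is that you invoke the two tangent-line inequalities for a convex function where the paper writes the lower bound via a second-order Taylor expansion with nonnegative remainder; these are interchangeable.
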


\begin{proof}
(i) It follows from Lemma \ref{hregularity}-(iii) that $h'_x(t,x)$ and $h^{(3)}_x(t,x)$ are both increasing in $x\ge 0$. Assumption \ref{h13}-(i) then leads to the desired results readily.

(ii) Applying the Taylor expansion, we have for any $x>0$, there exists $\zeta\in [0, x]$ such that
$$h'_x(t,x)=h'_x(t,0)+xh^\dpm_x(t,0)+\frac{1}{2}x^2h^{(3)}_x(t,\zeta)\ge xh^\dpm_x(t,x).$$
%With $h'_x(t,x) \ge 0$ and $h^{(3)}_x(t,x)\ge 0$ for any $x\ge 0$, we get
%$$h'_x(t,x)\ge xh^\dpm_x(t,0).$$
On the other hand, the convexity of  $h'_x(t,x)$ in $x\ge 0 $ implies that
$$h'_x(t,x)\le h'_x(t,0)+xh^\dpm_x(t,x).$$
The proof is complete.
\end{proof}

%\begin{lemma}\label{2.5to3.1.ii}
% If Assumption \ref{uandw}-(iii) holds, then $h(t,x)<+\infty$ $\forall (t,x)\in [0,T]\times \R.$
%\end{lemma}

\begin{lemma}\label{k-cond} Under Assumption \ref{h13},
there exist $k_1>0$ greater than any given number and $\delta_1>0$ such that
\begin{equation}\label{k0}
 |\theta({T-t})|^2\left(\frac{ \sqrt{t}h(T-t,\sqrt{k_1t})}{h'_x(T-t,\sqrt{k_1t})}\right)^{2}<1 \;\; \forall t\in (0, \delta_1].
\end{equation}
Moreover, there exist $k_2>0$ less than any given positive number and $\delta_2>0$ such that
\begin{equation}\label{k2}
|\theta({T-t})|^2\left(\frac{ \sqrt{t}h(T-t,\sqrt{k_2t})}{h'_x(T-t,\sqrt{k_2t})}\right)^{2}>1\;\; \forall t\in (0, \delta_2].
\end{equation}
\end{lemma}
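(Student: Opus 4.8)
The plan is to analyze the single quantity
$$G(k,t):=|\theta(T-t)|^2\left(\frac{\sqrt{t}\,h(T-t,\sqrt{kt})}{h'_x(T-t,\sqrt{kt})}\right)^2$$
for small $t$, using the two-sided estimate $\sqrt{kt}\,h''_x(T-t,0)\le h'_x(T-t,\sqrt{kt})\le h'_x(T-t,0)+\sqrt{kt}\,h''_x(T-t,\sqrt{kt})$ supplied by Lemma \ref{star}-(ii). The key observation is that both $h'_x(T-t,0)$ (controlled by Assumption \ref{h13}-(ii)) and the correction $\sqrt{kt}\,h''_x$ are of order $\sqrt{t}$, so the factors of $\sqrt{t}$ cancel in $G(k,t)$ and the two estimates (\ref{k0}) and (\ref{k2}) collapse to inequalities in the parameter $k$ alone.

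For (\ref{k0}) I would discard all but the lower bound on the denominator: $h'_x(T-t,\sqrt{kt})\ge\sqrt{kt}\,h''_x(T-t,0)$ gives $G(k,t)\le |\theta(T-t)|^2\, h(T-t,\sqrt{kt})^2/\bigl(k\,h''_x(T-t,0)^2\bigr)$. For $t\le 1/k$ one has $\sqrt{kt}\le 1$, so monotonicity (Lemma \ref{star}-(i)) bounds $h(T-t,\sqrt{kt})\le\sup_s h(s,1)$, which is finite by Assumption \ref{h13}-(iii); meanwhile $h''_x(T-t,0)\ge\inf_s h''_x(s,0)>0$ by Assumption \ref{h13}-(iv) and $|\theta|$ is bounded by Assumption \ref{rcont-mkt}. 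Hence $G(k,t)$ is at most a constant divided by $k$, so choosing $k_1$ larger than that constant (and, in particular, larger than any prescribed number) together with $\delta_1=1/k_1$ yields (\ref{k0}).

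For (\ref{k2}), I would rewrite the target as $h'_x(T-t,\sqrt{kt})<|\theta(T-t)|\sqrt{t}\,h(T-t,\sqrt{kt})$. The left side I bound from above by the upper half of Lemma \ref{star}-(ii) and the monotonicity of $h''_x$ (Lemma \ref{star}-(i)), so that for $\sqrt{kt}\le1$ it is at most $h'_x(T-t,0)+\sqrt{kt}\,h''_x(T-t,1)$, where $h''_x(T-t,1)\le B$ near $T$ by Assumption \ref{h13}-(iii). On the right side I use only $h(T-t,\sqrt{kt})\ge h(T-t,0)=1$ (Lemma \ref{star}-(i) and Lemma \ref{hregularity}-(i)). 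Assumption \ref{h13}-(ii) supplies $\eta\in(0,1)$, $\theta_0>0$ and a one-sided neighborhood of $T$ on which simultaneously $h'_x(T-t,0)\le(1-\eta)|\theta(T-t)|\sqrt{t}$ and $|\theta(T-t)|\ge\theta_0$, so the inequality holds once $\sqrt{k}\,B<\eta|\theta(T-t)|$, for which $\sqrt{k}\,B<\eta\theta_0$ is enough. Taking $k_2<(\eta\theta_0/B)^2$ (hence smaller than any prescribed positive number) and letting $\delta_2$ be the minimum of the relevant thresholds finishes the argument.

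I expect the main obstacle to be part (\ref{k2}): it is exactly the strict inequality $\limsup_{t\uparrow T} h'_x(t,0)/\sqrt{|\theta(t)|^2(T-t)}<1$ of Assumption \ref{h13}-(ii) that creates the slack $\eta>0$ into which the $O(\sqrt{k})$ perturbation coming from the second-order term must fit, and the accompanying $\liminf_{t\uparrow T}|\theta(t)|^2>0$ that keeps this slack from degenerating as $t\uparrow T$. Some care is needed to ensure all the ``$t$ small'' requirements hold on one common interval—by intersecting the finitely many neighborhoods and imposing $\sqrt{kt}\le 1$ so the monotonicity bounds apply—and to keep every estimate uniform in $t$ rather than merely pointwise.
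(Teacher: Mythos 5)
Your proposal is correct and follows essentially the same route as the paper's proof: both parts rest on the two-sided bound of Lemma \ref{star}-(ii), with (\ref{k0}) obtained from $h'_x(T-t,\sqrt{kt})\ge\sqrt{kt}\,h^\dpm_x(T-t,0)$ together with Assumptions \ref{h13}-(iii), (iv) and the boundedness of $\theta$, and (\ref{k2}) from the upper bound plus $h\ge1$ and Assumption \ref{h13}-(ii), (iii). The only difference is cosmetic: you carry explicit constants ($\eta$, $B$, $\theta_0$) and neighborhoods where the paper manipulates $\limsup$'s, which is equivalent.
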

\begin{proof}
By Lemma \ref{star}-(i), $h(T-t, x)$ is increasing when $x\ge0$; hence
%Since $h(t,x)$ is convex in $x$,  we have
%Assumption \ref{h0} that
\begin{equation}\label{k11}
\limsup_{t\downarrow 0} h(T-t,\sqrt{kt})\le \limsup_{t\downarrow 0} h(T-t,1)<+\infty,
\end{equation}
where the finiteness is due to  Assumption \ref{h13}-(iii).
% So we only need to find a $k$ such that
%$$\liminf_{t\downarrow 0} \frac{h'(\sqrt{kt}, T-t)}{ \sqrt{\theta_{T-t}^2t}}>M.$$
On the other hand, it follows from Lemma \ref{star}-(ii) that, for any $k>0$,
\begin{equation}\label{k12}
\frac{h'_x(T-t,\sqrt{kt})}{ \sqrt{t}}\ge \sqrt{k}h_x^{\dpm}(T-t,0)\ge \sqrt{k}\inf_{s\in [0, T]}h_x^{\dpm}(s,0)>0,
\end{equation}
where the last inequality is due to Assumption \ref{h13}-(iv). Combining (\ref{k11}) and (\ref{k12}) and noting
the boundedness of $\theta(\cdot)$,  we conclude that there is $k=k_1$ greater than any given number such that (\ref{k0}) is satisfied.

Next, by Lemma \ref{star}-(ii),  we deduce that for sufficiently small $k>0$,
 \begin{eqnarray*}
  \limsup_{t\downarrow 0}\frac{h'_x(T-t,\sqrt{kt})}{ \sqrt{|\theta({T-t})|^2t}}
  &\le& \limsup_{t\downarrow 0}\left\{\frac{h'_x(T-t,0)}{ \sqrt{|\theta({T-t})|^2t}}
            +\sqrt{k}\frac{h^\dpm_x(T-t,\sqrt{kt})}{\sqrt{|\theta({T-t})|^2}}\right\}\\
  &\le&\limsup_{t\downarrow 0}\frac{h'_x(T-t,0)}{ \sqrt{|\theta({T-t})|^2t}}
             +\sqrt{k}\limsup_{t\downarrow 0}\frac{h_x^\dpm(T-t,1)}{\sqrt{|\theta({T-t})|^2}}\\
             &<&1,
 \end{eqnarray*}
 where the last inequality follows from
Assumption \ref{h13}-(ii) and -(iii). Since $h(T-t,\sqrt{kt})\ge h(T-t,0)=1$, we conclude that there is $k=k_2$ less
than any given positive number so that (\ref{k2}) is satisfied.
\end{proof}

The following establishes the local existence of the ODE (\ref{eqy}).
\begin{proposition}\label{localex4eqy}
Under %Assumption \ref{uandw}-(iii) and
Assumption \ref{h13},  equation (\ref{eqy}) admits a solution
$y(\cdot)\in C[0,\delta]\cap C^1(0,\delta]$ on a  time interval $[0, \delta]$ for some $\delta>0$, with $y(t)>0$
$\forall t\in (0, \delta]$.
\end{proposition}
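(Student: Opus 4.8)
The plan is to treat (\ref{eqy}) as a regular ODE away from the singular point $t=0$ and to control the solution near $t=0$ by trapping it between two linear barriers supplied by Lemma \ref{k-cond}. Write the right-hand side of (\ref{eqy}) as $f(t,y):=|\theta(T-t)|^2\bigl(h(T-t,\sqrt y)/h'_x(T-t,\sqrt y)\bigr)^2 y$. For $y>0$ the denominator is strictly positive: Lemma \ref{star}-(ii) gives $h'_x(T-t,\sqrt y)\ge \sqrt y\,h''_x(T-t,0)\ge \sqrt y\,\inf_{s\in[0,T]}h''_x(s,0)>0$ by Assumption \ref{h13}-(iv), while the numerator is finite by Lemma \ref{2.5to3.1.ii}. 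Hence $f$ is continuous (indeed locally Lipschitz) in $y$ on $\{y>0\}$, so the initial value problem for (\ref{eqy}) with data prescribed at any $t_0>0$ is locally well posed by standard (Carath\'eodory) theory; the only difficulty is the singular condition $y(0)=0$.

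First I would fix $\delta:=\delta_1\wedge\delta_2$ and choose, via Lemma \ref{k-cond}, constants $k_2<k_1$ for which (\ref{k0}) and (\ref{k2}) hold on $(0,\delta]$. Since $f(t,kt)=k\,|\theta(T-t)|^2\bigl(\sqrt t\,h(T-t,\sqrt{kt})/h'_x(T-t,\sqrt{kt})\bigr)^2$, inequality (\ref{k0}) says $f(t,k_1t)<k_1=(k_1t)'$, so $\bar y(t):=k_1t$ is a supersolution, and (\ref{k2}) says $f(t,k_2t)>k_2=(k_2t)'$, so $\underline y(t):=k_2t$ is a subsolution, with $\underline y\le\bar y$ on $[0,\delta]$.

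Next, for each small $\varepsilon>0$ I would solve the regular problem $y'=f(t,y)$ on $[\varepsilon,\delta]$ with $y(\varepsilon)=c\varepsilon$ for a fixed $c\in(k_2,k_1)$. A barrier argument shows the solution $y_\varepsilon$ stays in the cone $\{\,k_2t\le y\le k_1t\,\}$: at any would-be first contact with the upper barrier one has $y_\varepsilon=k_1t$ and hence $y_\varepsilon'=f(t,k_1t)<k_1$, contradicting the inequality $y_\varepsilon'\ge(k_1t)'$ required to reach it from below; symmetrically $y_\varepsilon$ cannot cross $k_2t$ from above. The trapping bound $k_2t\le y_\varepsilon\le k_1t$ prevents blow-up, so $y_\varepsilon$ exists on all of $[\varepsilon,\delta]$ and yields a uniform bound on $|y_\varepsilon'|=f(t,y_\varepsilon)$ over the cone. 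By Arzel\`a--Ascoli together with a diagonal extraction as $\varepsilon\downarrow0$ I obtain a subsequence converging locally uniformly on $(0,\delta]$ to a limit $y\in C(0,\delta]\cap C^1(0,\delta]$ solving (\ref{eqy}) and still satisfying $k_2t\le y(t)\le k_1t$. The upper bound forces $y(t)\to0$ as $t\downarrow0$, so setting $y(0)=0$ gives $y\in C[0,\delta]$, while the lower bound gives $y(t)\ge k_2t>0$ on $(0,\delta]$, as required.

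The main obstacle is precisely the singularity of (\ref{eqy}) at $t=0$: as $y\to0$ the factor $h/h'_x$ may diverge (when $h'_x(t,0)=0$), so the right-hand side is not Lipschitz---indeed not even bounded---near the origin, and Picard--Lindel\"of does not apply there. Everything hinges on the scaling built into Lemma \ref{k-cond}, which is what makes the linear functions $k_it$ genuine ordered barriers through the singular point and thereby converts the singular existence question into a compactness argument on the regular region $t>0$. This is exactly the mechanism abstracted in the general singular-ODE existence result of Appendix A, which could be invoked in place of the hands-on construction above once the sub/supersolutions have been identified.
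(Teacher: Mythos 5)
Your proposal is correct and takes essentially the same route as the paper: the paper's entire proof consists of taking $k_1,k_2$ and $\delta=\delta_1\wedge\delta_2$ from Lemma \ref{k-cond} and checking that $\beta(t)=k_1t$ and $\alpha(t)=k_2t$ are the ordered super- and subsolutions required by the singular-ODE existence result of Appendix A (Proposition \ref{localex}), which is exactly the barrier structure you identify, and your Arzel\`a--Ascoli construction is just an inlined proof of that cited result, as you yourself note. If you do keep the hands-on version rather than invoking Proposition \ref{localex}, be aware that $w(t,\cdot)$ (hence $h(t,\cdot)$) is only measurable and $\theta(\cdot)$ only right-continuous in $t$, so the first-contact and limit-passage steps should be run in integral form using the uniform, quantitative versions of (\ref{k0}) and (\ref{k2}) that are already implicit in the proof of Lemma \ref{k-cond}.
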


\begin{proof}
 Fix $k_1>0$ and $0<k_2<k_1$ as in Lemma \ref{k-cond}, and take
$\delta=\delta_1\wedge\delta_2$. Then it is easy to check that
$\beta(t)=k_1t$ and $\alpha(t)=k_2t$ satisfy all the requirements in  Theorem \ref{localex} on
$(0,  \delta]$; hence the result.
\end{proof}

Next we extend the local solution $y(\cdot)$ obtained in Proposition \ref{localex4eqy} to the whole time interval $[0,T]$.
Denote $y(\delta)=x_1>0$. Without loss of generality, we assume $x_1<1$ (otherwise, we can make $\delta$ closer to $0$ to reduce $x_1$).

\begin{proposition}\label{globalex4eqy} Under Assumption \ref{h13}-(i) and (iv), for any $\delta>0$ and $0<x_1<1$,
the equation
\begin{equation}\label{eqy2}
\left\{\begin{array}{l}
y'(t)=|\theta({T-t})|^2\left(\frac{h(T-t,\sqrt{y(t)})}{h'_x(T-t,\sqrt{y(t)})}\right)^2 y(t), \quad t\in (\delta, T],\\
y(\delta)=x_1
\end{array}\right.
\end{equation}
has a solution $y(\cdot)\in C^1[\delta, T]$ with $y(t)>0$ $\forall t\in [\delta, T]$.
\end{proposition}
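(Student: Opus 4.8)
The plan is to read (\ref{eqy2}) as a \emph{regular} (non-singular) ODE on the fixed interval $[\delta,T]$ and to combine a standard local-existence-and-continuation argument with an a priori linear growth bound that precludes blow-up. Write the right-hand side as $F(t,y):=|\theta(T-t)|^2\big(\frac{h(T-t,\sqrt{y})}{h'_x(T-t,\sqrt{y})}\big)^2 y$. First I would record that $F$ is well behaved on $[\delta,T]\times(0,\infty)$. By Lemma \ref{star}-(ii) together with Assumption \ref{h13}-(iv), $h'_x(s,x)\ge x\,h''_x(s,0)\ge c_0\,x$, where $c_0:=\inf_{s\in[0,T]}h''_x(s,0)>0$, so the denominator never vanishes for $y>0$; moreover $h(s,\cdot)$ and $h'_x(s,\cdot)$ are smooth (Lemma \ref{hregularity}-(ii)) and $\theta$ is bounded and right continuous (Assumption \ref{rcont-mkt}). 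Hence $F$ is continuous in $y$, right continuous (so measurable) in $t$, and locally bounded, and Carath\'eodory's theorem yields a solution on a maximal interval $[\delta,\tau)$ with $\tau\le T$.

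Next I would dispose of positivity and monotonicity, which are immediate: $F(t,y)>0$ whenever $y>0$, so any solution is strictly increasing and therefore $y(t)\ge y(\delta)=x_1>0$ throughout its interval of existence. This keeps the solution uniformly bounded away from the singular set $\{y=0\}$, so the only possible obstruction to continuing up to $T$ is finite-time blow-up, i.e.\ $y(t)\uparrow+\infty$ as $t\uparrow\tau$.

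The heart of the proof, and the step I expect to be the main obstacle, is an a priori upper bound that forbids blow-up: I would show $F(t,y)\le C(1+y)$ on $[\delta,T]\times(0,\infty)$ for some constant $C$. Since $F(t,y)=|\theta(T-t)|^2\big(\frac{\sqrt{y}\,h}{h'_x}\big)^2$, it suffices to bound $\frac{x\,h(s,x)}{h'_x(s,x)}$ with $x=\sqrt{y}$. For $0<x\le 1$ I would use $h(s,x)\le h(s,0)+x\,h'_x(s,x)=1+x\,h'_x(s,x)$ (monotonicity of $h'_x$ from Lemma \ref{hregularity}-(iii) and $h(s,0)=1$) together with $h'_x(s,x)\ge c_0\,x$ to get $\frac{x\,h}{h'_x}\le \frac{1}{c_0}+1$. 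For $x\ge 1$ I would exploit the convexity of $\ln h(s,\cdot)$ (Lemma \ref{hregularity}-(iv)): the ratio $\frac{h}{h'_x}=1/(\ln h)'_x(s,\cdot)$ is nonincreasing in $x$, so $\frac{x\,h(s,x)}{h'_x(s,x)}\le x\,\frac{h(s,1)}{h'_x(s,1)}\le \frac{H_1}{c_0}\,x$, where $h'_x(s,1)\ge c_0$ from the lower bound above and $H_1:=\sup_{s}h(s,1)<\infty$ by Assumption \ref{h13}-(iii). Squaring and multiplying by $|\theta|^2$ gives $F\le \|\theta\|_\infty^2\big(\frac{1}{c_0}+1\big)^2$ for $y\le 1$ and $F\le \|\theta\|_\infty^2\frac{H_1^2}{c_0^2}\,y$ for $y\ge 1$, hence the claimed linear bound. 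Gronwall's inequality (applied to $z=1+y$) then yields $y(t)\le (1+x_1)e^{C(t-\delta)}-1$, which is finite on all of $[\delta,T]$.

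Finally I would conclude. The a priori bound forces $\tau=T$ and shows $y$ stays bounded up to $T$, so the Carath\'eodory solution extends to $[\delta,T]$ with $x_1\le y(t)\le (1+x_1)e^{C(T-\delta)}-1$; in particular $y(t)>0$ throughout. Continuity of $F$ along the (bounded, positive) trajectory then upgrades $y$ to $C^1$ at points of continuity of $\theta$, giving $y\in C^1[\delta,T]$. I emphasize that the delicate point is the growth estimate: Assumption \ref{h13}-(i),(iv) alone yield only a \emph{quadratic} bound $F\le\|\theta\|_\infty^2(\frac{1}{c_0}+y)^2$, which does not rule out finite-time blow-up, and it is precisely the convexity of $\ln h$ combined with the uniform upper bound $\sup_s h(s,1)<\infty$ that sharpens this to the \emph{linear} bound needed for global existence.
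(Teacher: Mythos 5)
Your proof is correct, and its analytic core coincides with the paper's: both arguments convert the log-convexity of $h(t,\cdot)$ (Lemma \ref{hregularity}-(iv)), the lower bound $h'_x(t,x)\ge x\,h''_x(t,0)$ from Lemma \ref{star}-(ii), and the uniform bounds $\inf_t h''_x(t,0)>0$, $\sup_t h(t,1)<+\infty$ into a linear-in-$y$ bound for the right-hand side, and then invoke Gronwall. Where you differ is the packaging of the existence step: you use Carath\'eodory local existence plus continuation on a maximal interval, with the a priori linear bound excluding blow-up, whereas the paper truncates the nonlinearity below at $x_1$ and above at $c_3=x_1e^{c_2T}$, applies the cited global existence result for bounded, measurable-in-$t$ right-hand sides (Proposition \ref{globalex}) to the auxiliary equation (\ref{auxeqy}) on all of $[\delta,T]$, and then shows by Gronwall that the truncation is never active, so the truncated solution solves (\ref{eqy2}). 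The truncation device buys the paper a one-shot application of a quoted theorem with no need for a continuation criterion; your route avoids the auxiliary equation at the cost of arguing continuation, and your estimate, anchored at $x=1$ rather than at $\sqrt{x_1}$, yields a constant independent of $x_1$ (a cosmetic improvement, since the solution anyway stays above $x_1$). Two caveats, both shared with the paper rather than specific to you: like the paper's own proof, you use $\sup_t h(t,1)<+\infty$, i.e.\ part (iii) of Assumption \ref{h13}, which the proposition's hypothesis list omits; and since $\theta(\cdot)$ and $t\mapsto w(t,\cdot)$ are only right-continuous/measurable, both arguments really produce an absolutely continuous (Carath\'eodory) solution, so the claimed $C^1[\delta,T]$ regularity needs additional continuity in $t$ --- your parenthetical about points of continuity of $\theta$ does not fully close this, but the paper's proof has the same gap between the cited theorems and its $C^1$ assertion.
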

\begin{proof}
Since $\ln h(x,t)$ is convex in $x\ge0$ (by Lemma \ref{hregularity}-(iv)),
% and $h_x^\dpm(t,x)$ is increasing in $x\ge0$ (by Lemma \ref{star}-(i)),
we have
 $$\frac{h'_x(t,x)}{h(t,x)}\ge \frac{h'_x(t,\sqrt{x_1})}{h(t,\sqrt{x_1})}\ge \sqrt{x_1} \frac{h^\dpm_x(t,0)}{h(t,\sqrt{x_1})}\ge
 \sqrt{x_1} \frac{h^\dpm_x(t,0)}{h(t,1)}\ge c_1>0
 \;\;\forall t\in [0, T],\;x\ge \sqrt{x_1},$$
 where the second inequality is due to Lemma \ref{star}-(ii) and $c_1$ (which may depend on $x_1$) is a constant arising from Assumption \ref{h13}-(iii) and -(iv). Hence  for
any $y\ge x_1$, we have
\begin{eqnarray*}\label{control}
|\theta({T-t})|^2\left(\frac{h(T-t,\sqrt{y})}{h'_x(T-t,\sqrt{y})}\right)^2 y
&\le&c_2 y,
\end{eqnarray*}
where $c_2>0$ is a constant depending on $c_1$ and the bound of  $|\theta(\cdot)|^2$.

Denote $c_3=x_1e^{c_2T}<+\infty$, and a truncation function $r(y)=(y\vee x_1)\wedge c_3$ for $y\ge0$.  Consider the ODE
\begin{equation}\label{auxeqy}
\left\{\begin{array}{l}
y'(t)=f(t,y(t)), \quad t\in (\delta, T],\\
y(\delta)=x_1
\end{array}\right.
\end{equation}
where
$$f(t,y):=|\theta({T-t})|^2\left(\frac{h(T-t,\sqrt{r(y)})}
{h_x'(T-t,\sqrt{r(y)})}\right)^2r(y).$$
It is easy to show that $f(\cdot,\cdot)$ satisfies the conditions in Theorem \ref{globalex} on $(\delta, T]$;
hence  (\ref{control}) admits
a solution $y(\cdot)$. Moreover, since $y'(t)\geq0$ we have $y(t)\ge x_1$ $\forall t\in [\delta, T]$.

Take $t_1:=\inf\{t\in (\delta, T]: y(t)\ge c_3\}\wedge T$. If $t_1<T$, then $y(t_1)=c_3$
and $y(t)<c_3$ $\forall t<t_1$. Consequently
$y'(t)=|\theta({T-t})|^2\left(\frac{h(T-t,\sqrt{y(t)})}{h'_x(T-t,\sqrt{y(t)})}\right)^2 y(t)\le c_2 y(t)$,
$\forall t\in (\delta, t_1)$.
Grownwall's inequality then yields $y(t_1)\le x_1e^{c_2t_1 }<c_3$, which is a contradiction.
Hence $t_1=T$ and $y(t)\le c_3$ on $[\delta, T]$. In this case equations (\ref{control}) and  (\ref{eqy2}) coincide.
\end{proof}

Combing Propositions \ref{localex4eqy} and \ref{globalex4eqy}, we arrive at

\begin{theorem}\label{odeex}
Under Assumption  \ref{h13},  the equation (\ref{Lambdaeq}) admits a solution
$\Lambda(\cdot)\in C[0,T]\cap C^1[0,T)$ satisfying $\Lambda(t)>0$
$\forall t\in [0, T)$.
\end{theorem}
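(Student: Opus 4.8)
The plan is to assemble the two existence results already established for the time-reversed equation (\ref{eqy}) — local existence near $t=0$ from Proposition \ref{localex4eqy} and global extension from Proposition \ref{globalex4eqy} — into a single solution on $[0,T]$, and then to translate back to the original ODE (\ref{Lambdaeq}) via the substitution $y(t)=\Lambda(T-t)$ introduced at the start of the subsection. Since Assumption \ref{h13} is in force, both propositions apply, so the genuine analytic content has already been discharged; what remains is a gluing argument together with careful bookkeeping of the function spaces.

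First I would invoke Proposition \ref{localex4eqy} to obtain a local solution $y(\cdot)\in C[0,\delta]\cap C^1(0,\delta]$ of (\ref{eqy}) with $y(0)=0$ and $y(t)>0$ for all $t\in(0,\delta]$. Setting $x_1:=y(\delta)$, I may assume without loss of generality that $0<x_1<1$, shrinking $\delta$ toward $0$ if necessary to reduce $x_1$ (this is exactly the normalization recorded just before Proposition \ref{globalex4eqy}, and it is legitimate because $y$ is continuous with $y(0)=0$). Applying Proposition \ref{globalex4eqy} with this $\delta$ and $x_1$ then yields a solution of (\ref{eqy2}) on $[\delta,T]$, namely $y(\cdot)\in C^1[\delta,T]$ with $y(t)>0$ throughout $[\delta,T]$, and since $y(t)\ge x_1$ there, equations (\ref{eqy2}) and (\ref{eqy}) coincide on this interval.

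Next I would glue the two pieces at $t=\delta$. Both pieces take the common value $x_1$ at $\delta$, so the concatenated function is well defined and continuous. The only point requiring care is $C^1$-regularity at the junction: the local piece is differentiable at $\delta$ from the left and the global piece from the right, and because each satisfies the \emph{same} first-order equation $y'(t)=|\theta(T-t)|^2\big(h(T-t,\sqrt{y})/h'_x(T-t,\sqrt{y})\big)^2 y$ at $t=\delta$ with the same value $y(\delta)=x_1$, the two one-sided derivatives agree. Hence the glued $y(\cdot)$ belongs to $C[0,T]\cap C^1(0,T]$, solves (\ref{eqy}) on $(0,T]$, satisfies $y(0)=0$, and is strictly positive on $(0,T]$.

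Finally I would undo the time reversal by defining $\Lambda(t):=y(T-t)$. As $t$ ranges over $[0,T)$ the argument $T-t$ ranges over $(0,T]$, so $\Lambda\in C[0,T]\cap C^1[0,T)$ with $\Lambda(t)=y(T-t)>0$ for $t\in[0,T)$ and $\Lambda(T)=y(0)=0$. Differentiating, $\Lambda'(t)=-y'(T-t)$, which upon substituting the equation satisfied by $y$ reproduces exactly (\ref{Lambdaeq}), completing the argument. I do not expect a serious obstacle at this stage, since the hard existence estimates live in the two propositions; the one item to handle cleanly is the $C^1$ matching at $t=\delta$, and relatedly the verification that the endpoint behavior $y(0)=0$ transfers to the singular terminal condition $\Lambda(T)=0$ while preserving continuity up to $t=T$.
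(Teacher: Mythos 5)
Your proposal is correct and follows essentially the same route as the paper, whose proof consists precisely of combining Proposition \ref{localex4eqy} (local existence near the singular endpoint) with Proposition \ref{globalex4eqy} (global extension) and reversing time via $\Lambda(t)=y(T-t)$. Your explicit treatment of the normalization $0<x_1<1$, the $C^1$ matching of the two pieces at $t=\delta$, and the transfer of $y(0)=0$ to $\Lambda(T)=0$ merely spells out details the paper leaves implicit.
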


%\begin{remark}
%In Assumption \ref{h13},  (ii) is the critical condition, especially the first half. (iii) and (iv) are true for most interesting cases.
%(i) can be relaxed with more subtle analysis.
%
% In the final version, we may need to add comments on their financial interpretations.
%\end{remark}

\subsection{Inequality (\ref{2ndcond})}

We now provide conditions on the model primitives under which the inequality (\ref{2ndcond}) holds. We assume that the ODE  (\ref{Lambdaeq}) admits a positive solution
$\Lambda(\cdot)\in C[0,T]\cap C^1[0,T)$, and
introduce the following additional assumption.

\begin{assump}\label{condonu}
   For a.e. $t\in [0,T)$, $w(t,\cdot)$ is either convex or inverse S-shaped.
  \end{assump}

A convex weighting function captures risk aversion in terms of exaggerating the small probability of very ``bad" events while downplaying very ``good" events; see Yaari (1987).
On the other hand, as discussed in Introduction, an inverse S-shaped probability weighting function is more interesting as it reflects
the tendency of inflating the small probabilities of {\it both} tails which are consistent
with the conclusions of many experimental and empirical works.

\begin{lemma}
Under  Assumptions \ref{h13} and \ref{condonu}, we have 
\begin{equation}\label{condintbypt}
\lim_{|y|\rightarrow +\infty}\frac{w'_p(t, N (y))N'(y)}{I'(e^{\sqrt{\Lambda(t)}y-c})}=0\;\;\forall c\in\R, \;\forall t\in[0,T).
\end{equation}
\end{lemma}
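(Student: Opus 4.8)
The plan is to prove the limit separately in the two tails $y\to+\infty$ and $y\to-\infty$, showing in each that the numerator $w_p'(t,N(y))N'(y)$ decays like a Gaussian in $y$ while the denominator $I'(e^{\sqrt{\Lambda(t)}y-c})$ decays at most exponentially in $|y|$; the quadratic rate then dominates and forces the ratio to $0$. Throughout I would fix $t\in[0,T)$, abbreviate $\eta:=\sqrt{\Lambda(t)}>0$ (positive by the standing hypothesis of this subsection) and $z:=e^{\eta y-c}$, so that $z\to+\infty$ as $y\to+\infty$ and $z\to0^+$ as $y\to-\infty$. Since $u$ is strictly concave, $I'<0$, so it suffices to control the ratio of absolute values.

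For the numerator I would invoke Assumption \ref{uandw}-(iv), namely $w_p'(t,N(y))\le c\,[N(y)^m+(1-N(y))^m]$ with $m\in(-1,0)$. The blow-up of this bound comes from $(1-N(y))^m$ as $y\to+\infty$ and from $N(y)^m$ as $y\to-\infty$. Applying the lower Mills-ratio bound $1-N(y)\ge\frac{y}{1+y^2}N'(y)$ for $y>0$ (and its reflection $N(y)\ge\frac{|y|}{1+y^2}N'(y)$ for $y<0$) and reversing the inequality because $m<0$, one gets $(1-N(y))^m\le C\,|y|^{|m|}e^{|m|y^2/2}$ in the right tail, and the analogous estimate for $N(y)^m$ in the left tail. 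Multiplying by $N'(y)=\frac{1}{\sqrt{2\pi}}e^{-y^2/2}$ then yields, in both tails,
$$w_p'(t,N(y))N'(y)\le C\,|y|^{|m|}e^{-(1-|m|)y^2/2},$$
where crucially $1-|m|>0$.

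For the denominator the decisive observation is the identity $|I'(z)|=\frac{1}{z\,l'(I(z))}$, which I would derive from $u'=e^{-l}$ and $u'(I(z))=z$; the same relations give $l(I(z))=-\ln z=c-\eta y$, so that $|l(I(z))|$ is only \emph{linear} in $y$. Feeding this into Assumption \ref{uandw}-(ii), i.e. $l'(x)\le e^{a|l(x)|+b}$, produces
$$\frac{1}{|I'(z)|}=z\,l'(I(z))\le e^{\eta y-c}\,e^{a|c-\eta y|+b}\le C_1\,e^{C_2|y|}.$$
Combining the two displays gives $\bigl|\frac{w_p'(t,N(y))N'(y)}{I'(z)}\bigr|\le C\,C_1\,|y|^{|m|}e^{-(1-|m|)y^2/2+C_2|y|}\to0$ as $|y|\to\infty$, since the quadratic term in the exponent overwhelms the linear one.

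I expect the main obstacle to be the bookkeeping in the numerator estimate: matching the correct direction of the Mills-ratio inequality against the negative exponent $m$, and verifying that the leftover Gaussian rate $1-|m|$ stays strictly positive. The denominator identity $|I'(z)|^{-1}=z\,l'(I(z))$ together with $l(I(z))=-\ln z$ is the clean simplification that renders the proof transparent, converting an a priori opaque ratio into a manifestly sub-Gaussian quantity.
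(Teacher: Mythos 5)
Your proof is correct, but it takes a genuinely different route from the paper's. The paper controls the denominator with the polynomial bound $\frac{1}{-I'(z)}\le K(z^{\alpha}+1)$ (inequality (\ref{k1}), which rests on Assumption \ref{uandw}-(i) together with the prudence condition \ref{uandw}-(iii)), so that the quantity of interest is dominated by $K\,w_p'(t,N(y))\bigl(e^{\alpha\sqrt{\Lambda(t)}y-\alpha c}+1\bigr)N'(y)$, whose integral over $\R$ equals $e^{-\alpha c}h(t,\alpha\sqrt{\Lambda(t)})+1<+\infty$ by Lemma \ref{2.5to3.1.ii}; the vanishing at infinity is then read off from this integrability. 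You instead argue pointwise: you bound the numerator sub-Gaussianly, $w_p'(t,N(y))N'(y)\le C|y|^{|m|}e^{-(1-|m|)y^{2}/2}$ with $1-|m|>0$, via Assumption \ref{uandw}-(iv) and a Mills-ratio estimate, and you bound the reciprocal of the denominator at most exponentially via the identity $1/|I'(z)|=z\,l'(I(z))$ with $l(I(z))=-\ln z$ and Assumption \ref{uandw}-(ii); the Gaussian rate then wins. Both routes are legitimate, since all parts of Assumption \ref{uandw} are standing assumptions of the paper even though the lemma only cites Assumptions \ref{h13} and \ref{condonu}. What your version buys: an explicit decay rate, and it avoids the paper's final inference from ``the dominating function is integrable'' to ``the integrand tends to zero,'' which, taken literally, needs an extra word since an integrable function need not vanish at infinity; your pointwise domination sidesteps this entirely. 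What the paper's version buys: brevity and economy, since it recycles facts already established ((\ref{k1}) and the finiteness of $h$), and its denominator estimate requires only polynomial growth of $1/(-I')$ rather than the exponential-type control of Assumption \ref{uandw}-(ii).
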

\begin{proof} %By the  assumptions that $\limsup_{y\rightarrow +\infty}\frac{y^{-\alpha}}{-I'(y)}<+\infty$ and
%$\frac{1}{-I'(y)}\equiv -u^\dpm(I(y))$ is increasing in $y$, we have that there exists $K>0$ such that
%\begin{equation}\label{k1}
%\frac{1}{-I'(y)}\le K(y^\alpha+1) \;\; \forall\, y>0.
%\end{equation}
It follows from (\ref{k1}) that
 $$0\le \frac{w'_p(t, N (y))}{-I'(e^{\sqrt{\Lambda(t)}y-c})}\le K w'_p(t, N (y))( e^{\alpha \sqrt{\Lambda(t)}y-\alpha c}+1).$$
 However
 $$\int_{-\infty}^{+\infty} w'_p(t, N (y))( e^{\alpha \sqrt{\Lambda(t)}y-\alpha c}+1) N'(y)dy=e^{-\alpha c} h(t,\alpha\sqrt{\Lambda(t)})+1<+\infty,$$
 implying $$\lim_{|y|\rightarrow +\infty}w'_p(t, N (y))( e^{\alpha \sqrt{\Lambda(t)}y-\alpha c}+1) N'(y)=0.$$
 This completes the proof.
\end{proof}

%
%
%
% \begin{assump}\label{condintbypt}
%$\lim_{|y|\rightarrow +\infty}\frac{w'_p(t, N (y))N'(y)}{I'(e^{\sqrt{\Lambda(t)}y-c})}=0$ for any $c\in\R$ and $t\in[0,T)$.
%\end{assump}

We now analyze the integral in (\ref{2ndcond}) by decomposing it into $M_0+M$, where
$$M_0=\int_{-\infty}^{+\infty} w_p'\left(t,N \left(\frac{c -g(x)}{\sqrt{\Lambda(t)}}\right)\right)
  N'\left(\frac{c -g(x)}{\sqrt{\Lambda(t)}}\right)
      \frac{c -g(x)}{\Lambda(t)}g'(x)^2 du(x),
$$ and
\begin{eqnarray*}
  M&=& \int_{-\infty}^{+\infty} w_p'\left(t,N \left(\frac{c -g(x)}{\sqrt{\Lambda(t)}}\right)\right)
  N'\left(\frac{c -g(x)}{\sqrt{\Lambda(t)}}\right)g''(x)du(x)\\
   &=&\int_{ -\infty}^{ +\infty} w_p'\left(t,N \left(\frac{c -g(x)}{\sqrt{\Lambda(t)}}\right)\right)
  N'\left(\frac{c -g(x)}{\sqrt{\Lambda(t)}}\right)
      u'( x)dg'(x).
%   &=&-(M_1+M_2+M_3),
\end{eqnarray*}
Applying integration by parts and noting that $N^\dpm(x)=-xN'(x)$,  we can further decompose $M$ into $-(M_1+M_2+M_3)$ where
\begin{eqnarray*}
  M_1&=&\int_{ -\infty}^{ +\infty} g'(x)w_p^\dpm\left(t,N \left(\frac{c -g(x)}{\sqrt{\Lambda(t)}}\right)\right)
  \left(N'\left(\frac{c -g(x)}{\sqrt{\Lambda(t)}}\right)\right)^2 u'(x)\frac{-g'(x)}{\sqrt{\Lambda(t)}}dx\\
  M_2&=&\int_{ -\infty}^{ +\infty} g'(x)w_p'\left(t,N \left(\frac{c -g(x)}{\sqrt{\Lambda(t)}}\right)\right)
  N^\dpm\left(\frac{c -g(x)}{\sqrt{\Lambda(t)}}\right)u'(x)\frac{-g'(x)}{\sqrt{\Lambda(t)}}dx\\
     &=&-\int_{ -\infty}^{ +\infty} g'(x)w_p'\left(t,N \left(\frac{c -g(x)}{\sqrt{\Lambda(t)}}\right)\right)
  N'\left(\frac{c -g(x)}{\sqrt{\Lambda(t)}}\right)u'(x)\frac{c -g(x)}{\sqrt{\Lambda(t)}}\frac{ -g'(x)}{\sqrt{\Lambda(t)}}dx\\
     &=&\int_{ -\infty}^{ +\infty} g'(x)^2w_p'\left(t,N \left(\frac{c -g(x)}{\sqrt{\Lambda(t)}}\right)\right)
  N'\left(\frac{c -g(x)}{\sqrt{\Lambda(t)}}\right)u'(x)\frac{c -g(x)}{\sqrt{\Lambda(t)}}dx\\
     &\equiv&M_0,\\
  M_3&=&\int_{ -\infty}^{ +\infty} g'(x)w_p'\left(t,N \left(\frac{c -g(x)}{\sqrt{\Lambda(t)}}\right)\right)
  N'\left(\frac{c -g(x)}{\sqrt{\Lambda(t)}}\right)
      u^\dpm( x)dx,
\end{eqnarray*}
{\it assuming} that for any $c\in \R$,
\begin{equation}\label{midstep}
w_p'\left(t,N \left(\frac{c -g(x)}{\sqrt{\Lambda(t)}}\right)\right)
  N'\left(\frac{c -g(x)}{\sqrt{\Lambda(t)}}\right)u'(x)g'(x)=0 \mbox{ when } |x|\rightarrow +\infty.
  \end{equation}
Since $g'(x)=-\frac{u^\dpm(x)}{u'(x)}$, (\ref{midstep}) can be written as
$$
w_p'\left(t,N \left(\frac{c -g(x)}{\sqrt{\Lambda(t)}}\right)\right)
  N'\left(\frac{c -g(x)}{\sqrt{\Lambda(t)}}\right)u^\dpm(x)=0\mbox{ when } |x|\rightarrow  \infty,$$
which is  equivalent to (\ref{condintbypt}).

Now, denoting $y=\frac{c-g(x)}{\sqrt{\Lambda(t)}}$, we have
\begin{eqnarray*}
 && M_0+M=-(M_1+M_3)\\
 &=&-\int_{-\infty}^{+\infty}g'(x)N'(y)
   \left(w_p^\dpm\left(t,N(y)\right) N'(y)
      u'(x)\da{y}{x}+w_p'\left(t,N(y))\right) u^\dpm(x) \right)dx\\
   &=&-\int_{-\infty}^{+\infty} g'(x)N'(y)d\left[w_p'\left(t,N(y)\right) u'(x)\right]\\
   &=&-\int^{-\infty}_{+\infty} g'(f(c -\sqrt{\Lambda(t)}y))N'(y)
       d\left[w_p'\left(t,N(y)\right) u'(f(c -\sqrt{\Lambda(t)}y))\right]\\
   &=&\int_{-\infty}^{+\infty} \left[\frac{N'(y)}{f'(c -\sqrt{\Lambda(t)}y)}\right]
       d\left[w_p'\left(t,N(y)\right) u'(f(c -\sqrt{\Lambda(t)}y))\right].
\end{eqnarray*}
Hence, inequality (\ref{2ndcond}) is equivalent to
\begin{equation}\label{2odcon2}
\int_{-\infty}^{+\infty} \frac{N'(y)}{f'(c -\sqrt{\Lambda(t)}y)}
       d\left[w_p'\left(t,N(y)\right) u'(f(c -\sqrt{\Lambda(t)}y))\right]\ge 0
\end{equation}
for any $c\in \R$,  a.e. $t\in [0, T)$.

An obvious  sufficient condition for (\ref{2odcon2}) is that
$$w_p'\left(t,N(y)\right) u'(f(c -\sqrt{\Lambda(t)}y))\equiv w_p'\left(t,N(y)\right) e^{\sqrt{\Lambda(t)}y-c}$$
 is increasing in $y$ for any $c\in \R$,
which holds automatically if $w(t,\cdot)$ is convex. %However, as discussed in
%the introduction we are more interested in
%non-convex probability, especially the inverse S-shaped ones.

\begin{theorem}\label{2ndconsuff}
Under Assumptions \ref{h13} and \ref{condonu}, the inequality (\ref{2ndcond}) holds.
\end{theorem}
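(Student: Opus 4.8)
The plan is to work from the reduced form (\ref{2odcon2}) already obtained, namely to show that
\[
\int_{-\infty}^{+\infty}\frac{N'(y)}{f'(c-\sqrt{\Lambda(t)}y)}\,d\big[w_p'(t,N(y))\,u'(f(c-\sqrt{\Lambda(t)}y))\big]\ge0
\]
for every $c\in\R$ and a.e. $t\in[0,T)$. First I would record two facts that make the integrand transparent: the prefactor $N'(y)/f'(c-\sqrt{\Lambda(t)}y)$ is strictly positive, since $f=g^{-1}$ is strictly increasing; and, because $g$ and $l$ differ by a constant, one has $u'(f(z))=\tilde\kappa e^{-z}$, so that the integrator is $G(y):=w_p'(t,N(y))\,u'(f(c-\sqrt{\Lambda(t)}y))=\tilde\kappa e^{-c}\,w_p'(t,N(y))\,e^{\sqrt{\Lambda(t)}y}>0$. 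Thus the whole question becomes a sign question for $\int_{\R}W\,dG$ with a strictly positive weight $W$ and a positive $G$, and the lemma yielding (\ref{condintbypt}) is precisely what guarantees that the boundary terms vanish in any integration by parts performed below.

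The convex case is immediate and is the sufficient condition already isolated before the theorem: if $w(t,\cdot)$ is convex then $w_p'(t,\cdot)$ is increasing, so $G$ is a product of two positive increasing factors, hence increasing, whence $dG\ge0$ and the integral against the positive weight $W$ is nonnegative. So the real content lies in the inverse S-shaped alternative of Assumption \ref{condonu}.

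For the inverse S-shaped case I would split the line at the inflection point: let $p_0$ be where $w(t,\cdot)$ switches from concave to convex and set $y_0:=N^{-1}(p_0)$. On $(y_0,+\infty)$, $w_p'(t,N(\cdot))$ is increasing, so $G$ is again increasing and that part of the integral is nonnegative; the difficulty is entirely on $(-\infty,y_0)$, where $w_p^\dpm(t,N(\cdot))<0$ and $G$ need not be monotone. My plan there is to analyze the logarithmic derivative $D(y):=(\ln G)'(y)=\sqrt{\Lambda(t)}+\frac{w_p^\dpm(t,N(y))}{w_p'(t,N(y))}N'(y)$, to show that under the inverse-S structure $D$ changes sign exactly once (so that $G$ is globally U-shaped with a unique minimum), and then to rewrite $\int_{\R}W\,dG=\int_{\R}(WG)\,D\,dy$ and convert it into a nonnegativity statement by a single-crossing comparison. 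The decisive ingredient is the first-order equilibrium relation (\ref{1odcon}), which in simplified form reads
\[
\int_{-\infty}^{+\infty} w_p'(t,N(y))\big(\sqrt{\Lambda(t)}-\lambda(t)y\big)e^{\sqrt{\Lambda(t)}y}N'(y)\,dy=0;
\]
this linear balance is the reference against which the single-crossing inequality is measured, so that the negative contribution from the concave tail is exactly compensated by the positive contribution from the convex part.

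The main obstacle is this last step: controlling the sign of the integral over the concave region $(-\infty,y_0)$. Establishing that $D$ is genuinely single-crossing (equivalently, that $\frac{w_p^\dpm(t,N(y))}{w_p'(t,N(y))}N'(y)$ is monotone on the concave part), and then coupling the resulting U-shape of $G$ with the first-order condition and the positive but \emph{non-monotone} weight $W$, is where all the work concentrates. The boundary and integrability subtleties have already been dispatched through (\ref{condintbypt}), and the convex case is free; but the inverse-S tail genuinely requires this delicate global balancing rather than any pointwise sign of the integrand.
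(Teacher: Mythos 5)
Your reduction to (\ref{2odcon2}) and your treatment of the convex case agree with the paper, but for the inverse S-shaped case — which is the whole content of the theorem — you have only stated a plan whose decisive step you yourself leave open, and that plan is not the one that works. The paper's proof does not analyze the log-derivative of the integrator, does not need your single-crossing claim for $D(y)=\sqrt{\Lambda(t)}+\frac{w_p^\dpm(t,N(y))}{w_p'(t,N(y))}N'(y)$, and does not invoke the first-order condition (\ref{1odcon}) at all. Instead it exploits the cancellation you already noticed ($u'(f(z))=\tilde\kappa e^{-z}$) to cancel the exponential against the one sitting inside $f'$: writing $f'(c-\sqrt{\Lambda(t)}y)=-\tilde\kappa I'(\tilde\kappa e^{\sqrt{\Lambda(t)}y-c})e^{\sqrt{\Lambda(t)}y-c}$, the left side of (\ref{2odcon2}) becomes $\sqrt{\Lambda(t)}M_4+\E[F(\xi)G(\xi)]$, where $M_4>0$, $F(y)=\frac{\partial}{\partial y}w_p'(t,N(y))$ and $G(y)=\frac{1}{-\tilde\kappa I'(\tilde\kappa e^{\sqrt{\Lambda(t)}y-c})}$. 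The two ingredients that close the argument are precisely the hypotheses you never bring into play: prudence (Assumption \ref{uandw}-(iii)) makes $G$ \emph{increasing}, and an integration-by-parts computation (with vanishing boundary terms) gives $\E[F(\xi)]=h'_x(t,0)\ge0$ by Assumption \ref{h13}-(i). Since the inverse S-shape makes $F$ change sign exactly once (negative below $N^{-1}(q)$, positive above — this is immediate from $w_p^\dpm$, no monotonicity of $\frac{w_p^\dpm}{w_p'}N'$ is required), the pointwise inequality $F(\xi)\bigl(G(\xi)-G(N^{-1}(q))\bigr)\ge0$ yields $\E[FG]\ge G(N^{-1}(q))\,h'_x(t,0)\ge0$, and the theorem follows.

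By contrast, your route keeps the full weight $W(y)=N'(y)/f'(c-\sqrt{\Lambda(t)}y)$, which you correctly note is non-monotone, and then hopes to balance the negative contribution of the concave region against (\ref{1odcon}). That balance cannot come from (\ref{1odcon}): in simplified form (\ref{1odcon}) is the identity $\sqrt{\Lambda(t)}\,h(t,\sqrt{\Lambda(t)})=\lambda(t)\,h'_x(t,\sqrt{\Lambda(t)})$, which involves only the weighting function and $\Lambda$, whereas (\ref{2odcon2}) carries the utility-dependent weight $1/(-\tilde\kappa I')$; an identity with the wrong weight cannot certify the sign of your integral for every admissible $u$. Likewise, the single-crossing of $D$ (U-shape of the integrator) is neither established by you nor a consequence of Assumption \ref{condonu} in general. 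So the proposal has a genuine gap exactly where you locate it, and filling it requires the covariance/Chebyshev mechanism above (cancel the exponential, use prudence for monotonicity of $G$, and use $h'_x(t,0)\ge0$) rather than the balancing against (\ref{1odcon}) that you propose.
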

\begin{proof}
We have shown above that (\ref{2ndcond}) holds at $t$ when $w(t,\cdot)$ is convex. Let us now focus on the case when $w(t,\cdot)$ is inverse S-shaped.

First of all, it follows from Theorem \ref{odeex} that the ODE  (\ref{Lambdaeq}) admits a positive solution
$\Lambda(\cdot)\in C[0,T]\cap C^1[0,T)$.

Fix $t\in[0,T)$. Noting  $f'(x)=-\tilde\kappa I'(\tilde\kappa e^{-x})e^{-x}$
where $\tilde \kappa=\kappa e^{-\frac{1}{2}\Lambda(0)}>0$, 
we can rewrite condition (\ref{2odcon2}) as
\begin{eqnarray*}
0&\le& \int^{+\infty}_{-\infty}\frac{N'(y)}{f'(c -\sqrt{\Lambda(t)}y)}
       d\left[w_p'\left(t,N(y)\right) u'(f(c -\sqrt{\Lambda(t)}y))\right]\\
&=&\int^{+\infty}_{-\infty}\frac{N'(y)}{-\tilde\kappa I'(\tilde\kappa e^{\sqrt{\Lambda(t)}y-c})e^{\sqrt{\Lambda(t)}y-c}} d\left[w_p'\left(t,N(y)\right) e^{\sqrt{\Lambda(t)}y-c}\right]\\
&=&\int^{+\infty}_{-\infty}\frac{N'(y)}{-\tilde\kappa I'(\tilde\kappa e^{\sqrt{\Lambda(t)}y-c})}\left[w_p'\left(t,N(y)\right)
\sqrt{\Lambda(t)}+\frac{\partial w_p'(t,N(y))}{\partial y}\right] dy \\
&=&\sqrt{\Lambda(t)}M_4+M_5,
\end{eqnarray*}
where $M_4=\int^{+\infty}_{-\infty}\frac{w_p'\left(t,N(y)\right)N'(y)}{-\tilde\kappa I'(\tilde\kappa e^{\sqrt{\Lambda(t)}y-c})}dy>0$, and
\begin{eqnarray*}
M_5&=&\int^{+\infty}_{-\infty}\frac{\frac{\partial w_p'(t,N(y))}{\partial y}}{-\tilde\kappa I'(\tilde\kappa e^{\sqrt{\Lambda(t)}y-c})}N'(y) dy\\
&=&\E[F(\xi)G(\xi)]
\end{eqnarray*}
with $F(y)=\frac{\partial w_p'(t,N(y))}{\partial y}$ and $G(y)=\frac{1}{-\tilde\kappa I'(\tilde\kappa e^{\sqrt{\Lambda(t)}y-c})}>0$.

Since $w(t,\cdot)$ is inverse S-shaped, there exists $q\in (0,1)$ such that $w_p^\dpm(t,p)\leq 0$ on $(0, q)$, $w_p^\dpm(t,p)\geq 0$ on
$(q, 1)$, and $w_p^\dpm(t,q)=0$.
Furthermore, for any $b>0$,
\begin{eqnarray*}
&&\int^{b}_{0} \frac{\partial w_p'(t,N(y))}{\partial y}N'(y)dy\\
&=&\int^{b}_{0}N'(y)d[w_p'(t,N(y))]\\
&=&(w_p'(t,N(y))N'(y))|_{y=0}^{y=b}-\int^{b}_{a}w_p'(t,N(y))dN'(y)\\
&=&w_p'(t,N(b))N'(b)-\frac{1}{\sqrt{2\pi}}w_p'(t,\frac{1}{2})+\int^{b}_{0}w_p'(t,N(y))y dN(y).
\end{eqnarray*}
Since $w_p^\dpm(t,p)\geq 0$ for sufficiently large $p$, $\int^{b}_{0} \frac{\partial w_p'(t,N(y))}{\partial y}N'(y)dy$ is increasing in $b$ for sufficiently large $b$. Hence
$\lim_{b\uparrow +\infty} \int^{b}_{0} \frac{\partial w_p'(t,N(y))}{\partial y}N'(y)dy$ exists.
However, $+\infty>h'_x(t,0)\equiv \E[w_p'(t,N(\xi))\xi]$,  we conclude that $\lim_{b\uparrow +\infty} \int^{b}_{0}w_t'( N(y))y dN(y)$ exists and is finite.
This implies that $a:=\lim_{b\uparrow +\infty}w_p'(t,N(b))N'(b)$ exists. It then follows from the fact that $+\infty>h(t,0)\equiv \E[w_p'(t,N(\xi))]$ that $a=0$, i.e.,
$$\int^{+\infty}_{0} \frac{\partial w_p'(t,N(y))}{\partial y}N'(y)dy=-\frac{1}{\sqrt{2\pi}}w_p'\left(t,\frac{1}{2}\right)+\int^{+\infty}_{0}w_p'(t,N(y))y dN(y).
$$
Similarly, we can show that
$$ \int^{0}_{-\infty} \frac{\partial w_p'(t,N(y))}{\partial y}N'(y)dy=\frac{1}{\sqrt{2\pi}}w_p'\left(t,\frac{1}{2}\right)+\int^{0}_{-\infty} w_p'(t,N(y))y dN(y).
$$
As a result,
$$\E[F(\xi)]=\int^{+\infty}_{-\infty} \frac{\partial w_p'(t,N(y))}{\partial y}N'(y)dy=\int_{-\infty}^{+\infty}w_p'(t,N(y))y dN(y)=h'_x(t,0)\ge 0.$$

Denote $c=N^{-1}(q)$. Then $F(\xi)$ is negative on $(-\infty, c)$, positive on $(c, +\infty)$, and $F(c)=0$. Since $u^\dpm(x)$ is increasing in $x$, so is $I'(x)=1/u^\dpm(I(x))$; hence
$G(\cdot)$ is increasing. Then we have $F(\xi)(G(\xi)-G(c))\ge 0$. Consequently,
$$0\le\E[F(\xi)(G(\xi)-G(c))]=\E[F(\xi)G(\xi)]-G(c)\E[F(\xi)]=\E[F(\xi)G(\xi)]-G(c)h'_x(t,0),$$
which implies $$\E[F(\xi)G(\xi)]\ge G(c)h'_x(t,0)\ge 0.$$ The proof is complete.
\end{proof}

\section{An Example}

In this section, we give a concrete example to demonstrate our results.

%\subsection{The probability distortion function}\label{const-distortion}
Define $$F(x):=N(x/\sqrt{2}), \mbox{ and } w(t,p):=F(N^{-1}(p)),$$
where  $N$ is the probability distribution function of a standard normal.
So in this example our weighting function is time-invariant, and we will drop $t$ and
write $w(p)=w(t,p)$ throughout this section. 

It is clear that $w(0)=0, w(1)=1$, $w'(p)=\frac{F'(N^{-1}(p))}{N'(N^{-1}(p))}>0$. So $w$ is a probability weighting function satisfying Assumption \ref{basicassump}-(i). Moreover, we have
\begin{eqnarray*}
w(N(x))&=&F(x)=N(x/\sqrt{2}),\\ %\frac{1}{\sqrt{2}}N'(z/\sqrt{2}),\\
w'(N(x))&=&\frac{F'(x)}{N'(x)}=\frac{\frac{1}{\sqrt{2}}N'(x/\sqrt{2})}{N'(x)}
=\frac{1}{\sqrt{2}}e^{x^2/4},\\
h(t,x)&\equiv &\E[w'(N(\xi))e^{x\xi}]=\frac{1}{\sqrt{2}} \int_{\R} \frac{1}{\sqrt{2\pi}}e^{xz}e^{-z^2/4}dz=e^{x^2}.
%&=&e^{x^2}\frac{1}{\sqrt{2\pi*2}} \int_{\R}e^{-(z-2x)^2/(2*2)}dz=e^{x^2}.
\end{eqnarray*}
%Again $h$ is independent of $t$ in this example; so henceforth we write $h(x)=h(t,x)$.

Now, take any $m\in (-1, -1/2)$. Then
\begin{eqnarray*}
\lim_{p\downarrow 0}\frac{w'(p)}{p^m}&=&
\lim_{x\rightarrow -\infty}\frac{w'(N(x))}{N(x)^m}\\
&=&\lim_{x\rightarrow -\infty}\frac{1}{\sqrt{2}}\frac{e^{x^2/4}}{N(x)^m}\\
&=&\frac{1}{\sqrt{2}}\left(\lim_{x\rightarrow -\infty}\frac{e^{x^2/(4m)}}{N(x)}\right)^m\\
&=&\frac{1}{\sqrt{2}}\left(\lim_{x\rightarrow -\infty}\frac{xe^{x^2/(4m)}/2m}{N'(x)}\right)^m\\
&=&\frac{1}{\sqrt{2}}\left(\frac{1}{2m} \lim_{x\rightarrow -\infty} xe^{x^2(\frac{1}{4m}+\frac{1}{2})}\right)^m\\
&=&0.
\end{eqnarray*}
Also, we have $\lim_{p\uparrow 0}\frac{w'(p)}{(1-p)^m}=\lim_{q\downarrow 0}\frac{w'(1-q)}{q^m}=\lim_{q\downarrow 0}\frac{w'(q)}{q^m}=0.$ So Assumption \ref{uandw}-(iv) holds.

Since $h(t,x)=e^{x^2}$, it is straightforward to verify that Assumption \ref{h13} is satisfied.
%Recall $h(t,x)=h(x)=e^{x^2}$, so
%$$h'(x)= e^{x^2}*2x, h''(x)=e^{x^2}(4x^2+2), h'''(x)=e^{x^2}(2x(4x^2+2)+8x).$$
%\begin{itemize}
%\item Since $h'(0)=h'''(0)=0, h''(0)=2>0$. So (i) and (iv) hold.
%\item Since $h(1)=e, h''(1)=6e$, so (iii) holds.
%\item Since $h'(0)=0$, so the first part in (ii) holds.
%\end{itemize}
Moreover, $w'(p)=\frac{1}{2}e^{(N^{-1}(p))^2/4}$; so $w$ is concave on $[0,1/2]$ and
convex on $[1/2,1]$. It is therefore inverse S-shaped, satisfying Assumption \ref{condonu}. We have now checked the validity of all the assumptions on $w$ in this paper.

With the explicit form of $h$,  the ODE (\ref{Lambdaeq}) becomes  % With $\frac{h(x)}{h'(x)}=\frac{1}{2x}$, the ODE
%turns into
$$\Lambda'(t)=-\frac{1}{4}|\theta(t)|^2,\;\; t\in[0,T), \qquad \Lambda(T)=0,$$
whose solution is $\Lambda(t)=\frac{1}{4}\int_t^T|\theta(s)|^2ds$. Thus,
$$\lambda(t)=\sqrt{-\Lambda'(t)/|\theta(t)|^2}=1/2.$$

As discussed at the end of Section 4.1, $\lambda(t)=1/2$ means that
the RDU agent in this example acts like an EUT agent with the  market price of risk
is revised to half of the original one. In other words, because of the probability weighting and the consistent planning (recall that consistent planning amounts to
a heavily constrained optimization), the agent will lose half of the risk premium or Sharpe ratio.

%\subsection{Utility function}\label{exp-utility}
To complete this example, we present the equilibrium portfolio $\pi^*$ when
 the exponential utility function is $u(x)=1-e^{-\alpha x}$, $x\in\R$, for some $\alpha>0$, which clearly satisfies all the assumptions on the utility function in this paper.

 In this case, $u'(x)=\alpha e^{-\alpha x}$ and hence $I(x)=\frac{\ln \alpha-\ln x}{\alpha}$, $x>0$. It is easy to check that there exists  $\kappa>0$ satisfying the budget constraint (\ref{bc}). Hence the desired terminal
wealth is
$$X^*(T)=I(\kappa\bar \rho(T))=c+\frac{1}{\alpha}\int_0^T\lambda(t)\theta(t)^\top  d\tilde  W(t),$$
where $\tilde W(t)=W(t)+\int_0^t\theta(s)ds$ is a standard Brownian motion under the risk-neutral measure, $\Q$, of the original market, and $c$ is a constant dependent of the market parameters.

By the pricing theory, the replicating wealth process $X^*(\cdot)$ of $X^*(T)$ is a $\Q$-martingale (recall that the risk-free rate has been assumed to be 0); hence
$$ X^*(t)=c+\frac{1}{\alpha}\int_0^t\lambda(s)\theta(s)^\top d\tilde  W(s),\;\;t\in[0,T].$$
Matching the above with the wealth equation $dX^*(t)=\pi^*(t)^\top \sigma(t) d\tilde  W(t)$, we obtain the equilibrium portfolio
$$\pi^*(t)=\frac{1}{\alpha}\lambda(t)(\sigma(t)^\top)^{-1}\theta(t).$$
Recall that the optimal portfolio of an EUT agent with the same exponential utility is
$\pi^{EUT}(t)=\frac{1}{\alpha}(\sigma(t)^\top)^{-1}\theta(t).$ Hence the risky investment of the RDU agent at $t$ is that of the EUT agent multiplied by $\lambda(t)$. Note that this result does not depend on the spcific form of the weighting function so long as $\lambda(\cdot)$ exists. In the special case when $w(t,p)=F(N^{-1}(p))$,
$\lambda(t)\equiv 1/2$; so the risk exposure is reduced by
half.

\section{Reduction in Risk Premium}

We have seen that the sophisticated RDU agent behaves as if the risk premium is factored by $\lambda(\cdot)$. In the example with the specific probability weighting function presented in Section 5, $\lambda(t)=1/2$; so the risk premium is reduced and the agent acts more cautiously than her EUT counterpart. In this section, we answer the general question
of when there is a reduction in risk premium or, equivalently, when 
$\lambda(t)< 1$. 

We assume $\Lambda(\cdot)$ exists. 
It follows from  (\ref{Lambdaeq}) that
$$\lambda(t)=\left| \frac{\sqrt{\Lambda(t)}h(t, \sqrt{\Lambda(t)}\,)}{h'_x(t, \sqrt{\Lambda(t)}\,)}\right|>0,\;\;t\in[0,T).$$

%Recall the definition of $h(t,x)=\E[w'_x(t,N(\xi))e^{x\xi}]$, we can make the comparison of $\lambda(t)$ and $1$ more intuitive.

For any $x\in \R^+$, define a probability measure $\Q^x$
by $\frac{d\Q^x}{d\p}=e^{-\frac{x^2}{2}+x\xi}$, under which $\xi-x\sim N(0,1)$.
So
\begin{eqnarray*}
h(t,x)&=&e^{\frac{x^2}{2}}\E[w'_p(t,N(\xi))e^{-\frac{x^2}{2}+x\xi}]\\
&=&e^{\frac{x^2}{2}}\E^{\Q^x}[w'_p(t,N(\xi-x+x))]\\
&=&e^{\frac{x^2}{2}}\E[w'_p(t,N(\xi+x))]\\
&=&e^{\frac{x^2}{2}}H(t,x),
\end{eqnarray*}
where $H(t,x):=\E[w'_p(t,N(\xi+x))]\geq0$. The following characterizes the condition $\lambda(t)< 1$.

\begin{theorem}\label{lambda<1}
Assume that $\Lambda(\cdot)$ exists. Then, for any $t\in[0,T)$,
$\lambda(t)< 1$ if and only if $H'_x(t,\sqrt{\Lambda(t)})> 0$.
\end{theorem}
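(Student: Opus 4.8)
The plan is to reduce the inequality $\lambda(t)<1$ to a sign condition on $H'_x$ by substituting the decomposition $h(t,x)=e^{x^2/2}H(t,x)$ into the closed-form expression for $\lambda(t)$. First I would recall, from the reduction of the first-order condition (\ref{1odcon}) in the proof of Theorem \ref{mainth} together with the definition (\ref{thetabar}), that for $t\in[0,T)$
$$\lambda(t)=\frac{\sqrt{\Lambda(t)}\,h(t,\sqrt{\Lambda(t)})}{h'_x(t,\sqrt{\Lambda(t)})}.$$
Write $x:=\sqrt{\Lambda(t)}$, which is strictly positive on $[0,T)$ since $\Lambda(t)>0$ there. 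Because $\lambda(t)>0$ while $\sqrt{\Lambda(t)}$ and $h(t,\sqrt{\Lambda(t)})$ are both strictly positive (the latter since $w'_p(t,\cdot)>0$), the denominator $h'_x(t,x)$ is necessarily strictly positive.

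Next I would exploit the factorization $h(t,x)=e^{x^2/2}H(t,x)$ established immediately before the theorem, where $H(t,x)=\E[w'_p(t,N(\xi+x))]$. Since $h(t,\cdot)$ is smooth on $[0,\infty)$ by Lemma \ref{hregularity}, so is $H(t,\cdot)=e^{-x^2/2}h(t,\cdot)$, and the product rule gives
$$h'_x(t,x)=e^{x^2/2}\left(x\,H(t,x)+H'_x(t,x)\right).$$
Substituting this together with $h(t,x)=e^{x^2/2}H(t,x)$ into the expression for $\lambda(t)$, the common factor $e^{x^2/2}$ cancels and I obtain the key identity
$$\lambda(t)=\frac{x\,H(t,x)}{x\,H(t,x)+H'_x(t,x)}.$$

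Finally, the equivalence would follow by inspection. The denominator equals $e^{-x^2/2}h'_x(t,x)$, strictly positive by the first step, and the numerator $x\,H(t,x)$ is strictly positive because $x>0$ and $w'_p(t,\cdot)>0$ forces $H(t,x)>0$. Multiplying $\lambda(t)<1$ through by the positive denominator therefore turns it into $x\,H(t,x)<x\,H(t,x)+H'_x(t,x)$, i.e.\ $H'_x(t,x)>0$; recalling $x=\sqrt{\Lambda(t)}$ yields precisely $H'_x(t,\sqrt{\Lambda(t)})>0$, and each step is reversible. I do not expect a genuine obstacle here, since the content is purely algebraic once the factorization is in place; the only two points deserving a word of justification are the strict positivity of $h'_x(t,\sqrt{\Lambda(t)})$ (which I read off from $\lambda(t)>0$, or alternatively from Lemma \ref{star}-(ii) and Assumption \ref{h13}-(iv), whence $h'_x(t,x)\ge x\,h''_x(t,0)>0$ for $x>0$) and the legitimacy of differentiating $H$ in $x$, which is immediate from the smoothness of $h$.
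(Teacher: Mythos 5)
Your proof is correct and follows essentially the same route as the paper: substitute the factorization $h(t,x)=e^{x^2/2}H(t,x)$ into $\lambda(t)=\sqrt{\Lambda(t)}\,h(t,\sqrt{\Lambda(t)})/h'_x(t,\sqrt{\Lambda(t)})$, cancel the factor $e^{x^2/2}$ to get $\lambda(t)=\frac{xH(t,x)}{H'_x(t,x)+xH(t,x)}$ at $x=\sqrt{\Lambda(t)}$, and use $\lambda(t)>0$ together with positivity of $xH(t,x)$ to read off the equivalence. Your added remarks on the strict positivity of $h'_x(t,\sqrt{\Lambda(t)})$ and the differentiability of $H$ only make explicit what the paper leaves implicit.
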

\begin{proof}
We have
\[
\frac{x h(t,x)}{h'_x(t,x)}=\frac{x e^{\frac{x^2}{2}} H(t,x)}{e^{\frac{x^2}{2}}H'_x(t,x)+xe^{\frac{x^2}{2}}H(t,x)}=\frac{x H(t,x)}{H'_x(t,x)+xH(t,x)}.
\]
Noting $\lambda(t)>0\;\forall t\in[0,T)$, we conclude that, for any $t\in[0,T)$,
$\lambda(t)=\frac{x h(t,x)}{h'_x(t,x)}|_{x=\sqrt{\Lambda(t)}}< 1$ if and only if $H'_x(t,x)|_{x=\sqrt{\Lambda(t)}}\equiv H'_x(t,\sqrt{\Lambda(t)})> 0$.
\end{proof}

%Now we can see  that the comparison  depends on two components:
%the myopic component $H'_x(t, \cdot)$ determined by $w'_x(t,\cdot)$ only, and the
%non-myopic component $\Lambda(t)$ depending on all $w'_x(s,\cdot)$ over $s\in (t, T)$.
%So in general, we don't have $\lambda(t)\le 1$.
%But we have a sufficient condition for $\lambda(\cdot)\le 1$.

\begin{corollary}\label{incre-H}
For any $t\in[0,T)$, $\lambda(t)<1$ if $H(t,x)$ is strictly increasing in $x\in \R^+$.
In particular, if $w(t,p)$ is strictly convex in $p$, then $\lambda(t)<1$.
\end{corollary}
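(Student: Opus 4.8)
The plan is to reduce everything to a single sign condition, since Theorem~\ref{lambda<1} already tells us that $\lambda(t)<1$ is equivalent to $H'_x(t,\sqrt{\Lambda(t)})>0$, and $\Lambda(t)>0$ for $t\in[0,T)$ guarantees that this evaluation point is a genuine interior point of $\R^+$. So the whole task is to certify $H'_x(t,\sqrt{\Lambda(t)})>0$ under each hypothesis. To that end I would first record two interchangeable representations of the derivative, obtained by differentiating $H(t,x)=\E[w_p'(t,N(\xi+x))]$ under the expectation (justified by the same dominated-convergence estimates used for $h$ in Lemma~\ref{hregularity}, since $H=e^{-x^2/2}h$):
\begin{equation*}
H'_x(t,x)=\E\!\left[w_p'(t,N(\xi+x))\,\xi\right]=\E\!\left[w_p^\dpm(t,N(\xi+x))\,N'(\xi+x)\right].
\end{equation*}
The first form exhibits $H'_x$ as a covariance and the second as an average of the second derivative of the weighting function; each drives one of the two claims.

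I would settle the \emph{strict-convexity} (``in particular'') statement first, because it is the clean one. If $w(t,\cdot)$ is strictly convex then $w_p^\dpm(t,\cdot)\ge0$ cannot vanish on any subinterval of $(0,1)$, so $\{p:w_p^\dpm(t,p)>0\}$ has positive Lebesgue measure; since $N'>0$ everywhere and $N(\xi+x)$ has a density on $(0,1)$, the second representation forces $H'_x(t,x)>0$ for \emph{every} $x\in\R^+$, in particular at $x=\sqrt{\Lambda(t)}$, and Theorem~\ref{lambda<1} then gives $\lambda(t)<1$. Equivalently, strict convexity makes $\xi\mapsto w_p'(t,N(\xi+x))$ strictly increasing, so the covariance representation gives $H'_x=\mathrm{Cov}\!\left(w_p'(t,N(\xi+x)),\xi\right)>0$ by the covariance inequality for monotone functions (Chebyshev's association inequality). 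This en passant verifies that strict convexity makes $H(t,\cdot)$ strictly increasing, so the second claim is indeed a special case of the first.

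For the general statement I would begin from the fact that strict monotonicity of $H(t,\cdot)$ on $\R^+$ forces $H'_x(t,\cdot)\ge0$ there, whence $\lambda(t)=\frac{xH}{H'_x+xH}\big|_{x=\sqrt{\Lambda(t)}}\le1$ is immediate. The step I expect to be the genuine obstacle is upgrading $\le$ to $<$ at the single prescribed abscissa $\sqrt{\Lambda(t)}$: a strictly increasing $C^1$ function may still have an isolated point where its derivative vanishes (think of $x\mapsto x^3$), so monotonicity alone does not deliver $H'_x(t,\sqrt{\Lambda(t)})>0$. To handle this I would invoke the rigidity of Gaussian smoothing: $H(t,\cdot)$ is the convolution of the positive function $u\mapsto w_p'(t,N(u))$ with the standard Gaussian density $N'$, hence real-analytic, so a non-constant strictly increasing $H(t,\cdot)$ satisfies $H'_x(t,\cdot)>0$ away from an isolated set of critical abscissae. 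This already yields $\lambda(t)<1$ at every $t$ whose value $\sqrt{\Lambda(t)}$ avoids that exceptional set; securing it at \emph{every} $t$ is precisely what forces the recourse to the stronger convexity hypothesis of the previous paragraph, under which $H'_x>0$ holds \emph{everywhere} and no information about the location of $\sqrt{\Lambda(t)}$ is needed.
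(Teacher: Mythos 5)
Your reduction is exactly the paper's: the paper's entire proof of this corollary is the remark ``This is obvious,'' i.e.\ it consists of invoking Theorem \ref{lambda<1} and passing from the monotonicity of $H(t,\cdot)$ to $H'_x(t,\sqrt{\Lambda(t)})>0$ without further comment. Your treatment of the ``in particular'' clause is complete and in fact more careful than the paper's: both of your representations of $H'_x$ (the covariance form $\E[w_p'(t,N(\xi+x))\xi]$ and the form $\E[w_p^\dpm(t,N(\xi+x))N'(\xi+x)]$) are legitimate, since $H(t,x)=e^{-x^2/2}h(t,x)$ and Lemma \ref{hregularity} supplies the differentiability for $x\ge 0$, and strict convexity of $w(t,\cdot)$ then gives $H'_x(t,x)>0$ for every $x>0$, hence $\lambda(t)<1$ at the point $\sqrt{\Lambda(t)}>0$. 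Your observation that this also makes $H(t,\cdot)$ strictly increasing, so that the second claim is a special case of the first, matches the intended logic.

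Where you part ways with the paper is the first clause, and you should be clear that your proposal does not actually establish it as stated: from strict monotonicity you obtain only $H'_x\ge 0$ on $\R^+$, hence $\lambda(t)\le 1$, and your analyticity argument upgrades this to $\lambda(t)<1$ only when $\sqrt{\Lambda(t)}$ avoids the (possibly nonempty) isolated zero set of $H'_x(t,\cdot)$ --- you concede as much in your last sentence. The logical point you raise is genuine: strict monotonicity of a smooth (even real-analytic) function does not preclude an isolated critical point, and nothing in your argument rules out $\sqrt{\Lambda(t)}$ landing on one. The paper simply does not engage with this; its one-line proof implicitly reads ``strictly increasing'' as entailing $H'_x(t,\cdot)>0$ on $\R^+$, which is also how the hypothesis is actually verified in the Section 5 example ($H(t,x)=e^{x^2/2}$). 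So, measured against the statement as written, your proof of the first clause has a gap (you prove a weaker conclusion), but the missing step is precisely the step the paper passes over without argument; under the evidently intended reading of the hypothesis as $H'_x(t,\cdot)>0$ on $\R^+$, both your argument and the paper's are immediate.
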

\begin{proof} This is obvious.
\end{proof}

In the example in Section 5, $H(t,x)= e^{-x^2/2}h(t,x)=e^{x^2/2},
$
which is strictly increasing in $x\in \R^+$; hence Corollary \ref{incre-H} applies.
On the other hand, as explained earlier, a strict convexity of $w(t,p)$ in $p$ underlines strict overweighting of the left tail (i.e. the bad events) and strict underweighting of the right tail (i.e. the good events); so it enhances the level of risk aversion leading to a smaller risk premium and less risky exposure.

Even if $w$ is more general including being inverse S-shaped, it is still possible that $\lambda(t)<1$, as demonstrated by the example in Section 5.
In this case, the equivalent condition $H'_x(t,\sqrt{\Lambda(t)})> 0$ can be used to uncover the balance between the risk-averse component and the risk-seeking component 
in probability weighting needed to render an overall reduction in risk premium.  We leave a detailed study along this direction to interested readers.

\iffalse

It is well known that the optimal portfolio of an EUT agent with the exponential utility is
$\pi^{EUT}(t)=\frac{1}{\alpha}(\sigma(t)^\top)^{-1}\theta(t).$ Recall that
$\lambda(t)=1/2$; so the risk exposure of the sophisticated RDU agent is reduced by
half in this example.
\fi

\section{Concluding Remarks}

A continuous-time RDU portfolio selection problem is inherently time inconsistent.
A sophisticated agent, realizing that in the future she herself might disagree to her current planning, resorts to consistent investment by implementing intra-personal equilibrium strategies from which she will have no incentive to deviate at any point in time.
We have solved the open problem  of deducing such equilibrium strategies, by developing an approach that to our best knowledge is new to the literature.
The main technical thrust of our approach is to express the  first-order derivative
of the small deviation from an equilibrium as a quadratic function of the deviating amount. The definition of the equilibrium requires the quadratic function to have a constant sign whatever the amount might be. This leads to an equality (which in turn leads to an ODE) and an inequality, the two constituting   the main sufficient conditions for deriving explicitly the final wealth profile and, hence, the resulting strategy courtesy of the market completeness.

With an intra-personal equilibrium strategy, the agent, at any given time, in effect solves a
{\it constrained}  dynamically optimal RDU model in which the constraint is to honor 
{\it all} her future strategies. One may think this would lead to an extremely complicated terminal wealth profile. Our result, however, shows that the terminal wealth is surprisingly simple -- it resembles that of an optimal Merton portfolio, except that the investment opportunity set needs to be modified properly. In other words, the RDU agent behaves as if she was an EUT agent, only that she is in a fictions market where she blends her probability weighting function into the market price of risk.
This observation may in turn shed lights on finding intertemporal market equilibria for markets
where all the agents are EUT and/or RDU consistent planners.

We derived our equilibrium strategy based on an Ansatz; as such, our results do not rule out the possibility of having other equilibria beyond our Ansatz.
In general, uniqueness of intra-personal equilibrium for time-inconsistent problems
remains a very challenging research question.\footnote{To our best knowledge, Hu et al (2017) is the only paper that addresses the uniqueness in continuous time.} It is particularly the case for the RDU model, or so we believe.

\newpage

\noindent{\Large\bf Appendices}

\appendix

\section{Existence of Solutions to a Class of ODEs}
\iffalse
We consider the initial value problem
\begin{equation}\label{ivp}
\left\{\begin{array}{l}
   y'=f(t,y),\\
   y(t_{0})=y_{0}\in \R
   \end{array}\right.
 \end{equation}
 where $f$ is define on $D:=\{(t, x): |t-t_{0}|\le a, |x-y_{0}|\le b$.

 \begin{theorem}[Carath\'eodory's existence theorem, {\cite[Theorem 1.1 of Chapter 2]{CEN}}]
 If $f$ satisfies in $D$
 \begin{itemize}
 \item $f$ continuous in $y$ for any $t$,
 \item $f$ is measurable in $t$ for any $y$,
 \item $\exists$ measurable function $m(t)$ such that $f(t, y)\le m(t)$ for any $(t,y)\in D$,
 \end{itemize}
 has a solution in the extended sense in a neighborhood of the initial condition.
\end{theorem}

Now let's
\fi

%The main results of this paper depend crucially on the existence of a positive solution to
%the ODE (\ref{Lambdaeq}). Note that this equation is highly nonlinear, and singular at
%$t=T$ in that the denominator of the right hand side of the equation is 0 at $T$.
%In this section we provide conditions under which (\ref{Lambdaeq}) admits positive solutions.
%The key steps are first to establish the local existence in the neighborhood of $t=T$, and then
%extend it globally to the whole time interval.

In this appendix we present some general existing results on a class of ODEs, taken from Agarwal and  O'Regan (2004).
Consider the following ODE
\begin{equation}\label{ivpsg}
\left\{\begin{array}{l}
   y'(t)=f(t,y(t)), \quad t\in (0, T],\\
   y(0)=0\in \R,
   \end{array}\right.
 \end{equation}
 where $f$ may not be defined at $t=0$.

 Denote by $AC[0, s]$ the set of absolutely continuous functions on $[0,s]$ where $s>0$. The
 following assumption is introduced in Agarwal and  O'Regan (2004).

\begin{assump}\label{ivpas1}
There exists $t_0\in(0,T]$ such that
\begin{enumerate}
 \item There is $\beta\in AC[0, t_0]\cap C^{1}(0,t_0]$ with $\beta(0)\ge 0$ such that
$$\beta'(t)\ge f(t, \beta(t)), \;t\in (0, t_0].$$
\item There is $\alpha\in AC[0, t_0]\cap C^{1}(0,t_0]$ with $\alpha(t)\le \beta(t)$  $\forall t\in [0, t_0]$ and
$\alpha(0)\le 0$ such that
$$\alpha'(t)\le f(t, \alpha(t)), \;t\in (0, t_0]. $$
\item
The function $$f^{*}(t,y)=\left\{\begin{array}{lll}
f(t, \beta(t))+g(\beta(t)-y)&& y\ge \beta(t),\\
f(t,y)&& \alpha(t)<y<\beta(t),\\
f(t, \alpha(t))+g(\alpha(t)-y)&& y\le \alpha(t),
\end{array}\right.$$
is, in the region $(0, t_0]\times \R$,  continuous in $y$ for any $t$ and measurable in $t$ for any $y$,
where $g(x)=x\id_{|x|\le 1}+\rm{sign}(x)\id_{|x|>1}$ is the radial retraction.\footnote{In \cite{cdf},
$f^{*}$ is assumed to be (jointly) continuous; but from the proofs therein, we can easily weaken it to this current
version.}
\end{enumerate}
\end{assump}

The following two propositions, both drawn from Agarwal and  O'Regan (2004), concern the local and global existence of the ODE (\ref{ivpsg})
respectively.

\begin{proposition}[Agarwal and  O'Regan 2004, Theorem 3.1]\label{localex}
Under Assumption \ref{ivpas1}, the ODE (\ref{ivpsg}) admits a solution $y\in AC[0, t_0]$ satisfying
$\alpha(t)\le y(t)\le \beta(t)$ for $t\in [0, t_0]$.
\end{proposition}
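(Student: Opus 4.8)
The plan is to desingularize the problem at $t=0$ by approximation: solve a sequence of regular initial value problems that stay bounded away from the singular endpoint, trap each approximating solution between the lower and upper solutions $\alpha$ and $\beta$, and then extract a limit via a compactness argument. Concretely, I would fix a sequence $t_n\downarrow 0$ in $(0,t_0]$ and, for each $n$, consider the modified equation $y'(t)=f^{*}(t,y(t))$ on the non-singular interval $[t_n,t_0]$, with initial datum $y_n(t_n)$ taken to be the projection of $0$ onto $[\alpha(t_n),\beta(t_n)]$. Since $f^{*}$ is a Carath\'eodory function on $(0,t_0]\times\R$ (continuous in $y$, measurable in $t$) and admits a local integrable majorant away from the singular endpoint, the classical Carath\'eodory existence theorem yields a solution $y_n\in AC[t_n,t_0]$.

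The first key step is a comparison (trapping) lemma: every solution of the modified equation satisfies $\alpha(t)\le y(t)\le\beta(t)$. To prove $y\le\beta$, set $p=y-\beta$, which is absolutely continuous. On the set $\{p>0\}$ one has $y>\beta$, so by the definition of $f^{*}$ and the oddness of the retraction $g$ one gets $y'=f(t,\beta)-g(p)$ with $g(p)>0$, while the admissibility inequality gives $\beta'\ge f(t,\beta)$ a.e.; hence $p'\le -g(p)<0$ a.e.\ on $\{p>0\}$. Since $p(t_n)\le 0$ and $p$ is strictly decreasing wherever it is positive, a first-passage argument (let $t^{*}=\sup\{t\le s:p(t)\le 0\}$ and derive a contradiction from $p(s)>0$) forces $p\le 0$ throughout; the bound $y\ge\alpha$ is symmetric. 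In particular each $y_n$ takes values in the fixed compact band determined by $\|\alpha\|_\infty$ and $\|\beta\|_\infty$, so $\{y_n\}$ is uniformly bounded, and on this band $f^{*}=f$, so each $y_n$ in fact solves the original equation \eqref{ivpsg} on $[t_n,t_0]$.

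Next I would establish equicontinuity and pass to the limit. Uniform boundedness of the values together with a uniform integrable bound on $f^{*}$ over the band makes $\{\,\int|y_n'|\,\}$ uniformly bounded and the $y_n$ uniformly equicontinuous; extending each $y_n$ to $[0,t_0]$ and invoking the Arzel\`a--Ascoli theorem produces a subsequence converging uniformly to some $y\in C[0,t_0]$. Passing to the limit in the integral form $y_n(t)=y_n(s)+\int_s^t f^{*}(r,y_n(r))\,dr$, using the continuity of $f^{*}$ in its second argument and dominated convergence, shows that $y$ solves $y'=f^{*}(t,y)$ on $(0,t_0]$; the trapping bounds persist in the limit, so $\alpha\le y\le\beta$, whence $f^{*}(t,y)=f(t,y)$ and $y$ solves \eqref{ivpsg}. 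The initial condition $y(0)=0$ follows from $y_n(t_n)\to 0$ (the projections converge to the median of $\{\alpha(0),0,\beta(0)\}=0$), uniform convergence, and continuity of $y$ at $0$, which is consistent with the trapping since $\alpha(0)\le 0\le\beta(0)$; membership $y\in AC[0,t_0]$ comes from the uniform integrability of the derivatives.

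The main obstacle will be the behaviour at the singular endpoint $t=0$, namely controlling the approximating solutions uniformly \emph{up to} $t=0$. This requires an integrable majorant for $f^{*}(\cdot,y(\cdot))$ on the band $[\alpha,\beta]$ valid right down to $0$, and this is precisely where the Carath\'eodory structure of $f$ must be played against the admissibility inequalities $\beta'\ge f(t,\beta)$ and $\alpha'\le f(t,\alpha)$ (which yield $\alpha',\beta'\in L^1[0,t_0]$). It is this integrability near the singularity that upgrades the limit from a solution merely on $(0,t_0]$ to one in $AC[0,t_0]$ attaining the prescribed value at $0$; in the paper's application to \eqref{eqy}, with $\alpha(t)=k_2t$ and $\beta(t)=k_1t$, the explicit near-origin behaviour of $h$ and the boundedness of $\theta$ make the right-hand side bounded on the band as $t\downarrow 0$, so this integrability is verified directly.
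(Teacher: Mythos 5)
First, a point of reference: the paper does not prove this proposition at all --- it is imported verbatim from Agarwal and O'Regan (2004, Theorem 3.1), with only a footnote remarking that the joint continuity of $f^{*}$ assumed there can be weakened to continuity in $y$ plus measurability in $t$. So there is no in-paper proof to compare against; your regularize-trap-and-pass-to-the-limit scheme (approximating problems on $[t_n,t_0]$, truncation via $f^{*}$, comparison with $\alpha$ and $\beta$, Arzel\`a--Ascoli) is the standard strategy behind results of this type, and your trapping lemma (the computation $p'\le -g(p)<0$ on $\{p>0\}$ using the oddness of the radial retraction, and the first-passage argument) is correct as written.

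The genuine gap is exactly at the point you flag as ``the main obstacle'' but then do not close. Your argument leans twice on an integrable-in-$t$ majorant of $f^{*}$ on the band $\{\alpha(t)\le y\le\beta(t)\}$: once (locally, away from $t=0$) to invoke Carath\'eodory's existence theorem for the approximating problems, and once (uniformly, down to $t=0$) to get equicontinuity of $\{y_n\}$ up to the singular endpoint, the $AC[0,t_0]$ membership of the limit, and the attainment of $y(0)=0$. Neither majorant is supplied by Assumption \ref{ivpas1}: condition (3) gives only continuity in $y$ and measurability in $t$, which by itself yields no local $L^{1}$ bound, and the differential inequalities $\beta'\ge f(t,\beta(t))$, $\alpha'\le f(t,\alpha(t))$ control $f$ only \emph{along the graphs} of $\beta$ and $\alpha$ (and only one-sidedly at that); they say nothing about $f(t,y)$ for $\alpha(t)<y<\beta(t)$, so the claim that ``$\alpha',\beta'\in L^1$'' delivers an integrable majorant on the band is unfounded. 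Your closing remark that in the paper's application ($\alpha(t)=k_2t$, $\beta(t)=k_1t$, $h$ and $\theta$ bounded near the origin) the right-hand side is bounded on the band verifies the needed bound in that special case, but it does not prove the proposition as stated. To repair the general argument you must either (i) add the band majorant $|f(t,y)|\le m(t)\in L^1(0,t_0]$ for $\alpha(t)\le y\le\beta(t)$ as an explicit hypothesis (which is in the spirit of the joint-continuity assumption of the cited source, at least away from $t=0$), or (ii) obtain continuity at $0$ from the squeeze $\alpha\le y\le\beta$ --- which pins $y(0)=0$ only when $\alpha(0)=\beta(0)=0$, not under the stated $\alpha(0)\le 0\le\beta(0)$ --- together with some additional structure (e.g.\ a sign or monotonicity condition on $f$ near $t=0$, as holds in the paper's application where $f\ge0$ makes $y$ monotone and hence absolutely continuous up to $0$). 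As it stands, the limit you construct is only guaranteed to solve the equation on $(0,t_0]$, and its extension to an $AC[0,t_0]$ solution with $y(0)=0$ is asserted rather than proved.
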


\begin{proposition}[Agarwal and  O'Regan 2004, Theorem 1.4]\label{globalex}
Given $t_0\in(0,T]$, assume that $f(t,y)$ is continuous in $y$ for any $t\in [t_{0}, T]$,
measurable in $t$ for any $y\in \R$,
and there exists $g(\cdot)\in L^{1}[t_{0},T]$ such that
$|f(t, y)|\le g(t)$ $\forall (t,y)\in [t_{0},T]\times \R$.
Then the equation
\begin{equation}\label{ivpsg0}
\left\{\begin{array}{l}
   y'(t)=f(t,y(t)), \quad t\in (t_{0}, T],\\
   y(t_{0})=a\in \R,
   \end{array}\right.
 \end{equation}
admits a solution $y\in AC[t_0,T]$.
\end{proposition}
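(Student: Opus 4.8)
The plan is to recast (\ref{ivpsg0}) as a fixed-point problem for the Volterra-type integral operator and invoke Schauder's theorem, with the global $L^{1}$ domination doing the essential work of ruling out blow-up. Concretely, I would work in the Banach space $C[t_0,T]$ equipped with the sup norm and introduce, for $y\in C[t_0,T]$,
$$(\mathcal{T}y)(t):=a+\int_{t_0}^{t}f(s,y(s))\,ds,\qquad t\in[t_0,T].$$
The point is that a fixed point of $\mathcal{T}$ is exactly a solution of (\ref{ivpsg0}) in $AC[t_0,T]$: such a $y$ is the indefinite integral of the $L^{1}$ function $s\mapsto f(s,y(s))$, hence absolutely continuous, satisfies $y(t_0)=a$, and upon differentiation gives $y'(t)=f(t,y(t))$ for a.e. $t$. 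So the whole task reduces to producing a fixed point.

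Next I would verify the Schauder hypotheses by exhibiting a compact convex set left invariant by $\mathcal{T}$. Writing $G(t):=\int_{t_0}^{t}g(s)\,ds$, the bound $|f(s,y(s))|\le g(s)$ yields, for every $y\in C[t_0,T]$, the two estimates $|(\mathcal{T}y)(t)-a|\le G(t)\le G(T)$ and $|(\mathcal{T}y)(t)-(\mathcal{T}y)(t')|\le|G(t)-G(t')|$. Hence $\mathcal{T}$ maps all of $C[t_0,T]$ into
$$K:=\bigl\{y\in C[t_0,T]:\ |y(t)-a|\le G(t)\ \text{and}\ |y(t)-y(t')|\le|G(t)-G(t')|\ \ \forall\, t,t'\bigr\},$$
which is closed, bounded, and convex, and which is uniformly equicontinuous because $G$ is (uniformly) continuous on the compact interval; by the Arzel\`a--Ascoli theorem $K$ is therefore compact, and in particular $\mathcal{T}(K)\subseteq K$. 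For continuity of $\mathcal{T}$ on $K$ I would argue that if $y_n\to y$ uniformly then $f(s,y_n(s))\to f(s,y(s))$ for a.e. $s$ by continuity of $f$ in its second argument, so the common dominating function $g\in L^{1}[t_0,T]$ lets the dominated convergence theorem give $\sup_{t}|(\mathcal{T}y_n)(t)-(\mathcal{T}y)(t)|\le\int_{t_0}^{T}|f(s,y_n(s))-f(s,y(s))|\,ds\to 0$. Schauder's fixed-point theorem then supplies a fixed point $y\in K$, which is the desired solution. (An alternative route, should one wish to avoid Schauder, is to build Tonelli-type delayed Euler approximants, extract a uniformly convergent subsequence again via Arzel\`a--Ascoli, and pass to the limit using dominated convergence; the analytic ingredients are identical.)

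I expect the main technical obstacle to be the measurability of the superposition $s\mapsto f(s,y(s))$ for a merely continuous $y$, which is what makes $\mathcal{T}y$ even well defined. This is the Carath\'eodory measurability lemma: one approximates $y$ uniformly by step functions $y_k$, for which $s\mapsto f(s,y_k(s))$ is plainly measurable as a countable patching of the measurable maps $f(\cdot,c)$, and then uses continuity of $f$ in the second variable to obtain $f(s,y_k(s))\to f(s,y(s))$ pointwise, so the limit is measurable. The only other conceptual point worth flagging is that global existence on the entire interval $[t_0,T]$ — rather than merely a short interval — comes precisely from the domination being uniform in $y$: it produces the a priori estimate $|y(t)-a|\le G(t)$ confining every orbit to $K$, so that no separate continuation-and-blow-up argument is required.
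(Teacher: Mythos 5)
Your proof is correct. Note, however, that the paper does not prove this proposition at all: it is quoted verbatim from Agarwal and O'Regan (2004, Theorem 1.4) and is nothing other than the classical Carath\'eodory global existence theorem, so there is no in-paper argument to compare against. Your route---recasting the problem as a fixed point of the Volterra operator $\mathcal{T}y(t)=a+\int_{t_0}^{t}f(s,y(s))\,ds$, trapping $\mathcal{T}$ in the compact convex set $K$ cut out by the primitive $G(t)=\int_{t_0}^{t}g(s)\,ds$, and applying Schauder together with dominated convergence---is the standard proof of that theorem, and you correctly flag and resolve the two genuinely delicate points: the measurability of $s\mapsto f(s,y(s))$ for continuous $y$ (via uniform approximation by step functions), and the fact that the uniform-in-$y$ domination $|f(t,y)|\le g(t)$ is what converts local existence into existence on all of $[t_0,T]$ without a continuation argument. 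The only caveat worth recording is that the solution produced satisfies $y'(t)=f(t,y(t))$ only for a.e.\ $t$, which is exactly the sense intended by the $AC[t_0,T]$ formulation in the statement and is how the paper uses it downstream (e.g.\ in Proposition \ref{globalex4eqy}).
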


%\begin{remark}
 %This theorem is similar to the Carath\'eodory's existence theorem, but here the existence is global,
%not a local one as in Carath\'eodory's  theorem.
%\end{remark}

\section{Tversky--Kahnamen's Probability Weighting Functions}\label{KT-check}

In this appendix we verify that a class of time--varying  Tversky--Kahnamen (TK) probability weighting functions satisfy all the technical assumptions required in the paper.

First of all, the original TK weighting function, introduced in Tversky and Kahnamen (1992), is \begin{equation}\label{TKW}
w_{TK}(p;\delta):=\frac{p^\delta}{[p^\delta+(1-p)^\delta]^{1/\delta}},\;\;p\in[0,1],
\end{equation}
 where
$\delta\in (0,1]$ is a parameter. This is an inverse S-shaped function, with $w_{TK}'(p;\delta)>1$ when $p$ is close to both 0 and 1. Moreover, a smaller
$\delta$ implies a stronger degree of probability weighting. When $\delta=1$, there is no probability weighting.

We now vary the parameter $\delta$ over time to generate a family of time-dependent TK
functions.
Given a measurable function $\delta: [0,T]\mapsto (0,1]$ with $0<\delta(t)<1$ for $t\in[0,T)$ and $\delta(T)=1$, define
\begin{equation}\label{wtp}
\tilde w(t,p):=w_{TK}(p;\delta(t)),\;\;(t,p)\in [0,T]\times [0,1].
\end{equation}
The purpose of this appendix is to show that $\tilde w$ satisfies all the assumptions in the paper, under proper conditions on $\delta(\cdot)$.

Clearly, $\tilde w$ satisfies Assumption \ref{basicassump}-(i).

\begin{proposition}\label{uandwiv}
$\tilde w$ satisfies  Assumption \ref{uandw}-(iv).
\end{proposition}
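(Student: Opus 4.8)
The plan is to fix $t$ and exhibit a pair $(c,m)$ with $m\in(-1,0)$ depending only on $\delta(t)$; since Assumption \ref{uandw}-(iv) permits $c,m$ to vary with $t$, no uniformity in $t$ is needed. At the endpoint $t=T$ we have $\delta(T)=1$, so $\tilde w(T,p)=p$ and $\tilde w_p'(T,p)\equiv 1$; because $p^m+(1-p)^m\ge 2$ for any $m<0$ and $p\in(0,1)$, any $m\in(-1,0)$ together with $c\ge 1/2$ works. Henceforth I would fix $t\in[0,T)$ and write $\delta:=\delta(t)\in(0,1)$, and propose the exponent $m:=\delta-1\in(-1,0)$.

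The first concrete step is to differentiate (\ref{TKW}) directly. Writing $D(p):=p^\delta+(1-p)^\delta$, one obtains
\[
\tilde w_p'(t,p)=D(p)^{-\frac{1}{\delta}-1}\Big[(\delta-1)p^{2\delta-1}+\delta p^{\delta-1}(1-p)^\delta+p^\delta(1-p)^{\delta-1}\Big].
\]
Since $D(p)\to 1$ as $p\to 0^+$ and as $p\to 1^-$, the prefactor $D^{-1/\delta-1}$ stays bounded away from $0$ and $\infty$ near either endpoint, so the asymptotics of $\tilde w_p'$ are governed by the bracketed term alone.

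The crux is the endpoint analysis. For $p\to 0^+$ I would factor $p^{\delta-1}$ out of the bracket, writing it as $p^{\delta-1}\big[(\delta-1)p^{\delta}+\delta(1-p)^\delta+p(1-p)^{\delta-1}\big]$, whose bracketed factor tends to $\delta>0$; hence $\tilde w_p'(t,p)/p^{\delta-1}\to\delta$, and in particular $\tilde w_p'(t,p)\le C_0\,p^{m}$ on some interval $(0,\eta]$. The point that must be checked carefully is that the term $(\delta-1)p^{2\delta-1}$, which looks most singular when $\delta<1/2$, is in fact subdominant: since $2\delta-1>\delta-1$, the power $p^{2\delta-1}$ is \emph{less} singular than $p^{\delta-1}$, so after factorization it contributes only $(\delta-1)p^{\delta}\to 0$. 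Symmetrically, as $p\to 1^-$ I would factor out $(1-p)^{\delta-1}$; the term $p^\delta(1-p)^{\delta-1}$ then dominates while the other two carry extra positive powers of $(1-p)$, giving $\tilde w_p'(t,p)/(1-p)^{\delta-1}\to 1$ and thus $\tilde w_p'(t,p)\le C_1(1-p)^{m}$ on some $[1-\eta,1)$.

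Finally, on the compact middle interval $[\eta,1-\eta]$ the map $\tilde w_p'(t,\cdot)$ is continuous, hence bounded by some $C_2$, whereas $p^m+(1-p)^m\ge \eta^m>0$ there; so $\tilde w_p'(t,p)\le (C_2\eta^{-m})[p^m+(1-p)^m]$ on that interval. Taking $c:=\max\{C_0,C_1,C_2\eta^{-m}\}$ and using $p^m+(1-p)^m\ge p^m$ on $(0,\eta]$ and $\ge(1-p)^m$ on $[1-\eta,1)$ then yields the claimed inequality on all of $(0,1)$. The only genuinely delicate step is the endpoint asymptotics, and within it the bookkeeping confirming that $(\delta-1)p^{2\delta-1}$ does not spoil the leading order $\delta p^{\delta-1}$; the remaining ingredients are routine continuity and compactness.
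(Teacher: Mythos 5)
Your proposal is correct in substance and starts from the same point as the paper --- the explicit formula for $\tilde w_p'(t,\cdot)$ and the exponent $m=\delta(t)-1$ --- but finishes by a different route. The paper writes $w_{TK}(p;\delta)=p^\delta\alpha(p;\delta)$ with $\alpha(p;\delta)=[p^\delta+(1-p)^\delta]^{-1/\delta}\in(2^{-1/\delta},1]$ and gets the \emph{global} pointwise bound
\[
w'_{TK}(p;\delta)=\delta\alpha(p;\delta) p^{\delta-1}+\alpha(p;\delta)^{1+\delta}p^\delta(1-p)^{\delta-1}-\alpha(p;\delta)^{1+\delta}p^{2\delta-1}< p^{\delta-1}+(1-p)^{\delta-1}
\]
in one line, using only $\delta\le1$, $\alpha\le1$, $p^\delta\le1$ and dropping the negative term; this yields $c=1$ on all of $(0,1)$ with no case analysis. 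You instead prove the same bound via endpoint asymptotics ($\tilde w_p'(t,p)/p^{\delta-1}\to\delta$ as $p\downarrow0$ and $\tilde w_p'(t,p)/(1-p)^{\delta-1}\to1$ as $p\uparrow1$, both of which you identify correctly, including the check that $(\delta-1)p^{2\delta-1}$ is subdominant) together with continuity and compactness in the middle; your derivative formula agrees with the paper's. One small slip: on $[\eta,1-\eta]$ the claimed inequality $p^m+(1-p)^m\ge\eta^m$ is false in general for $m<0$ (the minimum of $p^m+(1-p)^m$ over $(0,1)$ is $2^{1-m}$, attained at $p=1/2$, which is below $\eta^m$ for small $\eta$), but this is immaterial since $p^m+(1-p)^m\ge 2^{1-m}\ge2$ everywhere, so replacing $C_2\eta^{-m}$ by $C_2/2$ repairs the middle interval. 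Your explicit treatment of $t=T$ is fine and is handled in the paper implicitly by restricting to $\delta\in(0,1)$. Net comparison: the paper's algebraic estimate is shorter and gives an explicit uniform constant, while your asymptotic-plus-compactness argument is more mechanical and carries nonexplicit constants, but both deliver the assumption with the same exponent $m=\delta(t)-1$.
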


\begin{proof} As $t\in[0,T)$ is fixed in Assumption \ref{uandw}-(iv), from the construction of $\tilde w$ it suffices to prove the conclusion for $w_{TK}(\cdot;\delta)$
with $\delta\in (0,1)$ fixed.

Denote a function
\begin{equation}\label{alphapd}
\alpha(p;\delta):=[p^\delta+(1-p)^\delta]^{-\frac{1}{\delta}},\;\;p\in[0,1],
\end{equation}
with parameter $\delta\in (0,1)$.  It is easy to see that
$1\leq p^\delta+(1-p)^\delta< 2$; hence
\begin{equation}\label{alphabd}
2^{-\frac{1}{\delta}}< \alpha(p;\delta)\leq 1.
\end{equation}
Moreover,
$\alpha(p;\delta)=\alpha(1-p;\delta)$, $w_{TK}(p;\delta)=p^\delta \alpha(p;\delta)$. Hence
\begin{eqnarray*}
 \alpha'(p;\delta)&=&-\frac{1}{\delta}[p^\delta+(1-p)^\delta]^{-\frac{1}{\delta}-1}\delta[ p^{\delta-1}-(1-p)^{\delta-1}]\\
&=&-\alpha(p;\delta)^{1+\delta}[p^{\delta-1}-(1-p)^{\delta-1}],\\
w'_{KT}(p;\delta)&=&\delta p^{\delta-1}\alpha(p;\delta)+p^\delta \alpha'(p;\delta)\\
&=&\delta \alpha(p;\delta)p^{\delta-1}
  +\alpha(p;\delta)^{1+\delta} p^\delta (1-p)^{\delta-1}-\alpha(p;\delta)^{1+\delta} p^{2\delta-1}\\
  &<& p^{\delta-1}+(1-p)^{\delta-1},
\end{eqnarray*}
where the last inequality is by (\ref{alphabd}). This completes the proof.
\end{proof}

Before we move to the next assumption,
for any probability weighting function $w: [0,1]\mapsto [0,1]$, define
(with a slight abuse of notation)
$h(x;w):=\E[w'(N(\xi))e^{x\xi}]$, $x\in\R$.
It then follows from Lemma \ref{hregularity}-(ii) that, for any {\it odd} number $n\geq1$, %$h^{(2n+1)}(0;w)=\E[w'(N(\xi))\xi^{2n+1}].$
\begin{eqnarray}
 h^{(n)}(0;w)&=&\E[w'(N(\xi);\delta)\xi^{n}]\\
&=& \int_{-\infty}^0w'(N(y))y^{n}N'(y)dy+\int_0^{+\infty}w'(N(y))y^{n}N'(y)dy\nonumber\\
&=&\int_{+\infty}^0 w'(N(-y))(-y)^{n}N'(-y)d(-y)+\int_0^{+\infty}w'(N(y))y^{n}N'(y)dy\nonumber\\
&=&-\int_0^{+\infty} w'(1-N(y))y^{n}N'(y)dy+\int_0^{+\infty}w'(N(y))y^{n}N'(y)dy\nonumber\\
&=&\int_0^{+\infty}[w'(N(y))- w'(1-N(y))]y^{n}N'(y)dy,\label{hderiv}
\end{eqnarray}
and
\begin{eqnarray}
 h^\dpm(0;w)&=&\int_{-\infty}^0w'(N(y))y^{2}N'(y)dy+\int_0^{+\infty}w'(N(y))y^{2}N'(y)dy\nonumber\\
&=&\int_{+\infty}^0 w'(N(-y))(-y)^{2}N'(-y)d(-y)+\int_0^{+\infty}w'(N(y))y^{2}N'(y)dy\nonumber\\
&=&\int_0^{+\infty} w'(1-N(y))y^{2}N'(y)dy+\int_0^{+\infty}w'(N(y))y^{2}N'(y)dy\nonumber\\
&=&\int_0^{+\infty}[w'(N(y))+ w'(1-N(y))]y^{2}N'(y)dy.\label{hderiv2}
\end{eqnarray}

\begin{proposition}
$\tilde w$ satisfies Assumption \ref{h13}-(i).
\end{proposition}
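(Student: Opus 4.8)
The plan is to reduce the whole statement to the odd-derivative formula (\ref{hderiv}) and then to a single pointwise inequality. Since $\tilde w(t,\cdot)=w_{TK}(\cdot;\delta(t))$, Assumption \ref{h13}-(i) asks precisely for $h^{(1)}(0;w_{TK}(\cdot;\delta(t)))\ge0$ and $h^{(3)}(0;w_{TK}(\cdot;\delta(t)))\ge0$, i.e. the cases $n=1$ and $n=3$ of (\ref{hderiv}) (which is applicable because $\tilde w$ satisfies Assumption \ref{uandw}-(iv) by Proposition \ref{uandwiv}, so Lemma \ref{hregularity}-(ii) holds). As $y^{n}N'(y)\ge0$ on $(0,\infty)$ for every odd $n$, both integrals are nonnegative as soon as the integrand factor is nonnegative, i.e. as soon as $w_{TK}'(p;\delta)\ge w_{TK}'(1-p;\delta)$ for all $p\in[1/2,1)$. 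So the entire claim collapses to this one symmetric-difference inequality, uniformly in $\delta\in(0,1]$.

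First I would record the derivative formula from the preceding proposition together with the symmetry $\alpha(p;\delta)=\alpha(1-p;\delta)$ noted in (\ref{alphapd}). Abbreviating $u:=p^{\delta-1}$, $v:=(1-p)^{\delta-1}$, $\alpha:=\alpha(p;\delta)$ and using $p^{2\delta-1}=pu^{2}$, $p^{\delta}(1-p)^{\delta-1}=puv$, the formula reads $w_{TK}'(p;\delta)=\delta\alpha u-\alpha^{1+\delta}pu^{2}+\alpha^{1+\delta}puv$. Swapping $p\leftrightarrow 1-p$ (equivalently $u\leftrightarrow v$) and subtracting, the common factor $\alpha$ pulls out and
$$w_{TK}'(p;\delta)-w_{TK}'(1-p;\delta)=\delta\alpha(u-v)-\alpha^{1+\delta}\big[pu^{2}-(1-p)v^{2}-(2p-1)uv\big].$$

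The key step is to simplify the bracket. Viewing $pu^{2}-(2p-1)vu-(1-p)v^{2}$ as a quadratic in $u$, its discriminant is $\big[(2p-1)^{2}+4p(1-p)\big]v^{2}=v^{2}$, so it factors as $(u-v)\big(pu+(1-p)v\big)$. Now comes the crucial cancellation: $pu+(1-p)v=p\cdot p^{\delta-1}+(1-p)\cdot(1-p)^{\delta-1}=p^{\delta}+(1-p)^{\delta}=\alpha^{-\delta}$, whence $\alpha^{1+\delta}(u-v)(pu+(1-p)v)=\alpha(u-v)$. Substituting back collapses everything to a single factor,
$$w_{TK}'(p;\delta)-w_{TK}'(1-p;\delta)=(\delta-1)\,\alpha(p;\delta)\,\big(p^{\delta-1}-(1-p)^{\delta-1}\big).$$
For $p\ge1/2$ one has $\delta-1\le0$, $\alpha(p;\delta)>0$, and—since $\delta-1<0$ makes $x\mapsto x^{\delta-1}$ decreasing while $p\ge1-p$—also $p^{\delta-1}-(1-p)^{\delta-1}\le0$; the product of the two nonpositive factors is nonnegative, giving the desired inequality (with equality only when $\delta=1$). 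Feeding this into (\ref{hderiv}) for $n=1$ and $n=3$ yields $h^{(1)}(0)\ge0$ and $h^{(3)}(0)\ge0$, which is Assumption \ref{h13}-(i).

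I expect the main obstacle to be spotting the factorization $(u-v)(pu+(1-p)v)$ and, above all, the identity $pu+(1-p)v=\alpha^{-\delta}$ that produces the clean collapse to the scalar $(\delta-1)$. Without this identity one is left with several genuinely incomparable power terms: note for instance that $pu^{2}-(1-p)v^{2}=p^{2\delta-1}-(1-p)^{2\delta-1}$ changes sign according as $\delta\gtrless 1/2$, so no crude term-by-term bound settles the inequality. Everything downstream—invoking (\ref{hderiv}) and reading off the signs of three explicit factors—is routine.
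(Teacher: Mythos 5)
Your proposal is correct and takes essentially the same approach as the paper: both reduce Assumption \ref{h13}-(i) via the representation (\ref{hderiv}) (for $n=1,3$) to the pointwise inequality $w_{TK}'(p;\delta)\ge w_{TK}'(1-p;\delta)$ for $p\ge 1/2$, and both establish it through the same closed-form identity $w_{TK}'(p;\delta)-w_{TK}'(1-p;\delta)=(\delta-1)\,\alpha(p;\delta)\,[p^{\delta-1}-(1-p)^{\delta-1}]$. The only difference is in the intermediate algebra: the paper reaches this identity directly from $\alpha'(p;\delta)=-\alpha(p;\delta)^{1+\delta}[p^{\delta-1}-(1-p)^{\delta-1}]$ together with the symmetry $\alpha(p;\delta)=\alpha(1-p;\delta)$, whereas you expand $w_{TK}'$ and factor the resulting quadratic using $p\cdot p^{\delta-1}+(1-p)(1-p)^{\delta-1}=\alpha(p;\delta)^{-\delta}$ — a cosmetic variation that collapses to the same expression.
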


\begin{proof} With a slight abuse of notation, define
$h(x;\delta):=\E[w'_{KT}(N(\xi);\delta)e^{x\xi}]$, $x\in\R$, where $\delta\in(0,1]$.
Again, as $t\in[0,T]$ is fixed in Assumption \ref{h13}-(i), we can drop $t$ and need only to show that $h'(0;\delta)\geq0$ and $h'''(0;\delta)\geq0$.

%To this end,  we use the formula (\ref{hderiv})
Noting that $w'_{TK}(1-p;\delta) =\delta (1-p)^{\delta-1}\alpha(p;\delta)-(1-p)^\delta \alpha'(p;\delta),$
 we have
\begin{equation}\label{wkt}
\begin{array}{rcl}
 w'_{TK}(p;\delta)-w'_{TK}(1-p;\delta)&=&\delta p^{\delta-1}\alpha(p;\delta)+p^\delta \alpha'(p;\delta)-
 \left[\delta (1-p)^{\delta-1}\alpha(p;\delta)-(1-p)^\delta \alpha'(p;\delta)\right]\\
&=&\delta\alpha(p;\delta)[p^{\delta-1}-(1-p)^{\delta-1}]
+\alpha'(p;\delta)[p^\delta+(1-p)^\delta]\\
&=&\delta \alpha(p;\delta)[p^{\delta-1}-(1-p)^{\delta-1}]+\alpha'(p;\delta) \alpha(p;\delta)^{-\delta}\\
&=&\delta \alpha(p;\delta)[p^{\delta-1}-(1-p)^{\delta-1}]
+\alpha(p;\delta)[(1-p)^{\delta-1}-p^{\delta-1}]\\
%&=&\alpha(p;\delta)\left[\delta (p^{\delta-1}-(1-p)^{\delta-1})+(1-p)^{\delta-1}-p^{\delta-1}\right]\\
&=&(\delta-1)\alpha(p;\delta) [p^{\delta-1}-(1-p)^{\delta-1}]>0\;\;\forall p\in (1/2, 1].
\end{array}
\end{equation}
Now Assumption \ref{h13}-(i) holds by virtue of (\ref{hderiv})
where we take $w=w_{PK}(\cdot;\delta)$.
%So $w'(p;\delta)-w'(1-p;\delta)>0$ for any $p\in (1/2, 1]$, which implies
%$h^{(2n+1)}(0):=\E[w'(N(\xi),\delta)\xi^{2n+1}]>0$ for any $\delta\in (0,1)$ and any integer $n$,
%and hence we have
\end{proof}

\begin{proposition}
If  $\inf_{t\in [0,T]} \delta(t)>0$, then $\tilde w$
satisfies Assumption \ref{h13}-(iv).
\end{proposition}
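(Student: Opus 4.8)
The goal is to verify Assumption \ref{h13}-(iv), namely that $\inf_{t\in[0,T]}h_x''(t,0)>0$ for $h(t,x)=\E[\tilde w_p(t,N(\xi))e^{x\xi}]$ built from $\tilde w(t,\cdot)=w_{TK}(\cdot;\delta(t))$. The plan is to first rewrite $h_x''(t,0)$ in a form amenable to a uniform lower bound, and then produce that bound by a compactness argument on a ``middle band'' of probabilities. By Lemma \ref{hregularity}-(ii) (with $x=0$, $n=2$), $h_x''(t,0)=\E[\tilde w_p(t,N(\xi))\xi^2]$, and since $\tilde w_p(t,\cdot)=w_{TK}'(\cdot;\delta(t))$ this reads
$$
h_x''(t,0)=\int_{-\infty}^{+\infty}w_{TK}'(N(y);\delta(t))\,y^2\,N'(y)\,dy .
$$
Write $\delta_0:=\inf_{t\in[0,T]}\delta(t)>0$ by hypothesis, so that $\delta(t)\in[\delta_0,1]$ for all $t$.

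The main point is that positivity of the integrand gives $h_x''(t,0)>0$ for each fixed $t$, but the genuine content is \emph{uniformity in $t$}, which is where the condition $\delta_0>0$ enters. First I would fix any $a\in(0,1/2)$ and restrict attention to the band $p\in[a,1-a]$. On the compact set $[a,1-a]\times[\delta_0,1]$ the map $(p,\delta)\mapsto w_{TK}'(p;\delta)$ is continuous: from the explicit formulas $w_{TK}'(p;\delta)=\delta p^{\delta-1}\alpha(p;\delta)+p^\delta\alpha'(p;\delta)$, $\alpha(p;\delta)=[p^\delta+(1-p)^\delta]^{-1/\delta}$ and $\alpha'(p;\delta)=-\alpha(p;\delta)^{1+\delta}[p^{\delta-1}-(1-p)^{\delta-1}]$ recorded in Proposition \ref{uandwiv}, there is no singularity because $p$ and $1-p$ are bounded below by $a>0$. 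Since $w_{TK}'(\cdot;\delta)>0$ on $(0,1)$ for every $\delta\in(0,1]$ (already established when checking Assumption \ref{basicassump}-(i)), this continuous function attains a strictly positive minimum $c>0$ on the compact set.

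I would then conclude by discarding the tails of the integral and using this uniform lower bound. Restricting the integration to $\{y:N(y)\in[a,1-a]\}=[N^{-1}(a),N^{-1}(1-a)]$ and using $w_{TK}'(N(y);\delta(t))\ge c$ there,
$$
h_x''(t,0)\ \ge\ c\int_{N^{-1}(a)}^{N^{-1}(1-a)}y^2\,N'(y)\,dy\ =:\ cC\ >\ 0,
$$
where $C=\int_{N^{-1}(a)}^{N^{-1}(1-a)}y^2N'(y)\,dy>0$ is a fixed constant (the interval is symmetric about $0$ since $a<1/2$). As neither $c$ nor $C$ depends on $t$, taking the infimum over $t\in[0,T]$ yields $\inf_{t}h_x''(t,0)\ge cC>0$, which is exactly Assumption \ref{h13}-(iv). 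The only delicate step is the uniform positive lower bound $c$; this is precisely what the compactness of $[a,1-a]\times[\delta_0,1]$ delivers, and it is here that $\delta_0>0$ is indispensable, since as $\delta\to0$ the derivative $w_{TK}'$ degenerates and no such uniform bound would survive.
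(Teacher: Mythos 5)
There is a genuine gap: your argument hinges on the claim that $w_{TK}'(p;\delta)>0$ for every $p\in(0,1)$ and every $\delta\in(0,1]$, ``already established when checking Assumption \ref{basicassump}-(i)''. That claim is false in the generality needed here, and the paper never proves it (it only asserts Assumption \ref{basicassump}-(i) without proof). The Tversky--Kahneman function $w_{TK}(\cdot;\delta)$ is in fact non-monotone when $\delta$ is small (below roughly $0.28$); for instance with $\delta=0.2$ one checks directly from $w'_{TK}(p;\delta)=\delta p^{\delta-1}\alpha(p;\delta)+p^{\delta}\alpha'(p;\delta)$ that $w'_{TK}(0.3;0.2)<0$. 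Since the proposition's only hypothesis is $\inf_{t}\delta(t)>0$, the value $\delta_0$ may well lie in this regime, and then your compact-set minimum $c$ over $[a,1-a]\times[\delta_0,1]$ can be negative, so the bound $h''_x(t,0)\ge cC$ no longer yields positivity; moreover, discarding the tails $\{N(y)\notin[a,1-a]\}$ silently uses nonnegativity of the integrand there, which is equally unavailable. (Even granting strict monotonicity, continuity on a compact set only gives $c\ge 0$, not $c>0$, unless you add an argument excluding zeros of $w'_{TK}$; so the step needs repair in any case.) Your proof does become correct if one strengthens the hypothesis to $\delta_0$ above the monotonicity threshold, but that is not what is being asserted.

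The paper's proof avoids the sign of $w'_{TK}$ at individual points altogether by symmetrizing: by (\ref{hderiv2}), $h''_x(t,0)=\int_0^{+\infty}\bigl[w'(N(y))+w'(1-N(y))\bigr]y^2N'(y)\,dy$ with $w=w_{TK}(\cdot;\delta(t))$, and a purely algebraic computation (parallel to (\ref{wkt})) gives, for $p\in(1/2,1]$,
\begin{equation*}
w'_{TK}(p;\delta)+w'_{TK}(1-p;\delta)=\delta\,\alpha(p;\delta)\bigl[p^{\delta-1}+(1-p)^{\delta-1}\bigr]+\alpha'(p;\delta)\bigl[p^{\delta}-(1-p)^{\delta}\bigr]>2^{-\frac{1}{\delta}}\delta,
\end{equation*}
because both summands are positive there and $\alpha(p;\delta)>2^{-1/\delta}$. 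This yields $h''_x(t,0)>2^{-1/\delta(t)}\delta(t)/2\ge 2^{-1/\delta_0}\delta_0/2>0$ uniformly in $t$ (the map $\delta\mapsto 2^{-1/\delta}\delta$ being increasing), which is exactly Assumption \ref{h13}-(iv) under the sole hypothesis $\inf_{t\in[0,T]}\delta(t)>0$. If you want to salvage your route, you must either prove pointwise positivity of $w'_{TK}$ under an explicit lower bound on $\delta_0$, or switch to the symmetrized quantity as the paper does.
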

\begin{proof} We have
\begin{eqnarray*}
w'_{TK}(p;\delta)+w'_{TK}(1-p;\delta)&=&\delta p^{\delta-1}\alpha(p;\delta)+p^\delta \alpha'(p;\delta)+
 \left[\delta (1-p)^{\delta-1}\alpha(p;\delta)-(1-p)^\delta \alpha'(p;\delta)\right]\\
&=&\delta\alpha(p;\delta)[p^{\delta-1}+(1-p)^{\delta-1}]
+\alpha'(p;\delta)[p^\delta-(1-p)^\delta]\\
&>&\delta\alpha(p;\delta)[p^{\delta-1}+(1-p)^{\delta-1}]\\
&> &2^{-\frac{1}{\delta}}\delta \;\;\forall p\in (1/2, 1].
\end{eqnarray*}
It hence follows from
(\ref{hderiv2}) that
$$h^\dpm_x(t, 0)>\int_0^{+\infty}2^{-\frac{1}{\delta(t)}}\delta(t)y^2dN(y)
=2^{-\frac{1}{\delta(t)}}\delta(t)/2\ge \inf_{t\in [0,T]}[2^{-\frac{1}{\delta(t)}}\delta(t)]/2>0, $$
owing to the condition that $\inf_{t\in [0,T]} \delta(t)>0$.
\end{proof}

For Assumption \ref{h13}-(iii), we need the following lemma.
\begin{lemma}\label{suff-(iii)}
For any $\epsilon\in (0,1)$, we have
$\E[N(\xi)^{-\epsilon} e^{x\xi}]<+\infty\;\; \forall x\in \R$.
\end{lemma}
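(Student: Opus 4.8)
The plan is to control the integrand by splitting the real line at the origin and treating the two tails separately; the only genuine difficulty lies in the region $z\to-\infty$, where $N(z)\to 0$ makes $N(z)^{-\epsilon}$ blow up, and this blow-up must be overcome by the Gaussian weight. Writing
$$\E[N(\xi)^{-\epsilon}e^{x\xi}]=\int_{-\infty}^{+\infty}N(z)^{-\epsilon}e^{xz}N'(z)\,dz,$$
I would first dispose of the upper half. For $z\geq 0$ one has $N(z)\geq N(0)=1/2$, hence $N(z)^{-\epsilon}\leq 2^{\epsilon}$, and the remaining factor $e^{xz}N'(z)=\frac{1}{\sqrt{2\pi}}e^{xz-z^2/2}$ is integrable on $[0,+\infty)$ because the Gaussian term dominates the linear exponential $e^{xz}$. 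Thus $\int_0^{+\infty}N(z)^{-\epsilon}e^{xz}N'(z)\,dz<+\infty$ for every $x\in\R$.

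The heart of the matter is the integral over $(-\infty,0)$. Here I would invoke the classical lower tail estimate for the normal distribution: for $w>0$ the Mills-ratio bound gives $1-N(w)\geq\frac{w}{w^2+1}N'(w)$, so for $z<0$, putting $w=|z|$,
$$N(z)=1-N(|z|)\geq\frac{|z|}{z^2+1}\,\frac{1}{\sqrt{2\pi}}\,e^{-z^2/2}.$$
Raising this to the power $-\epsilon$ yields
$$N(z)^{-\epsilon}\leq(2\pi)^{\epsilon/2}\left(\frac{z^2+1}{|z|}\right)^{\epsilon}e^{\epsilon z^2/2},\qquad z<0,$$
and substituting into the integrand gives, up to a multiplicative constant $C>0$,
$$N(z)^{-\epsilon}e^{xz}N'(z)\leq C\left(\frac{z^2+1}{|z|}\right)^{\epsilon}e^{xz}\,e^{-(1-\epsilon)z^2/2},\qquad z<0.$$

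The decisive point is that $\epsilon\in(0,1)$ forces $1-\epsilon>0$, so the right-hand side carries a genuine Gaussian decay $e^{-(1-\epsilon)z^2/2}$ as $z\to-\infty$. This decay dominates both the polynomial factor $\left(\frac{z^2+1}{|z|}\right)^{\epsilon}$ and the linear-exponential factor $e^{xz}$ for any fixed $x$, so the bound is integrable on $(-\infty,0)$; the mild singularity of $\left(\frac{z^2+1}{|z|}\right)^{\epsilon}$ at $z=0$ is harmless since the exponent satisfies $\epsilon<1$. Combining the two tails yields the claim. The only obstacle worth naming is choosing the correct lower bound for $N(z)$ near $-\infty$; everything else is routine Gaussian integrability, and the whole argument hinges crucially on the strict inequality $\epsilon<1$.
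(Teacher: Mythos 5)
Your proof is correct, but it runs on a different technical device than the paper's. The paper first reduces to the delicate case $x<0$ and splits at $\xi=-1$: on $\{\xi\ge -1\}$ the factor $e^{x\xi}$ is bounded and the remaining expectation is finite by the probability-integral transform, $\E[N(\xi)^{-q}]=\int_0^1 p^{-q}\,dp=\frac{1}{1-q}<+\infty$ for $q<1$; on $\{\xi<-1\}$ it absorbs $e^{x\xi}$ into a spare fractional power of $N$, noting that $N(y)^{(1-\epsilon)/2}e^{xy}$ stays bounded as $y\to-\infty$ (the Gaussian tail of $N$ beats any exponential), which leaves $K\,\E[N(\xi)^{-(1+\epsilon)/2}]<+\infty$, again by the uniform law of $N(\xi)$. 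You instead split at the origin, dispose of the right half trivially via $N(z)\ge 1/2$, and on the left half invoke the quantitative Mills-ratio lower bound $1-N(w)\ge\frac{w}{w^2+1}N'(w)$, $w>0$, which converts $N(z)^{-\epsilon}$ into an explicit $e^{\epsilon z^2/2}$ times a polynomial factor, so that the density leaves a genuine Gaussian decay $e^{-(1-\epsilon)z^2/2}$; the $|z|^{-\epsilon}$ singularity of your dominating function at the origin is harmless precisely because $\epsilon<1$ (the true integrand is of course regular there). Both arguments hinge on the same fact --- $N$ decays like a Gaussian on the left and only the power $\epsilon<1$ of it is inverted --- but yours is fully explicit, covers all $x\in\R$ with a single estimate and needs no reduction to the case $x<0$, whereas the paper's is softer, requiring no tail inequality beyond a limit/boundedness observation and leaning on the convenient identity for $\E[N(\xi)^{-q}]$.
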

\begin{proof}
The statement is symmetric for $x\le 0$ and $x\ge 0$; hence it suffices to prove for the case when $x<0$ (the case of $x=0$ is trivial).

Since $\E[N(\xi)^{-\epsilon} e^{x\xi}\id_{\xi\ge -1}]\le e^x\E[N(\xi)^{-\epsilon}]=\frac{e^{x}}{1-\epsilon}$,
we only need to focus on $\E[N(\xi)^{-\epsilon} e^{x\xi}\id_{\xi<- 1}]$.
By the fact that for any $\gamma>0$,
$\lim_{y\rightarrow -\infty}N(y) e^{\gamma y}
=\lim_{y\rightarrow -\infty}\frac{N'(y)}{-\gamma e^{-\gamma y}}=0$,
we deduce that  there exists a constant $K>0$ such that
$N(y)^{(1-\epsilon)/2} e^{xy}\le K$ $\forall y<-1$. Thus
$$\E[N(\xi)^{-\epsilon} e^{x\xi}\id_{\xi<- 1}]\le K\E[N(\xi)^{-(1+\epsilon)/2}]<+\infty.$$
The proof is complete.
\end{proof}

\begin{proposition}
If  $\inf_{t\in [0,T]} \delta(t)>0$, then $\tilde w$
satisfies Assumption  \ref{h13}-(iii).
%, i.e.,
%$$\sup_{t\in [0,T]} h(t,1)<+\infty, \mbox{ and } \sup_{t\in [0,T]} h^\dpm_x(t,1)<+\infty.$$
\end{proposition}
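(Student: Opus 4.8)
The plan is to dominate the time-varying derivative $w'_{TK}(\cdot;\delta(t))$ by a single, $t$-independent power-type bound, and then to invoke Lemma \ref{suff-(iii)} to control the two quantities $h(t,1)$ and $h''_x(t,1)$ appearing in Assumption \ref{h13}-(iii).

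First I would set $\delta_0:=\inf_{t\in[0,T]}\delta(t)>0$ and $\epsilon:=1-\delta_0$. Since $\delta(t)\le 1$ we have $\epsilon\in[0,1)$; the case $\epsilon=0$ forces $w$ to be the identity weighting and is trivial, so I assume $\epsilon\in(0,1)$. The inequality $w'_{TK}(p;\delta)<p^{\delta-1}+(1-p)^{\delta-1}$ established in the proof of Proposition \ref{uandwiv}, combined with the monotonicity of $a\mapsto p^{-a}=e^{a|\ln p|}$ for fixed $p\in(0,1)$ together with $1-\delta(t)\le 1-\delta_0=\epsilon$, yields the uniform estimate
\begin{equation*}
 w'_{TK}(p;\delta(t))< p^{-\epsilon}+(1-p)^{-\epsilon}\qquad\forall\,p\in(0,1),\ \forall\,t\in[0,T].
\end{equation*}

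Next, using the representations $h(t,1)=\E[w'_p(t,N(\xi))e^{\xi}]$ and $h''_x(t,1)=\E[w'_p(t,N(\xi))\xi^2 e^{\xi}]$ from Lemma \ref{hregularity}-(ii), and the symmetry $1-N(\xi)=N(-\xi)$, the above bound gives
\begin{equation*}
 h(t,1)\le \E[N(\xi)^{-\epsilon}e^{\xi}]+\E[N(-\xi)^{-\epsilon}e^{\xi}],
\end{equation*}
both terms being finite and independent of $t$ by Lemma \ref{suff-(iii)} (the second after the substitution $\xi\mapsto-\xi$, which reduces it to $\E[N(\xi)^{-\epsilon}e^{-\xi}]$). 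This delivers $\sup_{t\in[0,T]}h(t,1)<+\infty$. For the second derivative I would absorb the extra $\xi^2$ factor via the elementary bound $\xi^2\le C(\gamma)(e^{\gamma\xi}+e^{-\gamma\xi})$ for any fixed small $\gamma>0$, so that $\xi^2 e^{\xi}\le C(\gamma)(e^{(1+\gamma)\xi}+e^{(1-\gamma)\xi})$; applying Lemma \ref{suff-(iii)} at the shifted arguments $x=1\pm\gamma$ (and using symmetry for the $N(-\xi)^{-\epsilon}$ term as above) shows $\E[N(\xi)^{-\epsilon}\xi^2 e^{\xi}]$ and $\E[N(-\xi)^{-\epsilon}\xi^2 e^{\xi}]$ are finite, whence $\sup_{t\in[0,T]}h''_x(t,1)<+\infty$, which is stronger than the required $\limsup_{t\uparrow T}h''_x(t,1)<+\infty$.

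The only delicate point — and the step I expect to be the main obstacle — is making the power bound genuinely uniform in $t$: the exponent $1-\delta(t)$ tends to $0$ as $t\uparrow T$ but may approach $1$ for $t$ bounded away from $T$, so one must dominate by the worst case $\epsilon=1-\inf_t\delta(t)$ rather than a pointwise exponent. This is exactly where the hypothesis $\inf_{t\in[0,T]}\delta(t)>0$ is used, guaranteeing $\epsilon<1$ and hence applicability of Lemma \ref{suff-(iii)}; once this single uniform exponent is secured, the $t$-dependence disappears from all the bounds.
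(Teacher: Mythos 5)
Your proof is correct and follows essentially the same route as the paper: dominate $w'_{TK}(\cdot;\delta(t))$ uniformly by $p^{-\epsilon}+(1-p)^{-\epsilon}$ with $\epsilon$ fixed via $\inf_{t}\delta(t)>0$, then invoke Lemma \ref{suff-(iii)} together with symmetry of $\xi$. The only (immaterial) difference is the handling of the $\xi^{2}$ factor: the paper uses $\xi^{2}<e^{\xi}+e^{-\xi}$ to get $h''_x(t,1)\le h(t,2)+h(t,0)$ and bounds $h(t,1)$ by $h(t,2)$ via monotonicity, whereas you use $\xi^{2}\le C(\gamma)(e^{\gamma\xi}+e^{-\gamma\xi})$ and evaluate at $x=1\pm\gamma$, which is equally valid.
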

\begin{proof} Take $\epsilon:=1\wedge \inf_{t\in [0,T]} \delta(t)>0$.
For any $t\in [0,T]$, it follows from the bound of $w'_{TK}(p;\delta)$ (see the proof of Proposition \ref{uandwiv}) that
\begin{eqnarray*}
h(t,2)&=&\E[w'_{TK}(N(\xi);\delta(t))e^{2\xi}]\\
&\le&\E[N(\xi)^{\delta(t)-1}e^{2\xi}]+\E[N(-\xi)^{\delta(t)-1}e^{2\xi}]\\
&=&\E[N(\xi)^{\delta(t)-1}e^{2\xi}]+\E[N(\xi)^{\delta(t)-1}e^{-2\xi}]\\
&\le &\E[N(\xi)^{\epsilon-1}e^{2\xi}]+\E[N(\xi)^{\epsilon-1}e^{-2\xi}].
\end{eqnarray*}
According to Lemma \ref{suff-(iii)}, we know $h(t,2)\le K$ for some constant $K$ independent of $t$. However, $h(t, x)$ is increasing in $x$; so $\sup_{t\in [0,T]} h(t,1)\le \sup_{t\in [0,T]}h(t, 2)<K$.

Next, by the fact that $\xi^2<e^\xi+e^{-\xi}$ we have
 $h^\dpm_x(t, 1)\le h(t,2)+h(t,0)<K+1$.
\end{proof}

Finally, we check Assumption \ref{h13}-(ii).
We first need two lemmas.

\begin{lemma}\label{nyd}
There exists $\bar\delta\in(0,1)$ such that
$$\sup_{\delta\in[\bar\delta,1]}\int_0^{+\infty} \ln[N(y)^{\delta}+N(-y)^{\delta}]dy<+\infty.$$
\end{lemma}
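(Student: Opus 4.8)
The plan is to show that the (nonnegative) integrand is dominated, uniformly in $\delta$, by an explicit integrable function of $y$, after which the uniform bound on the integral is immediate. First I would record two elementary facts for $y\ge 0$ and $\delta\in(0,1]$. By the symmetry of the standard normal, $N(-y)=1-N(y)$, and since $x^\delta\ge x$ for $x\in[0,1]$ when $\delta\le 1$, we get $N(y)^\delta+N(-y)^\delta\ge N(y)+N(-y)=1$; hence the integrand is nonnegative and well defined. Setting $s:=N(y)^\delta+N(-y)^\delta-1\ge 0$ and combining $\ln(1+s)\le s$ with $N(y)^\delta\le 1$ gives the pointwise domination
\[
0\le \ln\!\big[N(y)^\delta+N(-y)^\delta\big]\le N(y)^\delta+N(-y)^\delta-1\le N(-y)^\delta,\qquad y\ge 0.
\]
This is the crux of the argument: the logarithm is linearized away, leaving only the pure tail term $N(-y)^\delta$.

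Next I would control this tail term uniformly in $\delta$. The Chernoff bound for the standard normal gives $N(-y)=\p(\xi\ge y)\le e^{-y^2/2}$ for $y\ge 0$, so for every $\delta\in[\bar\delta,1]$ we have $N(-y)^\delta\le e^{-\delta y^2/2}\le e^{-\bar\delta y^2/2}$. Integrating the pointwise domination then yields, for every fixed $\bar\delta\in(0,1)$ and every $\delta\in[\bar\delta,1]$,
\[
\int_0^{+\infty}\ln\!\big[N(y)^\delta+N(-y)^\delta\big]\,dy\le\int_0^{+\infty}e^{-\bar\delta y^2/2}\,dy=\sqrt{\tfrac{\pi}{2\bar\delta}}<+\infty,
\]
a bound independent of $\delta$, so the supremum over $\delta\in[\bar\delta,1]$ is finite and any $\bar\delta\in(0,1)$ works.

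The only step needing genuine care is the domination display: one must verify that the argument of the logarithm stays $\ge 1$ (so that $\ln(1+s)\le s$ applies with $s\ge 0$) and that the constant $1$ is absorbed correctly via $N(y)^\delta\le 1$. There is no hard estimate involved — the main obstacle, such as it is, is recognizing the right reduction rather than carrying out any delicate computation, and the uniformity in $\delta$ comes for free from the $\delta$-independent Chernoff bound (equivalently, from the monotonicity of $\delta\mapsto N(-y)^\delta$ since $N(-y)\le 1$).
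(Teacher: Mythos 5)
Your proof is correct, and it takes a genuinely different and more elementary route than the paper's. The paper fixes $\delta$, computes via a L'H\^opital-type calculation that $\ln\bigl[N(y)^{\delta}+N(-y)^{\delta}\bigr]=o(y^{-2})$ as $y\to+\infty$, deduces finiteness of the integral for that fixed $\delta$, and then obtains uniformity from the observation that the integrand is decreasing in $\delta$ (since $N(\pm y)\le 1$), so that finiteness at one $\bar\delta$ close to $1$ dominates the whole range $[\bar\delta,1]$. You instead linearize the logarithm: after checking $N(y)^{\delta}+N(-y)^{\delta}\ge 1$, the bound $\ln(1+s)\le s$ together with $N(y)^{\delta}\le 1$ reduces the integrand to $N(-y)^{\delta}$, and the Gaussian tail bound $N(-y)\le e^{-y^{2}/2}$ gives the $\delta$-uniform dominating function $e^{-\bar\delta y^{2}/2}$. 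What your approach buys: no asymptotic analysis or differentiation is needed, you get an explicit uniform bound $\sqrt{\pi/(2\bar\delta)}$, and the conclusion holds for \emph{every} $\bar\delta\in(0,1)$ rather than only for $\delta$ sufficiently close to $1$; it also exhibits the true Gaussian-order decay of the integrand, whereas the paper only records the (much weaker, but sufficient) $y^{-2}$ rate. What the paper's route buys is essentially nothing extra here beyond fitting its general style of estimates; your argument is shorter and sharper, and the only step that needs care --- verifying the argument of the logarithm is at least $1$ so that $\ln(1+s)\le s$ applies with $s\ge0$ --- is exactly the step you flagged and justified.
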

\begin{proof}
Fix $\delta\in(0,1].$ Observe that
\begin{eqnarray*}
\lim_{y\rightarrow +\infty}\frac{\ln[N(y)^{\delta}+N(-y)^{\delta}]}{y^{-2}}
&=&\lim_{y\rightarrow +\infty}\frac{\delta N'(y)}{y^{-3}}\frac{1}{-2}\frac{N(y)^{\delta-1}-N(-y)^{\delta-1}}{N(y)^{\delta}+N(-y)^{\delta}}\\
&\le&\lim_{y\rightarrow +\infty}\frac{\delta N'(y)}{y^{-3}}\frac{1}{2}\frac{N(-y)^{\delta-1}}{N(y)^{\delta}+N(-y)^{\delta}}\\
&\le&\lim_{y\rightarrow +\infty}\frac{N'(y)}{N(-y)^{1-\delta}y^{-3}}\frac{\delta }{2}\\
&=&\lim_{y\rightarrow +\infty}\frac{\delta }{2} \frac{N'(y)^{1-\delta/2}}{N(-y)^{1-\delta}}\frac{N'(y)^{\delta/2}}{y^{-3}}\\
&=&0.
\end{eqnarray*}
Hence, there exists a constant $K>0$ (which may depend on $\delta$), such that
$$\ln[N(y)^{\delta}+N(-y)^{\delta}]<K y^{-2} \quad \forall \, y>1.$$
%Notice that the left side of the inequality is decreasing in $\delta$, so the $K$ can be uniform for all $\delta\in [\epsilon, 1]$
%with some $\epsilon>0$.
As a result,
$$\int_0^{+\infty} \ln[N(y)^{\delta}+N(-y)^{\delta}]dy\le \int_0^1\ln(N(y)^{\delta}+N(-y)^{\delta})dy
+K\int_1^Ty^{-2}dy<+\infty.$$
Note that the integrand on the left hand side is decreasing in $\delta$; hence
the above finiteness is uniform when $\delta$ is sufficiently close to 1.
\end{proof}

Recall we have defined $h(x;\delta)=\E[w'_{TK}(N(\xi);\delta)e^{x\xi}]$, $x\in\R$, $\delta\in(0,1]$.
\begin{lemma}\label{hxdelta}
We have
$$ h'(0;\delta)=\frac{1-\delta}{\delta}\int_0^{+\infty} \ln[N(y)^{\delta}+N(-y)^{\delta}]dy+o\left(\frac{1-\delta}{\delta}\right)$$
when $\delta$ is sufficiently close to 1.
\end{lemma}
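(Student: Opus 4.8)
The plan is to start from the odd-derivative representation (\ref{hderiv}) with $n=1$, which gives
\[
h'(0;\delta)=\int_0^{+\infty}\left[w'_{TK}(N(y);\delta)-w'_{TK}(1-N(y);\delta)\right]yN'(y)\,dy,
\]
and then to substitute the explicit expression (\ref{wkt}). Since $y>0$ forces $N(y)>1/2$, formula (\ref{wkt}) applies with $p=N(y)$ and $1-p=N(-y)$, so that, writing $G(y):=N(y)^\delta+N(-y)^\delta=\alpha(N(y);\delta)^{-\delta}$ (cf.\ (\ref{alphapd})),
\[
h'(0;\delta)=(\delta-1)\int_0^{+\infty}G(y)^{-1/\delta}\left[N(y)^{\delta-1}-N(-y)^{\delta-1}\right]yN'(y)\,dy.
\]
The observation that drives the whole argument is the identity $\frac{d}{dy}\ln G(y)=\frac{\delta N'(y)[N(y)^{\delta-1}-N(-y)^{\delta-1}]}{G(y)}$, which follows from $N'(-y)=N'(y)$ upon differentiating $G$. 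Inserting it rewrites the integrand as a total-derivative form and yields
\[
h'(0;\delta)=\frac{\delta-1}{\delta}\int_0^{+\infty}G(y)^{1-1/\delta}\,y\,\frac{d}{dy}\ln G(y)\,dy.
\]

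Next I would split $G^{1-1/\delta}=1+(G^{1-1/\delta}-1)$ to peel off the leading term. For the part carrying the factor $1$, integration by parts gives $\int_0^{+\infty}y\,\frac{d}{dy}\ln G\,dy=[y\ln G]_0^{+\infty}-\int_0^{+\infty}\ln G\,dy$. The boundary term vanishes at $0$ trivially and at $+\infty$ because Lemma \ref{nyd} supplies the tail estimate $\ln G(y)=O(y^{-2})$, so $y\ln G(y)\to0$. Hence this part contributes exactly $\frac{\delta-1}{\delta}\bigl(-\int_0^{+\infty}\ln G\,dy\bigr)=\frac{1-\delta}{\delta}\int_0^{+\infty}\ln G\,dy$, which is the announced main term.

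It then remains to show that the remainder $\frac{\delta-1}{\delta}\int_0^{+\infty}(G^{1-1/\delta}-1)\,y\,\frac{d}{dy}\ln G\,dy$ is $o\bigl(\frac{1-\delta}{\delta}\bigr)$, i.e.\ that its integral tends to $0$ as $\delta\uparrow1$. For this I would use two facts. First, the subadditivity $a^\delta+b^\delta\ge(a+b)^\delta$ for $\delta\le1$ together with $N(y)+N(-y)=1$ yields $1\le G(y)\le 2^{1-\delta}$, whence $\ln G\ge0$ and, since $1-1/\delta<0$, the uniform bound $\sup_{y}|G(y)^{1-1/\delta}-1|\le 1-2^{-(1-\delta)^2/\delta}\to0$. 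Second, for $y>0$ we have $N(y)>N(-y)$ and $\delta-1<0$, so $\frac{d}{dy}\ln G\le0$; consequently $\int_0^{+\infty}\bigl|y\,\frac{d}{dy}\ln G\bigr|\,dy=-\int_0^{+\infty}y\,\frac{d}{dy}\ln G\,dy=\int_0^{+\infty}\ln G\,dy$, which is finite and bounded uniformly for $\delta$ near $1$ again by Lemma \ref{nyd}. Bounding the remainder integral by the product of these two quantities shows it is $o(1)$, finishing the proof.

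The main obstacle I anticipate is the rigorous justification of the integration by parts and the uniform control of the remainder in the $y\to+\infty$ tail; both rest on the tail estimate of Lemma \ref{nyd}, while the sign of $\frac{d}{dy}\ln G$ is what lets me replace the absolute-value integral by $\int_0^{+\infty}\ln G\,dy$ without loss.
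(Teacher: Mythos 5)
Your proof is correct, and its skeleton coincides with the paper's: both start from (\ref{hderiv}) with $n=1$ and (\ref{wkt}), recognize that the integrand is, up to the factor $\frac{\delta-1}{\delta}$, the total derivative $y\,d\bigl[G(y)^{1-1/\delta}\bigr]$ with $G(y)=N(y)^{\delta}+N(-y)^{\delta}$, integrate by parts using $y\ln G(y)\to 0$, and invoke Lemma \ref{nyd}. Where you genuinely deviate is in extracting the asymptotics: the paper first integrates by parts completely, obtaining $h'(0;\delta)=\int_0^{+\infty}\{1-G(y)^{1-1/\delta}\}\,dy$, and then applies the pointwise Taylor expansion $x^{\epsilon}=1+\epsilon\ln x+o(|\epsilon|)(\ln x)^2$ with $\epsilon=1-1/\delta$, which requires the additional integrability $\int_0^{+\infty}(\ln G)^2\,dy<+\infty$ (deduced from $0<\ln G<\ln 2$ and Lemma \ref{nyd}); you instead split $G^{1-1/\delta}=1+(G^{1-1/\delta}-1)$ \emph{before} integrating by parts, and control the remainder by the elementary bounds $1\le G\le 2^{1-\delta}$ (hence $\sup_y|G^{1-1/\delta}-1|\le 1-2^{-(1-\delta)^2/\delta}=O((1-\delta)^2)$) together with the monotonicity of $G$ on $(0,+\infty)$, which lets you identify $\int_0^{+\infty}\bigl|y\,\tfrac{d}{dy}\ln G\bigr|\,dy$ with $\int_0^{+\infty}\ln G\,dy$, uniformly bounded near $\delta=1$ by Lemma \ref{nyd}. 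This dispenses with the Taylor step altogether and in fact yields the quantitatively sharper remainder $O\bigl((1-\delta)^3\bigr)$ rather than a bare $o\bigl(\frac{1-\delta}{\delta}\bigr)$. One small citation point: the tail estimate $\ln G(y)=o(y^{-2})$ that you use for the vanishing boundary term is established inside the \emph{proof} of Lemma \ref{nyd} rather than in its statement (which only asserts uniform integrability), so you should either quote that estimate explicitly or rederive it by the same one-line l'H\^{o}pital computation; the paper needs the identical fact for its own boundary term, so this is a presentational remark, not a gap.
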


\begin{proof}
By (\ref{hderiv}) and (\ref{wkt}), we have
\begin{eqnarray*}
h'(0;\delta)&=&\int_0^{+\infty}(\delta-1)\alpha(N(y);\delta) [N(y)^{\delta-1}-N(-y)^{\delta-1}]ydN(y)\\
&=&\frac{\delta-1}{\delta} \int_0^{+\infty}\alpha(N(y);\delta) yd[N(y)^{\delta}+N(-y)^{\delta}]\\
&=&\frac{\delta-1}{\delta}\frac{\delta}{\delta-1}\int_0^{+\infty}y
d[N(y)^{\delta}+N(-y)^{\delta}]^{1-1/\delta}\\
&=&\int_0^{+\infty}\{1-[N(y)^{\delta}+N(-y)^{\delta}]^{1-1/\delta}\}dy,
\end{eqnarray*}
 where we have used the fact that $\lim_{y\rightarrow +\infty}\{1-[N(y)^{\delta}+N(-y)^{\delta}]^{1-1/\delta}\}y=0$.

 Applying the general Taylor expansion $x^\epsilon=1+\epsilon \ln x+o(|\epsilon|) (\ln x)^2$ for $x>0$ and sufficiently small $|\epsilon|$,  we deduce
 \[
 \int_0^{+\infty}\{1-[N(y)^{\delta}+N(-y)^{\delta}]^{1-1/\delta}\}dy=\frac{1-\delta}{\delta}\int_0^{+\infty} \ln[N(y)^{\delta}+N(-y)^{\delta}]dy+o\left(\frac{1-\delta}{\delta}\right),
 \]
where we have used the finiteness $\int_0^{+\infty}\{ \ln[N(y)^{\delta}+N(-y)^{\delta}]\}^2dy<+\infty$, which follows from
the inequalities $0<\ln[N(y)^{\delta}+N(-y)^{\delta}]<\ln 2$ and Lemma \ref{nyd}.
\end{proof}

\begin{proposition}
If $$\limsup_{t\uparrow T}\frac{1-\delta(t)}{\delta(t)\sqrt{T-t}}=0 \mbox{ and } \liminf_{t\uparrow T}|\theta(t)|^2>0, $$
then $\tilde w$ satisfies Assumption \ref{h13}-(ii).
\end{proposition}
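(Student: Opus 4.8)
The plan is to verify the two requirements in Assumption \ref{h13}-(ii) separately. The second, $\liminf_{t\uparrow T}|\theta(t)|^2>0$, is precisely part of the hypothesis, so nothing is needed there. Everything thus reduces to the first requirement, namely
$$\limsup_{t\uparrow T}\frac{h_x'(t,0)}{\sqrt{|\theta(t)|^2(T-t)}}<1,$$
where $h_x'(t,0)=h'(0;\delta(t))$ in the notation of Lemmas \ref{nyd} and \ref{hxdelta}. I would first record the preliminary observation that the hypothesis forces $\delta(t)\to 1$ as $t\uparrow T$: since $\delta(t)\in(0,1]$ is bounded and $\sqrt{T-t}\to 0$, the convergence $\frac{1-\delta(t)}{\delta(t)\sqrt{T-t}}\to 0$ can hold only if $1-\delta(t)\to 0$. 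Hence for $t$ sufficiently close to $T$ I may freely use the asymptotics valid as $\delta\to 1$.

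Next I would invoke Lemma \ref{hxdelta}: for $t$ near $T$,
$$h_x'(t,0)=\frac{1-\delta(t)}{\delta(t)}\int_0^{+\infty}\ln\left[N(y)^{\delta(t)}+N(-y)^{\delta(t)}\right]dy+o\!\left(\frac{1-\delta(t)}{\delta(t)}\right).$$
Dividing by $\sqrt{|\theta(t)|^2(T-t)}$ and factoring out the quantity $\frac{1-\delta(t)}{\delta(t)\sqrt{T-t}}$, I obtain
$$\frac{h_x'(t,0)}{\sqrt{|\theta(t)|^2(T-t)}}=\frac{1-\delta(t)}{\delta(t)\sqrt{T-t}}\cdot\frac{1}{\sqrt{|\theta(t)|^2}}\left[\int_0^{+\infty}\ln\left[N(y)^{\delta(t)}+N(-y)^{\delta(t)}\right]dy+o(1)\right].$$

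Finally I would let $t\uparrow T$ and estimate the three factors on the right. The first factor $\frac{1-\delta(t)}{\delta(t)\sqrt{T-t}}$ tends to $0$ by hypothesis. The factor $\frac{1}{\sqrt{|\theta(t)|^2}}$ stays bounded because $\liminf_{t\uparrow T}|\theta(t)|^2>0$. The bracketed factor stays bounded because $\int_0^{+\infty}\ln[N(y)^{\delta(t)}+N(-y)^{\delta(t)}]dy$ is uniformly bounded for $\delta(t)$ close to $1$ by Lemma \ref{nyd}, while $o(1)\to 0$. Consequently the product tends to $0$, so in fact
$$\limsup_{t\uparrow T}\frac{h_x'(t,0)}{\sqrt{|\theta(t)|^2(T-t)}}=0<1,$$
which is stronger than required. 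The only delicate point is to ensure that the $o(\cdot)$ error in Lemma \ref{hxdelta} is genuinely negligible relative to $\frac{1-\delta(t)}{\delta(t)}$ along $t\uparrow T$; this is exactly what the lemma delivers once one knows $\delta(t)\to 1$, so no additional work is needed. I expect this bookkeeping---combining the uniform integrability bound of Lemma \ref{nyd} with the expansion of Lemma \ref{hxdelta}---to be the only step requiring care, and it is routine.
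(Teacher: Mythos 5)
Your proof is correct and follows exactly the route the paper intends: the paper's own proof is just the remark that the claim "follows immediately from Lemmas \ref{nyd} and \ref{hxdelta}," and your argument is precisely that derivation spelled out (noting $\delta(t)\to1$, applying the expansion of Lemma \ref{hxdelta}, and bounding the bracket via Lemma \ref{nyd} and $\liminf_{t\uparrow T}|\theta(t)|^2>0$), yielding the stronger conclusion that the $\limsup$ is $0$.
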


\begin{proof} This follows immediately from Lemmas \ref{nyd} and \ref{hxdelta}.
\end{proof}

The second condition in the above is satisfied if we assume that $\theta(\cdot)$ is left continuous at $T$ and $\theta(T)\neq0$, which means that we do not have a trivial market at $T$. The first condition means that as $t$ approaches the terminal time $T$, $\delta(t)$ should approach $\delta(T)\equiv 1$ faster than $\sqrt{T-t}$.
%Notice that here we only aim at a sufficient condition for the K-T weighting function. In fact, we believe the constant $K$
%should also be $O(1-\delta)$.

To summarize, the family of time-varying TK weighting functions satisfy all the assumptions of the paper if the measurable function $\delta(\cdot)$ satisfies
\[ \inf_{t\in [0,T]} \delta(t)>0,\;\;\limsup_{t\uparrow T}\frac{1-\delta(t)}{\delta(t)\sqrt{T-t}}=0.
\]

\iffalse

\section{Prelec's weighting function}\label{prelec}
Prelec's weighting function with parameter $\delta\in (0,1)$ is in the form
$$w(p;\delta)=e^{-(-\ln p)^\delta}.$$
Denote $z=-\ln p$, or equivalently $p=e^{-z}$, by which $p\rightarrow 0 \mbox{(or 1)} $ when $z\rightarrow +\infty \mbox{ (or  $+\infty$)}$.
Then its derivative is
$$w'_p(p;\delta)=w(p;\delta)\delta(-\ln p)^{\delta-1}/p=\delta \frac{e^{z-z^\delta}}{z^{1-\delta}}.$$

We only need to show that $\lim_{p\downarrow 0}\frac{w'_p}{p^{-m}}<+\infty$ and $\lim_{p\uparrow 0}\frac{w'_p}{(1-p)^{-m}}<+\infty$.

Unfortunately, for any $m\in (0,1)$,
\begin{eqnarray*}
\lim_{p\downarrow 0}\frac{w'_p}{p^{-m}}&=&\lim_{z\rightarrow +\infty}\delta \frac{e^{(1-m)z-z^\delta}}{z^{1-\delta}}\\
&=&\delta  \lim_{z\rightarrow +\infty} e^{\frac{1-m}{2}z-z^\delta}\frac{e^{\frac{1-m}{2}z}}{z^{1-\delta}} \\
&=&+\infty,
\end{eqnarray*}

So Prelec's weighting function does not satisfy Assumption \ref{uandw}-(iii).

\fi

\end{document}